\documentclass[11pt]{article}
\usepackage[margin=1in,footskip=0.25in]{geometry}






\usepackage[utf8]{inputenc} 
\usepackage[T1]{fontenc}    
\usepackage[hidelinks]{hyperref}       
\usepackage{url}            
\usepackage{booktabs}       
\usepackage{amsfonts}       
\usepackage{nicefrac}       
\usepackage{microtype}      
\usepackage{amssymb,amsthm,amsmath}
\usepackage{color}
\usepackage[usenames,dvipsnames,svgnames,table]{xcolor}
\usepackage{graphicx}
\usepackage{float}
\usepackage{caption}
\usepackage{subcaption}

\usepackage{bbm}

\usepackage{enumitem}
\usepackage{algorithm}
\usepackage{algpseudocode}

\newif\ifdraft
\draftfalse

\newcommand{\eqdef}{\mathbin{\stackrel{\rm def}{=}}}
\makeatletter
\def\hlinewd#1{%
	\noalign{\ifnum0=`}\fi\hrule \@height #1 \futurelet
	\reserved@a\@xhline}
\makeatother

\newtheorem{theorem}{Theorem}

\newtheorem{corollary}[theorem]{Corollary}
\newtheorem{lemma}[theorem]{Lemma}
\newtheorem{fact}[theorem]{Fact}
\newtheorem{claim}[theorem]{Claim}

\newtheorem{definition}{Definition}
\newtheorem{problem}[definition]{Problem}

\newtheorem*{rep@theorem}{\rep@title}
\newcommand{\newreptheorem}[2]{%
	\newenvironment{rep#1}[1]{%
		\def\rep@title{#2 \ref{##1}}%
		\begin{rep@theorem}}%
		{\end{rep@theorem}}}
\makeatother
\newreptheorem{theorem}{Theorem}
\newreptheorem{claim}{Claim}

\newcommand{\R}{\mathbb{R}}
\newcommand{\C}{\mathbb{C}}

\newcommand{\bs}[1]{\boldsymbol{#1}}
\newcommand{\bv}[1]{\mathbf{#1}}

\newcommand{\norm}[1]{\|#1\|}
\newcommand{\opnorm}[1]{\|#1\|_\mathrm{op}}
\DeclareMathOperator{\supp}{\mathrm{supp}}
\DeclareMathOperator{\poly}{poly}

\DeclareMathOperator*{\argmin}{arg\,min}
\newcommand{\Kmu}{\mathcal{K}_{p,q}}

\newcommand{\Fmu}{\mathcal{F}_{p,q}}

\newcommand{\Fp}{\mathcal{F}_{p,q}}

\newcommand{\smu}{s_{p,q,\lambda}}
\newcommand{\tmu}{\tau_{p,q,\lambda}}
\newcommand{\ttmu}{\bar {\tau}_{p,q,\lambda}}

\newcommand{\E}{\mathbb{E}}

\DeclareMathOperator{\tr}{tr}

  \usepackage{nth}
  \usepackage{intcalc}

\title{Fourier Sparse Leverage Scores and Approximate Kernel Learning}

\author{
	Tam\'{a}s Erd\'{e}lyi \\ Texas A\&M University\\ \texttt{terdelyi@math.tamu.edu}
	\and 
	Cameron Musco\\ UMass Amherst\\ \texttt{cmusco@cs.umass.edu}
	\and
	Christopher Musco\\ New York University\\ \texttt{cmusco@nyu.edu}
}

\setcounter{secnumdepth}{2}

\begin{document}
\maketitle

\begin{abstract}
We prove new explicit upper bounds on the leverage scores of Fourier sparse functions under both the Gaussian and Laplace measures. In particular, we study $s$-sparse functions of the form $f(x) = \sum_{j=1}^s a_j e^{i \lambda_j x}$ for coefficients $a_j \in \C$ and frequencies $\lambda_j \in \R$.
Bounding Fourier sparse leverage scores under various measures is of pure mathematical interest in approximation theory, and our work extends existing results for the uniform measure \cite{Erdelyi:2017, ChenPrice:2019a}. Practically, our bounds are motivated by two important applications in machine learning:

\smallskip

\noindent \textbf{1. Kernel Approximation.} They yield a new random Fourier features algorithm for approximating Gaussian  and Cauchy (rational quadratic) kernel matrices. For low-dimensional data, our method uses a near optimal number of features, and its runtime is polynomial in 
the \emph{statistical dimension} of the approximated kernel matrix. It is the first ``oblivious sketching method'' with this property for any kernel besides the polynomial kernel, resolving an open question of \cite{AvronKapralovMusco:2017, AhleKapralovKnudsen:2020}. 
	
	\smallskip

\noindent \textbf{2. Active Learning.} They can be used as non-uniform sampling distributions for robust active learning when data follows a Gaussian or Laplace distribution. Using the framework of \cite{AvronKapralovMusco:2019}, we provide essentially optimal results for bandlimited and multiband interpolation, and Gaussian process regression. These results generalize existing work that only applies to uniformly distributed data. 
\end{abstract}

\section{Introduction}

Statistical leverage scores have emerged as an important tool in machine learning and algorithms, with applications including randomized numerical linear algebra \cite{DrineasMahoneyMuthukrishnan:2006,Sarlos:2006}, efficient kernel methods \cite{AlaouiMahoney:2015,MuscoMusco:2017,AvronKapralovMusco:2017,li2018towards,shahrampour2019sampling,liu2019random,fanuel2019nystr,kammonen2020adaptive}, graph algorithms \cite{SpielmanSrivastava:2011,KyngSachdeva:2016}, active learning \cite{DerezinskiWarmuthHsu:2018,ChenVarmaSingh:2016,MaMahoneyYu:2015,AvronKapralovMusco:2019}, and faster constrained and unconstrained optimization \cite{LeeSidford:2015,AgarwalKakadeKidambi:2020}.

The purpose of these scores is to quantify how large the magnitude of a function in a particular class can be at a \emph{single location}, in comparison to the \emph{average} magnitude of the function. 
In other words, they measure how ``spiky'' a function can be.
The function class might consist of all vectors $\bv{y}\in\R^n$ which can be written as $\bv{Ax}$ for a fixed $\bv{A}\in \R^{n\times d}$, all degree $q$ polynomials, all functions with bounded norm in some kernel Hilbert space, or (as in this paper) all functions that are $s$-sparse in the Fourier basis. By quantifying \emph{where} and \emph{how much} such functions can spike to large magnitude, leverage scores help us approximate and reconstruct functions via sampling, leading to provably accurate algorithms for a variety of problems.

Formally, for any class $\mathcal{F}$ of functions mapping some domain $\mathcal{S}$ to the complex numbers $\C$, and any probability density $p$ over $\mathcal{S}$, the leverage score $\tau_{\mathcal{F},p}(x)$ for $x\in \mathcal{S}$ is:
\begin{align}
\label{eq:gen_lev_score}
\tau_{\mathcal{F},p}(x) = \sup_{f\in \mathcal{F}: \norm{f}_p^2 \neq 0} \frac{|f(x)|^2 \cdot p(x)}{\norm{f}_p^2}\text{ where } \norm{f}_p^2 = \int_{y\in \mathcal S} |f(y)|^2 \cdot p(y) \ dy.
\end{align}
Readers who have seen leverage scores in the context of machine learning and randomized algorithms \cite{SpielmanSrivastava:2011,MaMahoneyYu:2015,DrineasMahoney:2016} may be most familiar with the setting where $\mathcal{F}$ is the set of all length $n$ vectors (functions from $\{1,\ldots, n\}\rightarrow \R$) which can be written as $\bv{Ax}$ for a fixed matrix $\bv{A}\in \R^{n\times d}$. In this case, $p$ is taken to be a discrete uniform density over indices $1, \ldots, n$, and it is not hard to check that \eqref{eq:gen_lev_score} is equivalent to more familiar definitions of ``matrix leverage scores''.\footnote{In particular, \eqref{eq:gen_lev_score} is equivalent to the definition $\tau_{\mathcal{F},p}(i) = \bv{a}_i^T(\bv{A}^T\bv{A})^{-1}\bv{a}_i$ where $\bv{a}_i$ is the $i^\text{th}$ row of $\bv{A}$, and to $\tau_{\mathcal{F},p}(i) = \|u_i\|_2^2$, where $u_i$ is the $i^\text{th}$ row of any orthogonal span for $\bv{A}$'s columns. See \cite{AvronKapralovMusco:2017} for details.}

When $\mathcal{F}$ is the set of all degree $q$ polynomials, the inverse of the leverage scores is known as the \emph{Christoffel function}. In approximation theory, Christoffel functions are widely studied for different densities $p$ (e.g., Gaussian on $\R$ or uniform on $[-1,1]$) due to their connection to orthogonal polynomials \cite{Nevai:1986}. Recently, they have found applications in active polynomial regression
\cite{RauhutWard:2012,HamptonDoostan:2015,ChkifaCohenMigliorati:2015,CohenMigliorati:2017} and more broadly in machine learning \cite{PauwelsBachVert:2018,LasserrePauwels:2019}.
 
We study leverage scores for the class of \emph{Fourier sparse functions}. In particular, we define:\footnote{It can be observed that any degree $s$ polynomial can be approximated to arbitrarily high accuracy by a function in $\mathcal{T}_s$, by driving the frequencies $\lambda_1,\ldots, \lambda_s$ to zero and taking a Taylor expansion. So the leverage scores of $\mathcal{T}_s$ actually upper bound those of the degree $s$ polynomials \cite{ChenKanePrice:2016}.}
\begin{align}\label{eq:ts}
\mathcal{T}_s = \left \{f: f(x) = \sum_{j=1}^s a_j e^{i \lambda_j x}, a_j \in \C, \lambda_j \in \R \right \},
\end{align}
where each $\lambda_j$ is the frequency of a complex exponential with coefficient $a_j$.
For ease of notation we will denote the leverage scores of $\mathcal{T}_s$ for a distribution $p$ as $\tau_{s,p}(x)$ instead of the full $\tau_{\mathcal{T}_s,p}(x)$.

In approximation theory, the Fourier sparse leverage scores have been studied extensively, typically when $p$ is the uniform density on a finite interval \cite{Turan:1984,Nazarov:1993,BorweinTamas:1996,Kos:2008,lubinsky2015dirichlet,Erdelyi:2017}.
Recently, these scores have also become of interest in algorithms research due to their value in designing sparse recovery and sparse FFT algorithms in the ``off-grid'' regime \cite{ChenKanePrice:2016,ChenPrice:2019,ChenPrice:2019a}. They have also found applications in active learning for bandlimited interpolation, Gaussian process regression, and covariance estimation \cite{AvronKapralovMusco:2019,meyer2020statistical,EldarLiMusco:2020}.

\subsection{Closed form leverage score bounds}
When studying the leverage scores of a function class over a domain $\mathcal{S}$, one of the primary objectives is to determine the scores for all $x\in \mathcal{S}$. This can be challenging for two reasons: 
\begin{itemize}
	\item For finite domains (e.g., functions on $\mathcal{S} = \{1,\ldots, n\}$) it may be possible to directly solve the optimization problem in \eqref{eq:gen_lev_score}, but doing so is often computationally expensive. 
	\item For infinite domains (e.g., functions on $\mathcal{S} = [-1,1]$),  $\tau_{\mathcal{F},p}(x)$ is itself a function over $\mathcal{S}$, and typically does not have a simple closed form that is amenable to applications. 
\end{itemize}
Both of these challenges are addressed by shifting the goal from \emph{exactly determining} $\tau_{\mathcal{F},p}(x)$ to \emph{upper bounding} the leverage score function. In particular, the objective is to find some function $\bar{\tau}_{\mathcal{F},p}$ such that $\bar{\tau}_{\mathcal{F},p}(x) \geq {\tau}_{\mathcal{F},p}(x)$ for all $x\in \mathcal{S}$ and $\int_{x\in\mathcal{S}} \bar{\tau}_{\mathcal{F},p}(x)dy$ is as small as possible.

For linear functions over finite domains, nearly tight upper bounds on the leverage scores
 can be computed more quickly than the true scores \cite{MahoneyDrineasMagdon-Ismail:2012,CohenLeeMusco:2015}. Over infinite domains, it is possible to prove for some function classes that $\bar{\tau}_{\mathcal{F},p}(x)$ is always less than some fixed value $C$, sometimes called a Nikolskii constant or coherence parameter \cite{HamptonDoostan:2015,Migliorati:2015,AdcockCardenas:2020}. In other cases, simple closed form expressions can be proven too upper bound the leverage scores. For example, when $\mathcal{F}$ is the class of degree $q$ polynomials and $p$ is uniform on $[-1,1]$, the (scaled) Chebyshev density $\bar{\tau}_{\mathcal{F},p}(x) = \frac{2(q+1)}{\pi\sqrt{1-x^2}}$ upper bounds the leverage scores \cite{Lorch:1983,AvronKapralovMusco:2019}. 

\subsection{Our results}
\label{sec:our_results}
The main mathematical results of this work are new upper bounds on the leverage scores $\tau_{s,p}(\cdot)$ of the class of $s$-sparse Fourier functions $\mathcal{T}_s$, when $p$ is a Gaussian or Laplace distribution. These bounds extend known results for the uniform distribution, and are proven by leveraging several results from approximation theory on concentration properties of exponential sums \cite{Turan:1984,BorweinErdelyi:1995,BorweinErdelyi:2006,Erdelyi:2017}.
We highlight the applicability of our bounds by developing two applications in machine learning:

\paragraph{Kernel Approximation (Section \ref{sec:oblivious}).} 
We show that our leverage score upper bounds can be used as importance sampling probabilities to give a modified random Fourier features algorithm \cite{RahimiRecht:2007} with essentially tight spectral approximation bounds for Gaussian and Cauchy (rational quadratic) kernel matrices. In fact, we give a black-box reduction, proving that an upper bound on the Fourier sparse leverage scores for a distribution $p$ immediately yields an algorithm for approximating kernel matrices with kernel function equal to the Fourier transform of $p$. This reduction leverages tools from randomized numerical linear algebra, in particular column subset selection results \cite{drineas2006subspace,guruswami2012optimal}. We use these results to show that Fourier sparse functions can universally well approximate kernel space functions, and in turn that the leverage scores of these kernel  functions can be bounded using our Fourier sparse leverage score bounds. 

Our results make progress on a central open question on the power of oblivious sketching methods in kernel approximation: in particular, whether oblivious methods like random Fourier features and TensorSketch \cite{PhamPagh:2013,chen2017relative,phillips2020gaussiansketch} can match the performance of non-oblivious methods like Nystr\"{o}m approximation \cite{GittensMahoney:2013, AlaouiMahoney:2015,MuscoMusco:2017}. This question was essentially closed for the polynomial kernel in \cite{AhleKapralovKnudsen:2020}. We give a positive answer for Gaussian and Cauchy kernels in one dimension. 

\paragraph{Active Learning (Section \ref{sec:active}).} It is well known that leverage scores can be used in \emph{active sampling methods} to reduce the statistical complexity of linear function fitting problems like polynomial regression or Gaussian process (GP) regression \cite{ChenPrice:2019a,CohenMigliorati:2017}. The scores must be chosen with respect to the underlying data distribution $\mathcal{D}$ to obtain an accurate function fit under that distribution \cite{PauwelsBachVert:2018}. Theorems \ref{thm:gaussian} and \ref{thm:exp} immediately yield new active sampling results for regression problems involving $s$ arbitrary complex exponentials when the data follows a Gaussian or Laplacian distribution. 

While this result may sound specialized, it's actually quite powerful due to recent work of \cite{AvronKapralovMusco:2019}, which gives a black-box reduction from active sampling for Fourier-sparse regression to active sampling for a wide variety of problems in signal processing and Bayesian learning, including bandlimited function fitting and GP regression. Plugging our results into this framework gives algorithms with essentially optimal statistical complexity: the number of samples required depends on a natural statistical dimension parameter of the problem that is tight in many cases.

We note that any future Fourier sparse leverage score bounds proven for different distributions (beyond Gaussian, Laplace, and uniform) would generalize our applications to new kernel matrices and data distributions. Finally, while our contributions are primarily theoretical, we present experiments on kernel sketching in Section \ref{sec:experiments}. We study a 2-D Gaussian process regression problem, representative of typical data-intensive function interpolation tasks, showing that our oblivious sketching method substantially improves on the original random Fourier features method on which it is based \cite{RahimiRecht:2007}.

\subsection{Notation}

Boldface capital letters denote matrices or quasi-matrices (linear maps from finite-dimensional vector spaces to infinite-dimensional function spaces). Script letters denote infinite-dimensional operators. Boldface lowercase letters denote vectors or vector-valued functions. Subscripts identify the entries of these objects. E.g., $\bv{M}_{j,k}$ is the $(j,k)$ entry of matrix $\bv{M}$ and $\bv{z}_j$ is the $j^{\text{th}}$ entry of vector $\bv{z}$.
$\bv{I}$ denotes the identity matrix. $\preceq$ denotes the Loewner ordering on positive semidefinite (PSD) matrices: $N \preceq M$ means that $M-N$ is PSD. $\bv{A}^*$ denotes the conjugate transpose of a vector or matrix.

\section{Fourier Sparse Leverage Score Bounds}
\label{sec:main_tech}
We now state our main leverage score bounds for the Gaussian and Laplace distributions. These theorems are of mathematical interest and form the cornerstone of our applications in kernel learning. We defer their proofs to Section \ref{app:bounds}.
\begin{theorem}[Gaussian Density Leverage Score Bound]\label{thm:gaussian}
	Consider the Gaussian density $g(x) = \frac{1}{\sigma\sqrt{2\pi}} e^{-x^2/(2\sigma^2)}$ and let:
	\begin{align*}
	\bar \tau_{s,g}(x) = \begin{cases} \frac{1}{\sqrt{2}\sigma} \cdot e^{-x^2/(4\sigma^2)} \text{ for } |x| \ge 6\sqrt{2}\sigma   \cdot \sqrt{s}\\
	 \frac{1}{\sqrt{2}\sigma}  \cdot e\cdot s \text{ for } |x| \le 6\sqrt{2}\sigma \cdot  \sqrt{s}.
	\end{cases} 
	\end{align*}
	We have $\tau_{s,g}(x) \le \bar \tau_{s,g}(x)$ for all $x \in \R$ and $\int_{-\infty}^{\infty}\bar \tau_{s,g}(x)\ dx = O(s^{3/2})$.
\end{theorem}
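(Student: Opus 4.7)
The plan is to split the pointwise bound into the bulk region $|x| \le R \eqdef 6\sqrt{2}\sigma\sqrt{s}$ and the tail region $|x| \ge R$; the integral bound $\int \bar\tau_{s,g}(x)\,dx = O(s^{3/2})$ is then immediate, since the bulk contributes $\frac{es}{\sqrt{2}\sigma}\cdot 2R = 12e\,s^{3/2}$ while the Gaussian-decaying tail contributes $O(1)$. A rescaling $x \mapsto x/\sigma$ will reduce the analysis to $\sigma = 1$, since $\mathcal{T}_s$ is invariant under reparametrizations of the argument and one checks that $\tau_{s,g}$ simply scales by $1/\sigma$.

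For the bulk bound, $\tau_{s,g}(x) \le es/(\sqrt{2}\sigma)$ is equivalent to the pointwise estimate $|f(x)|^2 g(x) \le \frac{es}{\sqrt{2}\sigma}\norm{f}_g^2$ for every $f \in \mathcal{T}_s$. I plan to invoke a Nikol'skii-type inequality for $\mathcal{T}_s$ of the form $\sup_{y \in [a,b]}|f(y)|^2 \le \frac{Cs}{b-a}\int_a^b |f(y)|^2\,dy$, of the kind established in \cite{BorweinErdelyi:1995,Erdelyi:2017}. For a given $x$ I would pick an interval $[a,b]$ containing $x$ on which $g$ is bounded below by a constant multiple of $g(x)$, then use $\int_a^b|f|^2\,dy \le O(1)\cdot \norm{f}_g^2 / g(x)$ to convert the pointwise inequality into a bound on $\tau_{s,g}(x)$. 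To pin down the sharp constant $es/(\sqrt{2}\sigma)$ uniformly in the bulk, the cleanest route is to deploy a directly Gaussian-weighted Nikol'skii inequality of the form $\sup_{x\in\R}|f(x)|^2 g(x) \le (Cs/\sigma)\norm{f}_g^2$; by Plancherel this estimate is the Fourier dual of a Nikol'skii inequality for linear combinations of $s$ shifted Gaussians, a family of results central to \cite{Erdelyi:2017}.

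For the tail bound, the claim $\tau_{s,g}(x) \le (1/\sqrt{2}\sigma) e^{-x^2/(4\sigma^2)}$ is equivalent to $|f(x)|^2 \le \sqrt{\pi}\,e^{x^2/(4\sigma^2)}\norm{f}_g^2$. I plan to combine a Remez-type inequality for $\mathcal{T}_s$---which bounds $|f(x)|$ for $x \notin [-T,T]$ by $e^{Cs(|x|/T - 1)}\sup_{[-T,T]}|f|$, in the spirit of \cite{BorweinErdelyi:1995,BorweinErdelyi:2006}---with the bulk $L^\infty$ estimate from the previous step. Taking $T = \Theta(\sigma\sqrt{s})$ will balance the two costs: the Remez exponent $Cs(|x|/T - 1)$, together with the Gaussian penalty $T^2/(2\sigma^2)$ picked up when bounding $\sup_{[-T,T]}|f|^2$ by $\norm{f}_g^2/g(T)$, yields an overall exponential of order $x^2/(4\sigma^2)$ whenever $|x| \ge R$, which is exactly what is needed.

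The main obstacle is the bulk estimate near the edge of the bulk, where $|x| \sim \sigma\sqrt{s}$. A naive local Nikol'skii on $[x-\delta, x+\delta]$ requires $\delta \lesssim \sigma^2/|x| \sim \sigma/\sqrt{s}$ for $g(x\pm\delta)/g(x) \approx e^{-|x|\delta/\sigma^2}$ to remain a constant, and the resulting bound $\tau_{s,g}(x) \lesssim s/\delta \sim s^{3/2}/\sigma$ is a factor of $\sqrt{s}$ worse than the target. Avoiding this loss is precisely what the Gaussian-weighted Nikol'skii inequality buys: it handles the whole bulk in one shot without ever having to approximate $g$ locally by a constant. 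Identifying and extracting the appropriate weighted approximation-theoretic estimate from the Erdélyi machinery is therefore where the substance of the proof sits.
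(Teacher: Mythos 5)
Your decomposition into bulk and tail, the rescaling to $\sigma=1$, and the integration step all match the paper, and your tail argument (a Tur\'{a}n/Remez-type growth bound off an interval of length $\Theta(\sigma\sqrt{s})$, balanced against the Gaussian penalty) is essentially the paper's Claim \ref{clm:large}, which applies Lemma \ref{lem:growth} on $[0,x/2]$. The problem is the bulk bound. You correctly diagnose that naive localization loses a $\sqrt{s}$ factor near $|x|\sim\sigma\sqrt{s}$, but your fix is to ``deploy a directly Gaussian-weighted Nikol'skii inequality'' $\sup_x |f(x)|^2 g(x)\le (Cs/\sigma)\norm{f}_g^2$ --- which is precisely the statement to be proven. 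You offer only a Plancherel-duality heuristic for why such an inequality should follow from results on sums of shifted Gaussians, and this does not hold up: Plancherel relates $L^2$ norms, not weighted sup norms, so the ``dual'' statement is not easier and is not what the cited literature supplies. By your own admission the substance of the proof sits exactly here, and it is missing.

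The idea that closes the gap (Claim \ref{clm:small}) is a complexification trick, and it is why the paper works with the extended class $\mathcal{E}_s$ of exponential sums with \emph{complex} frequencies rather than with a weighted Nikol'skii inequality. Fix the evaluation point $a$ and expand $f(x+a)e^{-(x+a)^2/2} = e^{-x^2/2}\cdot h_a(x)$, where $h_a(x)=\sum_j (a_j e^{i\lambda_j a}e^{-a^2/2})e^{(i\lambda_j-a)x}$: the quadratic Gaussian exponent, recentered at $a$, splits into a residual $e^{-x^2/2}$ plus a term \emph{linear} in $x$, and the linear term is absorbed into the frequencies, which become $i\lambda_j - a\in\C$. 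Since the foundational pointwise bound $|h(0)|^2\le s\norm{h}_{[-1,1]}^2$ (equation (3) of \cite{BorweinErdelyi:2006}) holds for arbitrary complex frequencies, it applies to $h_a$ on the \emph{fixed} unit interval $[-1,1]$, where the residual factor $e^{-x^2/2}$ costs only a factor $e$. This yields $\tau_{s,g}(x)\le e\cdot s$ uniformly, with no degradation near $|x|\sim\sigma\sqrt{s}$. Without this step (or a genuine substitute for the weighted inequality you postulate), the proposal does not prove the theorem.
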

We do not know if the upper bound of Theorem \ref{thm:gaussian} is tight, but we know it is close. In particular, if $\mathcal{T}_s$ is restricted to any fixed set of frequencies $\lambda_1 > \ldots > \lambda_s$ it is easy to show that the leverage scores integrate to exactly $s$, and the leverage scores of $\mathcal{T}_s$ can only be larger. So no upper bound can improve on $\int_{-\infty}^{\infty} \bar \tau_{s,g}(x)\ dx = O(s^{3/2})$ by more than a $O(\sqrt{s})$ factor. Closing this  $O(\sqrt{s})$ gap, either by strengthening Theorem \ref{thm:gaussian}, or proving a better lower bound would be very interesting.

\begin{theorem}[Laplace Density Leverage Score Bound]\label{thm:exp}
	Consider the Laplace density $z(x) = \frac{1}{\sqrt{2}\sigma} e^{-|x|\sqrt{2}/\sigma}$ and let:
	\begin{align*}
	\bar \tau_{s,z}(x) = \begin{cases} \frac{\sqrt{2}}{\sigma} \cdot e^{-|x|\sqrt{2} /(6\sigma)} \text{ for } |x| \ge 9\sqrt{2} \sigma \cdot {s}\\
	\frac{\sqrt{2}}{ \sigma} \cdot \frac{e^2 \cdot s}{1+ |x|\sqrt{2}/\sigma} \text{ for } |x| \le 9\sqrt{2} \sigma \cdot {s}.
	\end{cases} 
	\end{align*}
	We have $\tau_{s,z}(x) \le \bar \tau_{s,z}(x)$ for all $x \in \R$ and $\int_{-\infty}^{\infty}\bar \tau_{s,z}(x)\ dx = O(s \ln s)$.
\end{theorem}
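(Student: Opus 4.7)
The plan is to mirror the approach of Theorem \ref{thm:gaussian}, adapting the interval widths and the constants to the exponential (rather than Gaussian) tails of the Laplace density. First, I would rewrite
$$\tau_{s,z}(x) = z(x) \cdot \sup_{f \in \mathcal{T}_s,\, f \not\equiv 0}\frac{|f(x)|^2}{\|f\|_z^2},$$
reducing the problem to uniformly upper bounding the ratio $|f(x)|^2/\|f\|_z^2$. The two approximation-theoretic inputs I would invoke from \cite{BorweinErdelyi:1995, Erdelyi:2017, BorweinErdelyi:2006} are: (i) a local Nikol'skii-type inequality $|f(u)|^2 \le (c_1 s/\delta) \int_{u-\delta}^{u+\delta}|f(y)|^2\,dy$ for any $u \in \R$, $\delta > 0$, and $f \in \mathcal{T}_s$, which trades a point value for a local $L^2$-average; and (ii) a Turan--Nazarov-style extrapolation of the form $\max_{[a,b]}|f| \le (c_2 (b-a)/|E|)^{O(s)}\max_E|f|$ for measurable $E \subset [a,b]$, used to transport $L^\infty$-control between intervals of differing widths.

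For the interior regime $|x| \le 9\sqrt 2\,\sigma s$, I would apply tool (i) with $\delta = c\sigma(1 + |x|/\sigma)$ for a small absolute constant $c$, chosen so that $z$ varies by at most a bounded multiplicative factor across $[x-\delta, x+\delta]$. Since $\min_{[x-\delta, x+\delta]} z = \Theta(z(x))$, this yields
$$\frac{|f(x)|^2}{\|f\|_z^2} \le \frac{c_1 s}{\delta \cdot \min_{[x-\delta,x+\delta]} z} \le O\!\left(\frac{s}{\sigma(1 + |x|/\sigma)\, z(x)}\right),$$
and multiplying through by $z(x)$ recovers $\bar\tau_{s,z}(x) = \Theta(s/(\sigma(1 + |x|/\sigma)))$ up to absolute constants; the $e^2$ prefactor in the statement just records a tight numerical constant.

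For the tail regime $|x| \ge 9\sqrt 2\,\sigma s$, the local argument alone produces only a constant-in-$|x|$ bound of order $s/\sigma$, which misses the required exponential decay. Here I would combine (ii) with (i) applied near the origin: fixing $R = \Theta(\sigma)$, first use (ii) to write $|f(x)|^2 \le (c_2|x|/R)^{O(s)} \max_{[-R,R]}|f|^2$, then use (i) inside $[-R,R]$ to bound $\max_{[-R,R]}|f|^2 \le (c_1 s/R) \int_{-R}^R |f|^2\,dy \le (c_1 s/R) \cdot \|f\|_z^2 / z(R)$. Combining and using $z(x)/z(R) = e^{-(|x|-R)\sqrt 2/\sigma}$ gives $\tau_{s,z}(x) \le (c_2|x|/\sigma)^{O(s)} \cdot (c_3 s/\sigma) \cdot e^{-|x|\sqrt 2/\sigma}$. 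In the regime $|x| \ge 9\sqrt 2\,\sigma s$, the $O(s\log(|x|/\sigma))$ prefactor in the log is dominated by the slack $5|x|\sqrt 2/(6\sigma) \ge 15 s$ between the density decay rate $\sqrt 2/\sigma$ and the target rate $\sqrt 2/(6\sigma)$, yielding the claimed tail bound $(\sqrt 2/\sigma)\, e^{-|x|\sqrt 2/(6\sigma)}$.

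Finally, I would split $\int \bar\tau_{s,z}(x)\,dx$ into an interior contribution $2\int_0^{9\sqrt 2\,\sigma s} \frac{\sqrt 2}{\sigma}\cdot \frac{e^2 s}{1 + x\sqrt 2/\sigma}\,dx = 2e^2 s\ln(1 + 18s) = O(s\ln s)$ (via the substitution $u = x\sqrt 2/\sigma$), plus a tail contribution $2\int_{9\sqrt 2\,\sigma s}^{\infty}\frac{\sqrt 2}{\sigma}e^{-x\sqrt 2/(6\sigma)}\,dx = 12 e^{-3s} = O(1)$, with the interior dominating. The hard step is the tail regime: the threshold $9\sqrt 2\,s$ and the decay exponent $1/6$ must be calibrated precisely so that the polynomial-in-$|x|$ extrapolation penalty is strictly dominated by the excess exponential decay of $z$, and this in turn requires tight tracking of the absolute constants in the cited Nikol'skii and Turan--Nazarov inequalities.
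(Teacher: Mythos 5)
Your overall architecture (a bound of the form $s/(1+|x|)$ in the interior, an exponentially decaying bound in the tail, then a direct integration) matches the paper, and your integral computation at the end is correct. However, both technical regimes have genuine gaps. The interior argument fails at its first step: with $\delta = c\sigma(1+|x|/\sigma)$ the density $z$ varies over $[x-\delta,x+\delta]$ by a factor $e^{\Theta(\delta/\sigma)} = e^{\Theta(1+|x|/\sigma)}$, which is unbounded, so the premise $\min_{[x-\delta,x+\delta]} z = \Theta(z(x))$ is false. If you instead shrink $\delta$ to $O(\sigma)$ so that $z$ really is comparable across the window, the local Nikol'skii bound only yields $\tau_{s,z}(x) = O(s)$ uniformly, which integrates to $O(s^2)$ over $|x| \le 9\sqrt{2}\sigma s$ rather than $O(s\ln s)$. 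The paper's Claim \ref{clm:expSmall} gets the $1/(1+|x|)$ decay by a different mechanism that your proposal is missing: the one-sided weighted function $w(y) = f(y)e^{-y/2}$ is \emph{itself} an exponential sum in $\mathcal{E}_s$ (the weight is absorbed into the complex frequencies $i\lambda_j - 1/2$), so one can apply the uniform-measure bound of Lemma \ref{lem:unif} to $w$ on the \emph{long} interval $[-1,2x+1]$, where the point $x$ sits at distance $1+x$ from both endpoints; the only loss from the mismatch between $e^{-y/2}$ and $e^{-|y|/2}$ occurs on $[-1,0]$ and costs a factor $e^2$. No "density is locally constant" step is needed or possible.

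The tail argument also does not close as stated. Extrapolating from a \emph{fixed} window $[-R,R]$ with $R = \Theta(\sigma)$ incurs a Tur\'{a}n penalty $(c_2|x|/\sigma)^{O(s)} = e^{O(s\log(|x|/\sigma))}$; at the threshold $|x| = 9\sqrt{2}\sigma s$ this is $e^{\Theta(s\log s)}$, while the available slack between the density's decay rate and the target rate is only $\tfrac{5\sqrt{2}}{6\sigma}|x| = 15s$. Since $s\log s \gg s$, no choice of absolute constants calibrates this for large $s$ — the fixed-window version genuinely fails near the threshold. The paper's Claim \ref{clm:expLarge} avoids this by extrapolating from the window $[0,x/2]$, whose length scales with $x$: Lemma \ref{lem:growth} with $a=0$, $d=x/2$ gives a penalty $(4e)^{2s} \le e^{6s}$ \emph{independent of $x$}, and the cost of lower-bounding the density on $[0,x/2]$ by $e^{-x/2}$ still leaves $e^{-x/2+6s} \le e^{-x/6}$ for $x \ge 18s$. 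To repair your proof you would need to adopt both the weight-absorption idea in the interior and the growing extrapolation window in the tail.
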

Again, we do not know if Theorem \ref{thm:exp} is tight, but $\int_{-\infty}^{\infty}\bar \tau_{s,z}(x)\ dx = O(s\ln s)$ cannot be improved below $s$. The best known upper bound for the uniform density also integrates to $O(s\ln s)$ \cite{Erdelyi:2017} and closing the $O(\ln s)$ gap for either distribution is an interesting open question.

\begin{figure}[h]
	\begin{subfigure}{.43\textwidth}
		\centering
		\includegraphics[width=\linewidth]{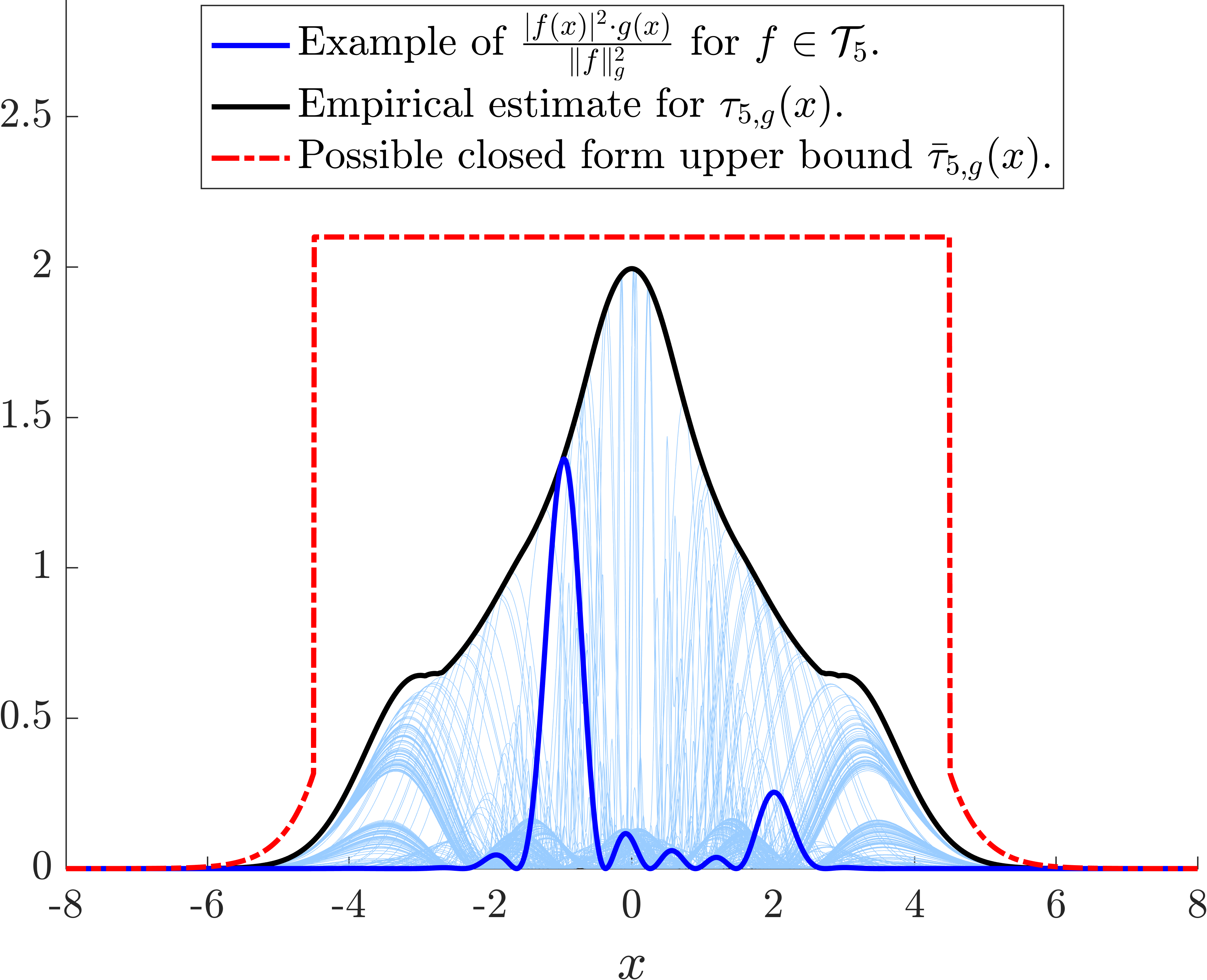}  
		\caption{Leverage scores for Gaussian density.}
		\label{fig:sub-first}
	\end{subfigure}\hfill
	\begin{subfigure}{.43\textwidth}
		\centering
		\includegraphics[width=\linewidth]{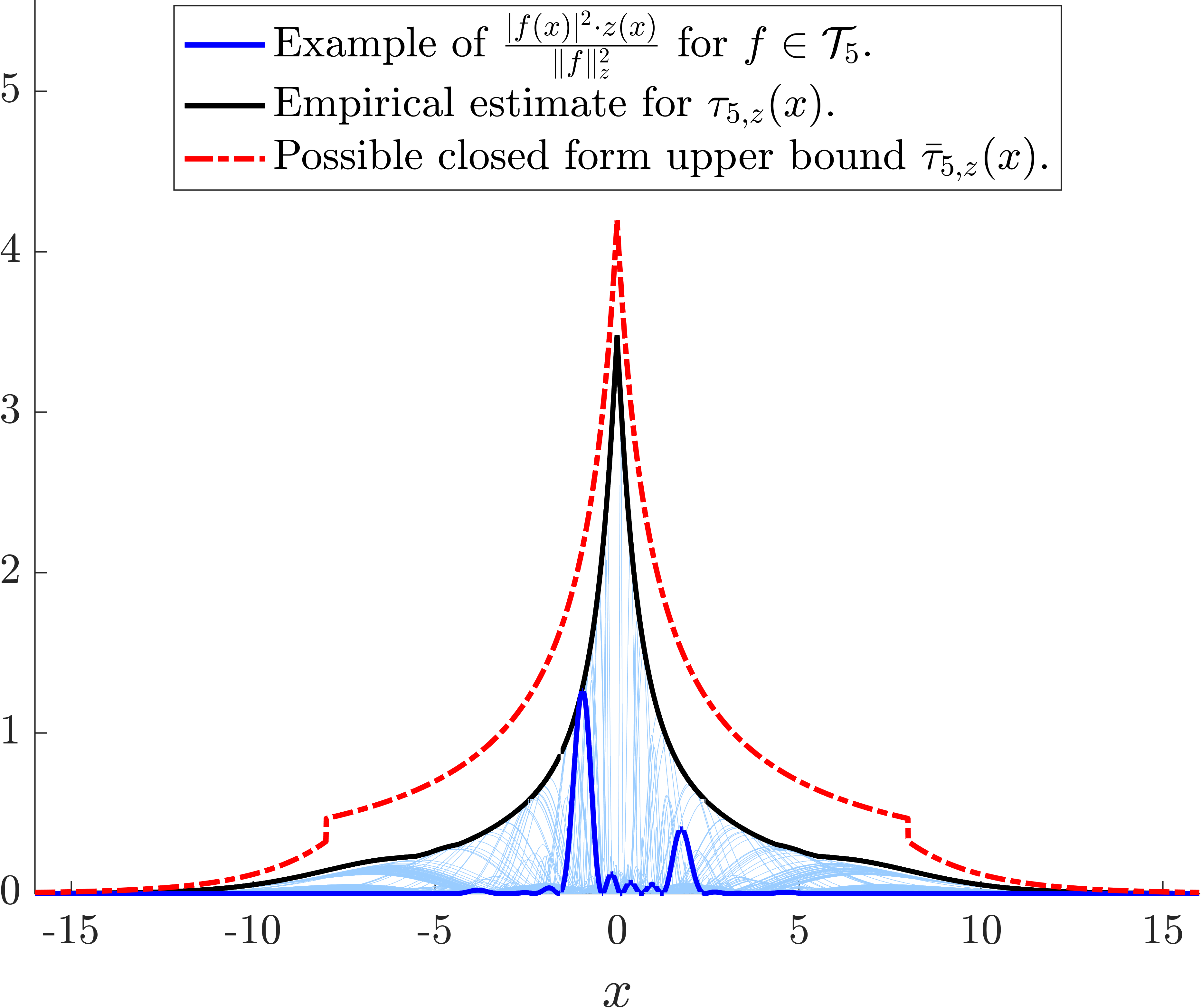} 
		\caption{Leverage scores for Laplace density.}
		\label{fig:sub-second}
	\end{subfigure}
	\caption[]{Empirically computed (see Appendix \ref{app:plot_info} for details) estimates for the Fourier sparse leverage scores, for sparsity $s = 5$. The solid blue lines are normalized magnitudes of $5$-sparse Fourier functions that ``spike'' well above their average. I.e., they plot $|f(x)|^2\cdot p(x)/\|f\|_p^2$ for various $f \in \mathcal{T}_5$. The leverage score function $\tau_{5,p}(x)$ is the supremum of all such functions. The dashed red lines are closed-form upper bounds for the leverage scores: establishing such bounds is our main research objective. For illustration, the ones plotted here are tighter than what we can currently prove, but they have the same functional form as Theorems \ref{thm:gaussian} and \ref{thm:exp} (just with different constants).}
	\label{fig:upper_bound}
\end{figure}
Theorems \ref{thm:gaussian} and \ref{thm:exp} are proven in Section \ref{app:bounds} and the upper bounds visualized in Figure \ref{fig:upper_bound}. They build on existing results for when $p$ is the uniform distribution over an interval \cite{BorweinErdelyi:2006,Erdelyi:2017}. This case has been studied since the work of Tur\'{a}n, who proved the first bounds for $\mathcal{T}_s$ and related function classes that are \emph{independent} of the frequencies $\lambda_1, \ldots, \lambda_s$, and only depend on the sparsity $s$ \cite{Turan:1984,Nazarov:1993}. 
Our bounds take advantage of the exponential form of the Gaussian and Laplace densities $e^{-x^2}$ and $e^{-|x|^2}$. We show how for $f \in \mathcal T_s$ to write the weighted function $f(x) \cdot p(x)$ (whose norm under the uniform density equals $f$'s under $p$) in terms of a Fourier sparse function in an extension of $\mathcal{T}_s$ that allows for complex valued frequencies. Combining leverage score type bounds on this extended class \cite{BorweinErdelyi:2006,Erdelyi:2017} with growth bounds based on Tur\'{a}n's lemma \cite{Turan:1984,BorweinErdelyi:1995} yields our results.

When the minimum gap between frequencies in $f \in \mathcal{T}_s$ is lower bounded, we also give a tight bound (integrating to $O(s)$) based on Ingham's inequality \cite{ingham1936some}, applicable e.g., in our oblivious embedding results when data points are separated by a minimum distance.



\section{Kernel Approximation}\label{sec:oblivious}

Given data points\footnote{Results are stated for 1D data, where applications of kernel methods include time series analysis and audio processing. As shown in Section \ref{sec:experiments}, our algorithms easily extend to higher dimensions in practice. In theory, however, extended bounds would likely incur an exponential dependence on dimension, as in \cite{AvronKapralovMusco:2017}.} $x_1, \ldots, x_n \in \R$ and positive definite kernel function $k: \R \times \R \rightarrow \R$, let $\bv{K} \in \R^{n\times n}$ be the kernel matrix:
$\bv{K}_{i,j} = k(x_i,x_j)$ for all $i,j$.
$\bv{K}$ is the central object in kernel learning methods like kernel regression, PCA, and SVM. Computationally, these methods typically need to invert or find eigenvectors of $\bv{K}$, operations that require $O(n^3)$ time. When $n$ is large, this cost is intractable, even for data in low-dimensions. In fact, even the $O(n^2)$ space required to store $\bv{K}$ can quickly lead to a computational bottleneck. 
To address this issue, kernel approximation techniques like random Fourier features methods  \cite{RahimiRecht:2007}, Nystr\"{o}m approximation \cite{williams2001using,GittensMahoney:2013}, and TensorSketch \cite{PhamPagh:2013} seek to approximate $\bv{K}$ by a low-rank matrix.

These methods compute an explicit embedding $\bv{g}: \R \rightarrow \C^m$ with $m \ll n$ which can be applied to each data point $x_i$. If $\bv{G}\in \C^{m\times n}$ contains $\bv{g}(x_i)$ as its $i^\text{th}$ column, the goal is for $\bv{\tilde K} = \bv{G}^*\bv{G}$, which has rank $m$, to closely approximate $\bv{K}$.
I.e., for the inner product $\bv{\tilde K}_{i,j}  = \bv g(x_i)^* \bv g(x_j)$ to approximate $\bv{K}_{i,j}$. If the approximation is good, $\bv{\tilde K}$ can be used in place of $\bv{K}$ in downstream applications. It can be stored in $O(nm)$ space, admits $O(nm)$ time matrix-vector multiplication, and can be inverted exactly in $O(nm^2)$ time, all linear in $n$ when $m$ is small.

\paragraph{Oblivious Embeddings}
Like sketching methods for matrices (see e.g., \cite{Woodruff:2014}) kernel approximation algorithms fall into two broad classes.
\begin{enumerate}
	\item Data \emph{oblivious} methods choose a random embedding $\bv{g}: \R \rightarrow \C^m$ without looking at the data $x_1, \ldots, x_n$. $\bv g(x_i)$ can then be applied independently, in parallel, to each data point. Oblivious methods include random Fourier features and TensorSketch methods.
	\item Data \emph{adaptive} methods tailor the embedding $\bv{g}: \R \rightarrow \C^m$ to the data $x_1, \ldots, x_n$. For example, Nystr\"{o}m approximation constructs $\bv g$ by projecting (in kernel space) each $x_i$ onto $m$ landmark points selected from the data.
\end{enumerate}
Data oblivious methods offer several advantages over adaptive methods: they are easy to parallelize, naturally apply to streaming or dynamic data, and are typically simpler to implement. However, data adaptive methods currently give more accurate kernel approximations than data oblivious methods \cite{MuscoMusco:2017}. A major open question in the area \cite{AvronKapralovMusco:2017,AhleKapralovKnudsen:2020} is if this gap is necessary.

Our main contribution in this section is to establish that a significant gap between data oblivious and non-oblivious sketching \emph{does not exist} for the commonly used Gaussian and Cauchy kernels: for one-dimensional data we present a data oblivious method with runtime linear in $n$ that nearly matches the best adaptive methods in speed and approximation quality. 
\subsection{Formal results}

Prior work on randomized algorithms for approximating $\bv{K}$ considers several metrics of accuracy. 
We study the following popular approximation guarantee  \cite{AlaouiMahoney:2015,MuscoMusco:2017,AhleKapralovKnudsen:2020}:
\begin{definition}\label{def:spectral_gaurantee} For parameters $\epsilon,\lambda \ge 0$, we say
	$\tilde{\bv{K}}$ is an $(\epsilon,\lambda)$-spectral approximation for $\bv{K}$ if:
	\begin{align}
	\label{eq:spectral_gaurantee}
	(1-\epsilon)(\bv{K} + \lambda\bv{I}) \preceq \tilde{\bv{K}} + \lambda\bv{I} \preceq (1+\epsilon)(\bv{K} + \lambda\bv{I}). 
	\end{align}
\end{definition}
Definition \ref{def:spectral_gaurantee} can be used to prove guarantees for downstream applications: e.g., that $\tilde{\bv{K}}$ is a good preconditioner for kernel ridge regression with regularization $\lambda$, or that using $\tilde{\bv{K}}$ in place of $\bv{K}$ leads to statistical risk bounds. See \cite{AvronKapralovMusco:2017} for details. 
With \eqref{eq:spectral_gaurantee} as the approximation goal, the data adaptive Nystr{\"o}m method combined with leverage score sampling \cite{AlaouiMahoney:2015} yields the best known kernel approximations among algorithms with runtime linear in $n$. Specifically, for \emph{any} positive semidefinite kernel function the \emph{RLS} algorithm of \cite{MuscoMusco:2017} produces an embedding satisfying \eqref{eq:spectral_gaurantee} with $\epsilon = 0$ and with $m = O( s_\lambda \log s_\lambda)$ in $\tilde O(n s_\lambda^2)$ time where $s_\lambda$ is the statistical dimension of $\bv{K}$:
%
%
\begin{definition}[$\lambda$-Statistical Dimension]\label{def:statistical_dimension}
	The $\lambda$-statistical dimension $s_{\lambda}$ of a positive semidefinite matrix $\bv{K}$ with eigenvalues $\lambda_1 \geq \ldots \geq \lambda_n \geq 0$ is defined as $s_{\lambda} \eqdef \sum_{i=1}^n \frac{\lambda_i}{\lambda_i + \lambda}$.
\end{definition}
The statistical dimension is a natural complexity measure for approximation $\bv{K}$ and the embedding dimension of $O(s_\lambda \log s_\lambda)$ from \cite{MuscoMusco:2017} is near optimal.\footnote{It can be show that embedding dimension $m = \sum_{i=1}^n \mathbbm{1}[\lambda_i \geq \lambda]$ is necessary to achieve \eqref{eq:spectral_gaurantee}. Then observe that $s_{\lambda} \leq \sum_{i=1}^n \mathbbm{1}[\lambda_i \geq \lambda] + \frac{1}{\lambda} \sum_{\lambda_i < \lambda}\lambda_i$. For most kernel matrices encountered in practice, the leading term dominates, so $s_{\lambda}$ is roughly on the order of the optimal $m$.}
Our main result gives a similar guarantee for two popular kernel functions: the Gaussian kernel $k(x_i, x_j) = e^{-(x_i-x_j)^2/(2\sigma^2)}$ with width $\sigma$ and the Cauchy kernel  $k(x_i, x_j) = \frac{1}{1+(x_i-x_j)^2/\sigma^2}$ with width $\sigma$. The Cauchy kernel is also called the ``rational quadratic kernel'', e.g., in \texttt{sklearn} \cite{sklearn}.


\begin{theorem}\label{thm:high_kernel}
	Consider any set of data points $x_1,\ldots, x_n \in \R$ with associated kernel matrix $\bv{K} \in \R^{n \times n}$ which is either Gaussian or Cauchy with arbitrary width parameter $\sigma$. There exists a randomized \emph{oblivious} kernel embedding $\bv{g}: \R \rightarrow \C^m$ such that, if $\bv{G} = [\bv{g}(x_1)\ldots, \bv{g}(x_n)]$,with high probability $\bv{\tilde K} = \bv{G}^*\bv{G}$ satisfies \eqref{eq:spectral_gaurantee} with embedding dimension $m = O(\frac{s_\lambda  }{\epsilon^2})$. $\bv{G}$ can be constructed in $\tilde{O}(n\cdot s_\lambda^{3.5}/\epsilon^4)$ time for Gaussian kernels and $\tilde{O}(n\cdot s_\lambda^3/\epsilon^4)$ time for Cauchy kernels.
\end{theorem}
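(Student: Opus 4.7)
The plan is to combine Bochner's theorem with leverage-score-weighted random Fourier features, using Theorems \ref{thm:gaussian} and \ref{thm:exp} as closed-form importance sampling distributions. The central observation is that the Fourier transform of the Gaussian kernel is a Gaussian density, and the Fourier transform of the Cauchy kernel is a Laplace density, so by Bochner's theorem $\bv{K} = \int_{\R} p(\omega) \bv{v}_\omega \bv{v}_\omega^* \, d\omega$ where $\bv{v}_\omega \eqdef (e^{i\omega x_j})_{j=1}^n$ and $p$ is precisely the density appearing in Theorem \ref{thm:gaussian} or \ref{thm:exp}.

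To approximate this integral, I would sample frequencies $\omega_1, \ldots, \omega_m \sim q$ for an importance distribution $q$, set $\bv{g}(x) = \frac{1}{\sqrt{m}}\bigl(\sqrt{p(\omega_k)/q(\omega_k)} \, e^{i\omega_k x}\bigr)_{k=1}^m$, and invoke a matrix Bernstein inequality on $(\bv{K}+\lambda \bv{I})^{-1/2}\bv{G}^*\bv{G}(\bv{K}+\lambda\bv{I})^{-1/2}$ in the spirit of \cite{AvronKapralovMusco:2017}. The natural choice of $q$ is the ridge leverage score density $\tau_\lambda(\omega) \eqdef p(\omega)\bv{v}_\omega^*(\bv{K}+\lambda\bv{I})^{-1}\bv{v}_\omega$, whose total mass equals $s_\lambda$ and which yields an $(\epsilon,\lambda)$-spectral approximation at $m = O(s_\lambda \log s_\lambda/\epsilon^2)$; a second refinement step, resampling against the first sketch, removes the logarithmic factor to give the claimed $m = O(s_\lambda/\epsilon^2)$.

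The crucial step is replacing the intractable $\tau_\lambda(\omega)$ with an explicit, computable upper bound. For any $\alpha \in \C^n$ the quantity $\alpha^*\bv{v}_\omega = \sum_j \bar\alpha_j e^{-i\omega x_j}$ is an $n$-sparse exponential sum in the variable $\omega$ with $\int |\alpha^*\bv{v}_\omega|^2 p(\omega) d\omega = \alpha^*\bv{K}\alpha$, so $\tau_\lambda(\omega) \le p(\omega) \sup_\alpha |\alpha^*\bv{v}_\omega|^2/(\alpha^*\bv{K}\alpha) \le \tau_{n,p}(\omega)$. Directly invoking Theorems \ref{thm:gaussian} and \ref{thm:exp} here would cost $n^{3/2}$ or $n\log n$, which is useless. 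To bring $n$ down to $s_\lambda$, I would use column subset selection \cite{drineas2006subspace, guruswami2012optimal}: there exists a set $S$ of $O(s_\lambda)$ indices such that $\bv{K}+\lambda\bv{I}$ is, up to constants, dominated in the Loewner order by a rank-$|S|$ kernel matrix supported on $S$. This restriction shows that the supremum in $\tau_\lambda$ is attained (up to constants) by vectors $\alpha$ supported on $S$, so $\alpha^*\bv{v}_\omega$ is effectively $O(s_\lambda)$-sparse in $\omega$, yielding $\tau_\lambda(\omega) \lesssim \bar\tau_{s_\lambda,p}(\omega)$.

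Given this bound, I would perform rejection sampling against the piecewise explicit densities $\bar\tau_{s_\lambda,p}$ of Theorems \ref{thm:gaussian} and \ref{thm:exp}; their total masses $O(s_\lambda^{3/2})$ and $O(s_\lambda\log s_\lambda)$ explain the extra $\sqrt{s_\lambda}$ factor in the Gaussian runtime relative to the Cauchy one. Applying $\bv{g}$ to each of the $n$ data points takes $O(m)$ time, contributing the $nm = ns_\lambda/\epsilon^2$ factor; the remaining polynomial factors in $s_\lambda$ and $1/\epsilon$ arise from the two-phase construction (first build a coarse sketch, then use it to compute approximate ridge leverage scores and resample). The main obstacle is the column-subset-selection step: the family $\{\bv{v}_\omega : \omega \in \R\}$ is continuous rather than a finite matrix, so the discrete guarantees of \cite{drineas2006subspace, guruswami2012optimal} must be transferred to the continuous setting — likely by first discretizing via the coarse sketch and then arguing the discrete approximation controls the continuous leverage scores without any blowup in the effective regularization.
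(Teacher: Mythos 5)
Your overall architecture matches the paper's: Bochner's theorem plus importance-sampled random Fourier features, with the data-dependent kernel ridge leverage function $\tau_{\lambda,\bv{K}}(\eta)$ bounded by the Fourier sparse leverage scores $\tau_{s,p_k}(\eta)$ for $s = O(s_\lambda)$ via column subset selection, and then Theorems \ref{thm:gaussian} and \ref{thm:exp} supplying the closed-form sampling densities whose masses $O(s_\lambda^{3/2})$ and $O(s_\lambda \log s_\lambda)$ account for the different Gaussian/Cauchy runtimes. This is exactly Theorems \ref{thm:rff} and \ref{thm:sparseReduction} together with Corollaries \ref{cor:gaussian} and \ref{cor:cauchy}. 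One worry you raise at the end is a non-issue: the subset selection is applied to the $n$ \emph{columns} of $\bs{\Phi}$ (equivalently, of a matrix square root $\bv{B}$ of $\bv{K}$), i.e., over the finite set of data points, not over the continuum of frequencies. The restriction of $\alpha^*\bv{v}_\omega$ to $O(s_\lambda)$ data indices is then literally a member of $\mathcal{T}_{O(s_\lambda)}$ as a function of $\omega$, so the discrete guarantee of \cite{guruswami2012optimal} applies verbatim with no continuous transfer needed.

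The genuine gap is your final dimension-reduction step. You propose to reach $m = O(s_\lambda/\epsilon^2)$ by a second, adaptive round: compute approximate ridge leverage scores from the coarse sketch and resample. This fails the theorem on two counts. First, the theorem demands an \emph{oblivious} embedding, and a resampling distribution computed from the data (even through a sketch) is data-adaptive; the whole point of the result is to avoid exactly this. Second, independent leverage score sampling cannot remove the $\log s_\lambda$ factor — a matrix Chernoff/coupon-collector argument forces $\Omega(s_\lambda \log s_\lambda)$ independent samples for a spectral guarantee — so resampling would not reach $O(s_\lambda/\epsilon^2)$ anyway. The paper instead composes the $\poly(s_\lambda)$-dimensional RFF embedding with an \emph{oblivious} sub-Gaussian random projection $\bs{\Pi}$ (Theorem \ref{thm:rp}, from \cite{cohen2016optimal}), which achieves target dimension $O((s_\lambda + \log(1/\delta))/\epsilon^2)$ with no logarithmic loss and preserves obliviousness; making this rigorous also requires showing the intermediate sketch $\bv{G}^*\bv{G}$ has statistical dimension $O(s_\lambda)$ (Theorem \ref{thm:rffSD}). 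The stated runtimes $\tilde{O}(n s_\lambda^{3.5}/\epsilon^4)$ and $\tilde{O}(n s_\lambda^{3}/\epsilon^4)$ are then just $n$ times the cost $O(m' \cdot m)$ of applying the projection to each $m'$-dimensional feature vector, not the cost of a two-phase sampling scheme.
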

Theorem \ref{thm:high_kernel} is a simplified statement of Corollary \ref{cor:full_kernel_result}, proven in Appendix \ref{app:kernels}. There we explicitly state the form of $\bv{g}$, which as discussed in Section \ref{sec:rffApproach} below, is composed of a random Fourier features sampling step followed by  a standard random projection. For one dimensional data, our method matches the best Nystr\"{o}m method in terms of embedding dimension up to a $1/\epsilon^2$ factor, and in terms of running time up to an $s_\lambda^{1.5}$ factor. 
It thus provides one of the first nearly optimal oblivious embedding methods for a special class of kernels.
The only similar known result applies to polynomial kernels of degree $q$, which can be approximated using the TensorSketch technique \cite{PhamPagh:2013,meister2019tight,avron2014subspace}. A long line of work on this method culminated in a recent breakthrough achieving embedding dimension $m = O \left (q^4 s_\lambda/\epsilon^2 \right )$, with embedding time $O(nm)$ \cite{AhleKapralovKnudsen:2020}. That method can be extended e.g., to the Gaussian kernel, via polynomial approximation of the Gaussian, but one must assume that the data lies within a ball of radius $R$ and the embedding dimension suffers polynomially in $R$.

\subsection{Our approach}\label{sec:rffApproach}
Theorem \ref{thm:high_kernel} is based on a modification of the popular \emph{random Fourier features (RFF)} method from \cite{RahimiRecht:2007}, and like the original method can be implemented in a few lines of code (see Section \ref{sec:experiments}). As for all RFF methods, it is based on the following standard result for shift-invariant kernel functions:

\begin{fact}[Bochner's Theorem]\label{fact:bochners}
For any shift invariant kernel $k(x,y) = k(x-y)$ where $k: \R \rightarrow \R$ is a positive definite function with $k(0) = 1$, the inverse Fourier transform given by $
p_k(\eta) = \int_{t \in \R} e^{2 \pi i \eta t} k(t) d t$
is a probability density function. I.e. $p_k(\eta) \geq 0$ for all $\eta \in \R$ and $\int_{\eta \in \R}p_k(\eta) = 1$. 
\end{fact}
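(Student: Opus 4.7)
The plan is to prove Bochner's theorem in three steps: upgrade the discrete positive-definiteness of $k$ to a continuous integral inequality, use Fourier duality to deduce nonnegativity of $p_k$, and normalize via Fourier inversion at the origin.

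\textbf{Step 1: Continuous positive definiteness.} By the defining property of a positive definite function, for every finite collection of points $t_1, \ldots, t_N \in \R$ and coefficients $c_1, \ldots, c_N \in \C$,
\[
\sum_{j,\ell=1}^{N} c_j \overline{c_\ell}\, k(t_j - t_\ell) \ge 0.
\]
Applied to $N=2$ this already forces $|k(t)| \le k(0) = 1$, so $k$ is a bounded (continuous, after a standard argument) function. Applying the inequality to Riemann sums approximating an arbitrary Schwartz function $\phi$ and passing to the limit yields the continuous version
\[
\iint_{\R^2} \phi(x)\,\overline{\phi(y)}\, k(x - y)\, dx\, dy \ge 0 \qquad \text{for every } \phi \in \mathcal{S}(\R).
\]

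\textbf{Step 2: Pointwise nonnegativity of $p_k$.} Because $k$ is bounded, it defines a tempered distribution, and $p_k$ is its Fourier transform in the distributional sense. Rewriting the double integral from Step 1 via the change of variables $u = x-y$ and the Plancherel/convolution identity, the integral becomes
\[
\int_{\R} |\hat{\phi}(\eta)|^2\, p_k(\eta)\, d\eta \ge 0.
\]
Since $|\hat{\phi}|^2$ ranges over a cone that is dense in the nonnegative Schwartz functions as $\phi$ varies, this forces $p_k \ge 0$ as a tempered distribution. By a standard extraction argument (Riesz representation applied to the positive linear functional $\phi \mapsto \langle p_k, \phi \rangle$), $p_k$ is a nonnegative Radon measure; when $k$ is additionally integrable, it is the continuous function given by the explicit inverse Fourier integral in the statement.

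\textbf{Step 3: Total mass equals one.} Fourier inversion evaluated at $t=0$ gives $k(0) = \int_{\R} p_k(\eta)\, d\eta$, which by hypothesis equals $1$. Combined with Step 2, this certifies $p_k$ as a probability density (or, more generally, a probability measure).

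\textbf{Main obstacle.} The delicate point is Step 2. In full generality $k$ need not lie in $L^1$ (e.g., $k(x) = \cos(\sigma x)$ is positive definite), so $p_k$ must a priori be treated as a tempered distribution rather than a classical integral, and one must carefully justify both the Plancherel manipulation and the passage from ``nonnegative distribution'' to ``nonnegative measure''. For the Gaussian and Cauchy kernels used downstream in Theorem \ref{thm:high_kernel}, $k$ is Schwartz or at least integrable, so the inverse Fourier integral as displayed is already a classical object and the distributional subtleties vanish; however, the cleanest fully general argument still routes through tempered distributions.
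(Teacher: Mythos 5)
The paper does not prove this statement at all: it is presented as a classical \emph{Fact} (Bochner's theorem) and cited without argument, and in the two places it is actually used (the Gaussian and Cauchy kernels in Corollaries \ref{cor:gaussian} and \ref{cor:cauchy}) the density $p_k$ is simply computed in closed form. Your outline is the standard textbook proof and is essentially correct as a sketch, so there is no conflict with the paper --- but two points in your write-up deserve care. First, in Step 1 the parenthetical claim that continuity follows ``after a standard argument'' from positive definiteness is not right: positive definiteness plus the $N=2$ inequality gives only boundedness, and a merely measurable positive definite function need not be continuous (it only agrees a.e.\ with a continuous one). What the standard argument actually shows is that continuity \emph{at the origin} propagates to uniform continuity everywhere, via the $3\times 3$ Gram inequality $|k(t)-k(s)|^2 \le 2k(0)\left(k(0)-\mathrm{Re}\,k(t-s)\right)$; continuity at $0$ must be assumed (as it implicitly is for any kernel function). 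Second, in Step 2 the assertion that $\{|\hat\phi|^2\}$ is ``dense in the nonnegative Schwartz functions'' is too strong as stated (not every nonnegative Schwartz function has a Schwartz square root); the correct route is to test $p_k$ against an approximate identity of the form $|\hat\phi_\epsilon|^2$ concentrating at a point, or to note that finite sums $\sum_j |\hat\phi_j|^2$ suffice to separate nonnegative measures. Neither issue is fatal, and your observation that the distributional machinery is unnecessary for the integrable kernels used downstream is exactly why the paper can safely treat this as a black-box fact.
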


As observed by Rahimi and Recht in \cite{RahimiRecht:2007}, Fact \ref{fact:bochners} inspires a natural class of linear time randomized algorithms for approximating $\bv{K}$. We begin by observing that $\bv{K}$ can be written as
$\bv{K} = \bs{\Phi}^*\bs{\Phi}$,
where $^*$ denotes the Hermitian adjoint and $\bs{\Phi}: \C^n \rightarrow L_2$ is the linear operator with $[\bs{\Phi}\bv w](\eta) = \sqrt{p_k(\eta)} \cdot \sum_{j=1}^n \bv w_j e^{-2\pi i \eta x_j}$ for $\bv w\in \C^n, \eta \in \R$.

It is helpful to think of $\bs{\Phi}$ as an infinitely tall matrix with $n$ columns and rows indexed by real valued ``frequencies'' $\eta \in \R$. RFF methods approximate $\bv{K}$ by subsampling and reweighting rows (i.e. frequencies) of $\bs{\Phi}$ independently at random to form a matrix $\bv{G} \in \C^{m \times n}$.
$\bv{K}$ is approximated by $\tilde{\bv{K}} = \bv{G}^*\bv{G}$.
In general, row subsampling is performed using a non-uniform importance sampling distribution. The following general framework for unbiased sampling is described in \cite{AvronKapralovMusco:2017}: 
\begin{definition}[Modified RFF Embedding]\label{def:rff}
Consider a shift invariant kernel $k: \R \rightarrow \R$ with inverse Fourier transform $p_k$. For a chosen PDF $q$ whose support includes that of $p_k$, the Modified RFF embedding $\bv{g}(x): \R\rightarrow \C^m$ is obtained by sampling $\eta_1,\ldots, \eta_m$ independently from $q$ and defining:
\begin{align*}
\bv g (x) = \frac{1}{\sqrt{m}} \left [\sqrt{\frac{p_k(\eta_1)}{q(\eta_1)}} e^{-2 \pi i \eta_1 x},\ldots, \sqrt{\frac{p_k(\eta_m)}{q(\eta_m)}} e^{-2 \pi i \eta_m x} \right ]^*.
\end{align*}
\end{definition}
It is easy to observe that for the modified RFF method $\E[ \bv g (x)^* \bv g (y)] = k(x,y)$ and thus $\E[\bv{G}^*\bv{G}] = \bv{K}$.
%
So, the feature transformation $\bv{g}(\cdot)$ gives an unbiased approximation to $\bv{K}$ for any sampling distribution $q$ used to select frequencies. However, a good choice for $q$ is critical in ensuring that $\bv{G}^*\bv{G}$ concentrates closely around its expectation with few samples. 
The original Fourier features method makes the natural choices $q = p_k$, which leads to approximation bounds in terms of $\norm{\bv{K} - \bv{\tilde K}}_\infty$ \cite{RahimiRecht:2007}. 
\cite{AvronKapralovMusco:2017} provides a stronger result by showing that sampling proportional to the so-called \emph{kernel ridge leverage function} is sufficient for an approximation satisfying Definition \ref{def:spectral_gaurantee} with $m = O(s_\lambda\log s_\lambda/\epsilon^2)$ samples. That function is defined as follows:
\begin{definition}[Kernel Ridge  Leverage Function]\label{def:kl}
	Consider a positive definite, shift invariant kernel $k: \R \rightarrow \R$, a set of points $x_1,\ldots, x_n \in \R$ with associated kernel matrix $\bv{K} \in \R^{n \times n}$, and a ridge parameter $\lambda \ge 0$. The $\lambda$-ridge leverage score of a frequency $\eta \in \R$ is given by:
	\begin{align*}
	\tau_{\lambda,\bv{K}}(\eta) =\sup_{\bv w \in \C^n, \bv{w} \neq 0} \frac{|[\bs{\Phi}\bv w](\eta)|^2}{\|\bs{\Phi}\bv w\|_{2}^2 + \lambda \| \bv w\|_2^2}. 
	\end{align*}
\end{definition}
Definition \ref{def:kl} is closely related to the standard leverage score of \eqref{eq:gen_lev_score}. It measures the worse case concentration of a function $\bs{\Phi} \bv{w}$ in the span of our kernelized data points at a frequency $\eta$. Since $\norm{\bs{\Phi} \bv{w}}_2^2 = \bv{w}^* \bs{\Phi}^* \bs{\Phi} \bv{w} = \bv{w}^* \bv{K} \bv{w}$, leverage score sampling from this class directly aims to preserve $\bv{w}^* \bv{K} \bv{w}$ for worse case $\bv{w}$ and thus achieve the spectral guarantee of Definition \ref{def:spectral_gaurantee}. 
Due to the additive error $\lambda \bv{I}$ in this guarantee, it suffices to bound the concentration with regularization term $\lambda \norm{\bv{w}}_2^2$ in the denominator.

Of course, the above ridge leverage function is data dependent. To obtain an oblivious sketching method \cite{AvronKapralovMusco:2017} suggests proving closed form upper bounds on the function, which can be used in its place for sampling. They prove results for the Gaussian kernel, but the bounds require that data lies within a ball of radius $R$, so do not achieve an embedding dimension linear in $s_\lambda$ for any dataset. 
We improve this result by showing that it is possible to bound the \emph{kernel ridge leverage function} in terms of the \emph{Fourier sparse leverage function} for the density $p_k$ given by  the kernel Fourier transform:
\begin{theorem}\label{thm:sparseReduction} Consider a positive definite, shift invariant kernel $k: \R \rightarrow \R$, any points $x_1,\ldots, x_n \in \R$ and the associated kernel matrix $\bv{K}$, with statistical dimension $s_\lambda$.  Let $s = 6\lceil s_\lambda\rceil +1$. Then:
	\begin{align*}
	\forall \eta \in \R, \quad \tau_{\lambda,\bv{K}}(\eta)  \le  (2 +6 s_\lambda) \cdot  \tau_{ s, p_k}(\eta).
	\end{align*}
\end{theorem}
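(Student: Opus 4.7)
My plan is to combine a deterministic column subset selection (CSS) result applied to the quasi-matrix $\bs{\Phi}$ with a functional decomposition of $\bs{\Phi}\bv{w}$ into an $s$-sparse principal component plus a residual that itself splits into a manageable sum of $s$-sparse pieces.

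\textit{Step 1 (Column subset selection on $\bs{\Phi}$).} I would first establish the elementary eigenvalue bound $\sum_{i > 2\lceil s_\lambda\rceil}\sigma_i(\bv{K}) \le 2\lambda s_\lambda$. This follows because at most $2s_\lambda$ eigenvalues of $\bv{K}$ can exceed $\lambda$ (so $\sigma_i \le \lambda$ whenever $i > 2\lceil s_\lambda\rceil$, and thus $\sigma_i \le 2\lambda\cdot \sigma_i/(\sigma_i+\lambda)$), and then summing against the definition of $s_\lambda$. Viewing $\bs{\Phi}$ as a quasi-matrix with $n$ columns (whose squared singular values are the $\sigma_i(\bv{K})$) and applying the CSS theorem of \cite{guruswami2012optimal} with target rank $k = 2\lceil s_\lambda\rceil$ and column budget $s-1 = 6\lceil s_\lambda\rceil$, I obtain $S\subseteq\{1,\ldots,n\}$ of size $s-1$ such that the orthogonal projection $\bv{P}_S$ onto $\mathrm{span}\{\bs{\Phi}[:,l]\}_{l\in S}$ in $L^2(\R)$ satisfies
\begin{align*}
\|\bs{\Phi} - \bv{P}_S\bs{\Phi}\|_F^2 \;\le\; \left(1 + \frac{k}{(s-1)-k+1}\right)\sum_{i > k}\sigma_i(\bv{K}) \;\le\; \tfrac{3}{2}\cdot 2\lambda s_\lambda \;=\; 3\lambda s_\lambda.
\end{align*}

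\textit{Step 2 (Sparse decomposition of the residual).} Fix any $\bv{w}\in\C^n$, set $f(y) = \sum_j w_j e^{-2\pi i y x_j}$ so that $\bs{\Phi}\bv{w}(y) = \sqrt{p_k(y)}\,f(y)$, and let $g$ be the $L^2(p_k)$-orthogonal projection of $f$ onto $V_S := \mathrm{span}\{e^{-2\pi i\cdot x_{i_l}}\}_{l\in S}$, with residual $h := f-g$. Then $g\in\mathcal{T}_{s-1}\subseteq\mathcal{T}_s$, and solving the normal equations yields the key representation
\begin{align*}
h(y) = \sum_{j\in S^c} w_j\,\phi_j(y),\qquad \phi_j(y) := e^{-2\pi i y x_j} - \sum_{l\in S}[\bv{K}_{S,S}^{-1}\bv{K}_{S,j}]_l\,e^{-2\pi i y x_{i_l}}.
\end{align*}
Each $\phi_j$ has at most $|S|+1 = s$ distinct frequencies, so $\phi_j\in\mathcal{T}_s$. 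Moreover, $(\bs{\Phi}-\bv{P}_S\bs{\Phi})[:,j](y) = \sqrt{p_k(y)}\,\phi_j(y)$, so matching $L^2(\R)$ and $L^2(p_k)$ norms column-by-column gives $\sum_{j\in S^c}\|\phi_j\|_{p_k}^2 = \|\bs{\Phi}-\bv{P}_S\bs{\Phi}\|_F^2 \le 3\lambda s_\lambda$.

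\textit{Step 3 (Split and bound the leverage score).} Using $|f(\eta)|^2 \le 2|g(\eta)|^2 + 2|h(\eta)|^2$, the Pythagorean identity $\|f\|_{p_k}^2 = \|g\|_{p_k}^2+\|h\|_{p_k}^2$, and the elementary inequality $(a+b)/(c+d+e) \le a/c + b/(d+e)$ for positive reals,
\begin{align*}
\frac{p_k(\eta)|f(\eta)|^2}{\|f\|_{p_k}^2 + \lambda\|\bv{w}\|^2} \;\le\; \frac{2 p_k(\eta)|g(\eta)|^2}{\|g\|_{p_k}^2} \;+\; \frac{2 p_k(\eta)|h(\eta)|^2}{\lambda\|\bv{w}\|^2}.
\end{align*}
Since $g\in\mathcal{T}_s$, the first summand is bounded by $2\tau_{s,p_k}(\eta)$. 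For the second, Cauchy--Schwarz gives $|h(\eta)|^2 \le \|\bv{w}\|^2\sum_{j\in S^c}|\phi_j(\eta)|^2$; applying the Fourier-sparse leverage bound $p_k(\eta)|\phi_j(\eta)|^2 \le \tau_{s,p_k}(\eta)\|\phi_j\|_{p_k}^2$ term by term and invoking the Frobenius bound from Step 1 yields $p_k(\eta)|h(\eta)|^2 \le 3\lambda s_\lambda\,\|\bv{w}\|^2\,\tau_{s,p_k}(\eta)$, so the second summand is at most $6s_\lambda\tau_{s,p_k}(\eta)$. Taking the supremum over $\bv{w}$ yields $\tau_{\lambda,\bv{K}}(\eta) \le (2+6s_\lambda)\tau_{s,p_k}(\eta)$.

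\textit{Main obstacle.} The crucial and nonobvious step is the reformulation in Step 2, expressing the residual $h$---which as a function is $n$-sparse---as a linear combination of $s$-sparse pieces $\phi_j$. Without this rewriting, any direct leverage-score bound applied to $h$ would pay the much larger $\tau_{n,p_k}$ rather than $\tau_{s,p_k}$. It is also this reformulation that dictates the column budget $|S| = s-1$, so that each $\phi_j$ has exactly $|S|+1 = s$ frequencies; this explains the theorem's choice $s = 6\lceil s_\lambda\rceil + 1$.
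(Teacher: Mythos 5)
Your proposal is correct and follows essentially the same route as the paper's proof: the same column/row subset selection via \cite{guruswami2012optimal} with target rank $2\lceil s_\lambda\rceil$ and budget $6\lceil s_\lambda\rceil$, the same eigenvalue-tail bound $\sum_{i>2\lceil s_\lambda\rceil}\lambda_i(\bv{K})\le 2\lambda s_\lambda$, the same split of $\bs{\Phi}\bv{w}$ into an $s$-sparse projection plus a residual, and the same column-by-column application of the Fourier-sparse leverage bound to the residual (your $\phi_j$'s are exactly the paper's vector $\bv z(\eta)$, and your Pythagorean step is the paper's $\bv{Z}^T\bs{\Phi}_t^*\bs{\Phi}_t\bv{Z}\preceq\bv{K}$). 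Your closing remark correctly identifies the key idea that also drives the paper's Claim~\ref{clm:zBound}.
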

We prove Theorem \ref{thm:sparseReduction} in Appendix \ref{app:kernels}. We show that $\bv{\Phi} \bv{w}$ can be approximated by an $s = 6\lceil s_\lambda\rceil +1$ Fourier sparse function, so bounding how much it can spike 
(i.e., which bounds the ridge leverage score of Def. \ref{def:kl}) reduces to
bounding the Fourier sparse leverage scores. 
With Theorem \ref{thm:sparseReduction} in place, we immediately obtain a modified random Fourier features method for any kernel $k$, given an upper bound the Fourier sparse leverage scores of $p_k$. The Fourier transform of the Gaussian kernel is Gaussian, so Theorem \ref{thm:gaussian} provides the required bound. The Fourier transform of the Cauchy kernel is the Laplace distribution, so Theorem \ref{thm:exp} provides the required bound. 

\paragraph{Final Embeddings via Random Projection.} In both cases, Theorem \ref{thm:sparseReduction} combined with our leverage scores bounds does not achieve a tight result alone, yielding embeddings with $m = O(\poly(s_\lambda))$. To achieve the linear dependence on $s_\lambda$ in Theorem \ref{thm:high_kernel}, we show that it suffices to post-process the modified RFF embedding $\bv{g}$ with a standard oblivious random projection method  \cite{cohen2016optimal}. Proofs are detailed in Appendix \ref{sec:rp}, with a complete statement of the random features + random projection embedding algorithm given in Corollary \ref{cor:full_kernel_result}.

It is worth noting that, given any approximation 
$\bv{\tilde K} = \bv{G}^* \bv{G}$ satisfying Definition \ref{def:spectral_gaurantee}, we can always apply oblivious random projection to $\bv{G}$ to further reduce the embedding to the target dimension $O \left (\frac{s_\lambda}{\epsilon^2}\right  )$, while maintaining the guarantee of Definition \ref{def:spectral_gaurantee} up to constants on the error parameters.\footnote{We also need the slightly stronger condition that $\bv{\tilde K}$'s statistical dimension is close to that of $\bv{K}$. This condition holds for essentially all known sketching methods.} Thus, the main contribution of Theorem \ref{thm:high_kernel} is achieving a lower initial dimension of $\bv{G}$ via this sampling step, which directly translates into a faster runtime to produce the final embedding. Our initial embedding dimension, and hence runtime depends polynomially on $s_\lambda$ and $\epsilon$. 
Existing work \cite{AvronKapralovMusco:2017,AhleKapralovKnudsen:2020} makes an additional assumption that the data points fall in some radius $R$, and their initial embedding dimension and hence runtime suffers polynomially in this parameter. Related results make no such assumption, but depend linearly on $1/\lambda$ \cite{AvronKapralovMusco:2017,li2018towards}, a quantity which can be much larger than $s_\lambda$ in the typical case when $\bv{K}$ has decaying eigenvalues.


\section{Active Learning}\label{sec:active}

We next consider a general active learning problem that encompasses classic problems in both signal processing and machine learning, including e.g., bandlimited function approximation and active Gaussian process regression. Informally, given the ability to make noisy measurements of some function $f$, the goal is to fit a function $\tilde f$ with small deviation from $f$ under some data density $p$, under the assumption that  $f$ has Fourier transform constrained according to some frequency density $q$. For example, when $q$ is uniform on a bounded interval, $f$ is bandlimited. When $q$ Gaussian, $f$ obeys a `soft bandlimit' tending towards using lower frequencies with higher density  under $q$.

Throughout this section we use the following notation: for any density $p$ over $\R$ let $L_2(p)$ denote the space of square integrable functions with respect to $p$, i.e., $f$ with $\norm{f}_{p}^2 = \int_{x \in \R} |f(x)|^2 p(x) dx < \infty$. For $f,g \in L_2(p)$ we denote the inner product $\langle f,g\rangle_p \eqdef \int_{x \in \R} f(x)^* g(x) p(x) dx$, where $f(x)^*$ is the conjugate transpose of $f(x)$.
We define the weighted Fourier transform with respect to data and frequency densities $p$ and $q$ as:
\begin{definition}[Weighted Fourier Transform]\label{def:weigthedFT}
Let $p,q$ be probability densities on $\R$. 
Define the weighted Fourier transform $\Fmu: L_2(p) \rightarrow L_2(q)$ by:\footnote{As in \cite{AvronKapralovMusco:2019}, we can generalize the weighted Fourier transform to be weighted by any two measures over $\R$. This allows, for example, the use of discrete measures. We focus on the case when the measures correspond to density functions $p,q$ for simplicity of exposition.}
	\begin{align}
	\label{eq:p_transform}
	\left[\Fp \,f\right](\eta) \eqdef \int_{\R} f(x) e^{-2\pi i \eta x}\, p(x) d x.
	\end{align}
	The adjoint $\Fmu^*$ such that $\langle g, \Fmu f \rangle_q = \langle \Fmu^* g, f \rangle_p$ is the inverse Fourier transform operator:
	\begin{align}
	\label{eq:mu_transform}
	\left[\Fmu^* \,g\right](x) \eqdef \int_{\R} g(\eta) e^{2\pi i \eta x}\, q(\eta) d \eta.
	\end{align}
\end{definition}
With Definition \ref{def:weigthedFT} in place we can formally define our main active regression problem of interest:
\begin{problem}[Active Function Fitting]
	\label{prob:unformal_interp}
	Let $p,q$ be probability densities on $\R$ representing data and frequency densities respectively.
	Suppose a time domain function $y \in L_2(p)$ can be written as $y = \Fmu^* \, g$ for some frequency domain function $g \in L_2(q)$ and, for any $x \in \supp(p)$, we can query $y(x) + n(x)$ for some fixed noise function $n\in L_2(p)$. Then, for error parameter $\lambda \ge 0$, our goal is to recover, using as few queries as possible, an approximation $\tilde y \in L_2(p)$ satisfying:
	\begin{align}
	\label{eq:main_guarantee}
	\| y - \tilde{y}\|_p^2 \leq C\|n\|_p^2 + \lambda \|g\|_{q}^2,
	\end{align}
	where $C \ge 1$ is a fixed positive constant.
\end{problem}
The first error term of \eqref{eq:main_guarantee} depends on $\norm{n}_p^2$, which in general is necessary, since the noise $n$ is adversarial. Information theoretically, we might hope to achieve $C \rightarrow 1$ as we take more and more samples, but we focus on achieving within a small constant factor of this ideal bound. The second term $\lambda \|g\|_{q}^2$ is also necessary in general: it is higher when $y$'s Fourier energy under the frequency density $q$ is larger, making $y$ harder  to learn. By decreasing $\lambda$ we obtain a better approximation, at the cost of higher sample complexity.

As discussed, 
Problem \ref{prob:unformal_interp} captures a wide range of classical function fitting problems. See \cite{AvronKapralovMusco:2019} for details and an exposition of prior work.
\begin{itemize}
\item When $q$ is uniform on an interval $[-F,F]$, $f= \Fmu^* g$ is bandlimited with bandlimit $F$. Thus Problem \ref{prob:unformal_interp} corresponds bandlimited approximation, which lies at the core of modern signal processing and Nyquist sampling theory \cite{Whittaker:1915,Nyquist:1928,Kotelnikov:1933,Shannon:1949}. Classically, this problem is considered over an infinite time horizon with access  to infinite samples at a certain rate. Significant work also studies the problem in the finite sample regime, when $p$ is uniform over an interval \cite{LandauPollak:1961,LandauPollak:1962,SlepianPollak:1961,XiaoRokhlinYarvin:2001,OsipovRokhlin:2014}.
\item When $q$ is uniform over a union of intervals, $f$ is composed of frequencies restricted to these intervals and so Problem \ref{prob:unformal_interp} corresponds to multiband function approximation over data density $p$. This problem is also central in signal processing and studied in both the infinite and finite sample regimes \cite{Landau:1967a,LandauWidom:1980,FengBresler:1996,MishaliEldar:2009, LakeyHogan:2012}. 
\item When $q$ is a general density, Problem \ref{prob:unformal_interp} is closely related to Gaussian process regression (also kriging/kernel ridge regression) \cite{HandcockStein:1993, RasmussenWilliams06,stein2012interpolation} over data distribution $p$ with covariance kernel $k_q$ given by the Fourier transform of $q$. $q$ corresponds to the expected power spectral density of a Gaussian process drawn with this covariance kernel. For example, if $q$ is Gaussian, $k_q$ is the Gaussian kernel. If  $q$ is Cauchy, $k_q$ is the exponential kernel. If $q$ is a mixture of Gaussians, so is $k_q$, a so-called spectral mixture kernel \cite{wilson2013gaussian}.
\end{itemize}

Related to the last example above, it is not hard to show that Problem \ref{prob:unformal_interp} can be solved by an infinite dimensional kernel ridge regression problem, where the kernel space corresponds to the class of functions $\Fmu^* w$ for $w \in L_2(q)$ and the target is the noisy function $y+n$.

\begin{claim}[Claim 4 of \cite{AvronKapralovMusco:2019}]
	\label{claim:regression_reduction}
	Consider the setting of Problem \ref{prob:unformal_interp}. Let 
	$\tilde{g} \in L_2(q)$ satisfy:
	\begin{align}
	\label{eq:approx_regress_1}
		\|\Fmu^* \tilde{g} - (y+n)\|_p^2 + \lambda \|\tilde{g}\|_q^2 \leq C\cdot\min_{w \in L_2(q)}\left[  \|\Fmu^* w - (y+n)\|_p^2 + \lambda \|w\|_q^2\right]
	\end{align}
	for some $C \ge 1$.
	Then
	\begin{align*}
		\|y - \Fmu^* \tilde{g}\|_p^2 \leq 2C\lambda  \|g\|_q^2 + 2(C+1)\|n\|_p^2.
	\end{align*}
	That is, $\tilde y = \Fmu^* \tilde{g}$ solves Problem \ref{prob:unformal_interp} with error parameters $\lambda' = 2C\lambda$ and $C' = 2(C+1)$.
\end{claim}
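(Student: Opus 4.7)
The plan is to combine the near-optimality hypothesis on $\tilde{g}$ with a well-chosen comparison point $w = g$, and then convert the regression-style bound (which measures error against the noisy signal $y+n$) into a bound on the denoised error $\|y - \Fmu^* \tilde{g}\|_p^2$ via a simple triangle-type inequality.

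First, I would upper-bound the regression objective at $\tilde{g}$. Since $y = \Fmu^* g$, plugging $w = g$ into the minimum on the right side of \eqref{eq:approx_regress_1} gives
\begin{align*}
\min_{w \in L_2(q)} \left[\|\Fmu^* w - (y+n)\|_p^2 + \lambda\|w\|_q^2\right]
\le \|\Fmu^* g - (y+n)\|_p^2 + \lambda\|g\|_q^2
= \|n\|_p^2 + \lambda \|g\|_q^2.
\end{align*}
Combined with \eqref{eq:approx_regress_1}, this yields $\|\Fmu^* \tilde{g} - (y+n)\|_p^2 \le C\bigl(\|n\|_p^2 + \lambda\|g\|_q^2\bigr)$; in particular we may drop the (nonnegative) $\lambda\|\tilde{g}\|_q^2$ term on the left.

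Second, I would pass from the noisy error to the clean error. Writing $y - \Fmu^* \tilde{g} = (y+n) - \Fmu^* \tilde{g} - n$ and applying the elementary inequality $\|a+b\|_p^2 \le 2\|a\|_p^2 + 2\|b\|_p^2$ gives
\begin{align*}
\|y - \Fmu^* \tilde{g}\|_p^2 \le 2\|\Fmu^* \tilde{g} - (y+n)\|_p^2 + 2\|n\|_p^2
\le 2C\lambda \|g\|_q^2 + (2C + 2)\|n\|_p^2,
\end{align*}
which is exactly the claimed bound with $\lambda' = 2C\lambda$ and $C' = 2(C+1)$.

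Overall this is a short, two-step argument and I do not anticipate a substantive obstacle: the only subtlety is the factor-of-two loss when splitting off the adversarial noise $n$, which is what turns the constant $C$ into $2(C+1)$ in the final bound. One could try to avoid this loss via the parallelogram identity or by working directly with the optimality conditions of the regression problem, but since the claim is stated with the factor $2(C+1)$, the straightforward splitting is sufficient.
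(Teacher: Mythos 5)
Your proof is correct and is exactly the argument the paper has in mind: it cites Claim 4 of \cite{AvronKapralovMusco:2019} and notes that the proof is ``a simple application of triangle inequality,'' which is precisely your two steps (compare against $w=g$ using $y=\Fmu^* g$ to get $\|\Fmu^*\tilde g-(y+n)\|_p^2\le C(\|n\|_p^2+\lambda\|g\|_q^2)$, then split off the noise with $\|a+b\|_p^2\le 2\|a\|_p^2+2\|b\|_p^2$). No gaps.
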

We note that Claim 4 of \cite{AvronKapralovMusco:2019} is stated in the case when $p$ is the uniform density  on an interval, however the proof is via a simple application of triangle inequality and holds for  any density $p$. Throughout this section, we will employ several results from \cite{AvronKapralovMusco:2019} that are stated in the case when $p$ is uniform on an interval but generalize to any density $p$.

\subsection{Active function fitting via kernel leverage score sampling}

Of course, the optimization problem of Claim \ref{claim:regression_reduction} cannot be solved exactly, as it requires full access to $y+n$ on $\supp(p)$. The key idea is to solve the problem approximately by sampling $x \in \supp(p)$ according to their ridge leverage scores and querying $y$ at the sampled points.
\begin{definition}[Kernel operator ridge leverage function]\label{def:ridgeScores}
	For probability densities $p,q$ on $\R$ and ridge parameter  $\lambda \ge 0$, define the $\lambda$-ridge leverage function for $x \in \R$ as:
	\begin{align}
	\label{eq:leverage_def}
		\tmu(x) =  \sup_{\{w \in L_2(q)\, \norm{w}_q > 0\}} \frac{p(x) \cdot \left|[\Fmu^* w](x) \right|^2 }{\norm{\Fmu^*w}_p^2 + \lambda \norm{w}_q^2}.
	\end{align}
\end{definition}
The above ridge leverage scores are closely related to the standard leverage scores of \eqref{eq:gen_lev_score}, for the class of functions $\{f: f = \Fmu^* w \text{ for } w \in L_2(q)\}$, which we fit in Problem \ref{prob:unformal_interp}. 
Intuitively, we hope to sample our function in locations where this class can place significant mass (weighted by the data density $p$), so that we can accurately solve the regression problem of Claim \ref{claim:regression_reduction}.

Typically however, the standard leverage scores of this function class are unbounded. For example, when $q$ is uniform on an interval, this is the space of all bandlimited functions, which may be arbitrarily spiky. The ridge scores account for this by including a regularization term involving $\norm{w}_q^2$ which controls the energy of the function and in turn, how spiky it can be. As Problem \ref{prob:unformal_interp} allows error in terms of $\norm{w}_q^2$, sampling by these scores still suffices for an accurate solution. We note that if $\norm{w}_q^2$ were allowed to be unbounded, i.e., if we set $\lambda = 0$, it would be impossible to solve Problem \ref{prob:unformal_interp} for most common frequencies densities $q$ with a finite number of samples.

Definition \ref{def:ridgeScores} is closely related to Definition \ref{def:kl}, the kernel leverage scores used in our modified RFF algorithm. We note two differences: 1) the leverage function is defined over data points $x \in \R$ rather than frequencies $\eta \in \R$ and 2) both the data and frequency domains are continuous, while in Def. \ref{def:kl} the data domain is discrete set of $n$ points. Notationally, a minor difference is that in Def. \ref{def:kl} the density $p$ is `baked into' the Fourier operator $\bs{\Phi}$ through a weighting of  $\sqrt{p(\eta)}$  on each row.

The ridge leverage function of Definition \ref{def:ridgeScores} has received recent attention in the machine learning literature \cite{PauwelsBachVert:2018,LasserrePauwels:2019,fanuel2019nystr}. $\Fmu w$ lies in the kernel Hilbert space corresponding to the kernel $k_q$ whose Fourier transform is $q$. $\norm{w}_q^2$ is the norm of the function in the kernel Hilbert space. \cite{PauwelsBachVert:2018} focuses on bounding the leverage function in the limit as $\lambda \rightarrow 0$. In this limiting case, the function can be shown to converge to a simple transformation of the data density $p$. It is due to this kernel interpretation, which we will see more clearly  in our following bounds, that we use the term \emph{kernel operator ridge leverage function.}

As in the discrete kernel matrix case, the ridge leverage scores integrate to the statistical dimension of the associated kernel operator, which in this case is infinite dimensional.

\begin{definition}[Kernel operator statistical dimension]\label{def:statDim}
	For probability densities $p,q$ define the kernel operator $\Kmu: L_2(p)\rightarrow L_2(p)$ as $\Kmu = \Fmu^* \Fp$. 
	The $\lambda$-statistical dimension of $\Kmu$ is defined as:
	\begin{align}
	\label{eq:stat_dim_def}
	\smu \eqdef \tr(\Kmu (\Kmu + \lambda \mathcal{I})^{-1}) = \sum_{i=1}^\infty \frac{\lambda_i\left(\Kmu\right)}{\lambda_i\left(\Kmu\right) + \lambda},
	\end{align}	
	where $\mathcal{I}$ is the identity operator on $L_2(p)$ and $\lambda_i(\Kmu)$ is the $i^{th}$ largest eigenvalue of $\Kmu$.
	By Theorem 5 of \cite{AvronKapralovMusco:2019}, $\int_{x \in \R} \tmu(x) d x = \smu$.
\end{definition}

The work of \cite{AvronKapralovMusco:2019} shows that the kernel operator statistical dimension $\smu$ essentially characterizes the sample complexity of Problem \ref{prob:unformal_interp}. Under very mild assumptions (see Section 6 of \cite{AvronKapralovMusco:2019} for details), they show that any algorithm solving Problem \ref{prob:unformal_interp} must use $\Omega(\smu)$ samples. Conversely, 
by sampling data points according to the kernel operator ridge leverage score function (Def.  \ref{def:ridgeScores}), or a tight upper bound on this function, one can achieve a sample complexity nearly matching this lower bound. Additionally, the algorithm that achieves this complexity is simple and efficient, based on standard kernel ridge regression. Details are discussed in  Appendix \ref{app:active}.

	\vspace{-.5em}
\subsection{Kernel leverage score bound via Fourier sparse leverage scores}\label{sec:klsbv}

In sum, to solve Problem \ref{prob:unformal_interp} with near optimal sample complexity, it suffices to find a function $\ttmu$ that tightly upper bounds the true kernel operator leverage function $\tmu$ of Definition \ref{def:ridgeScores}. 
%
We do this using a similar approach that of Section \ref{sec:rffApproach}: we show how to well
approximate any function $\Fmu^* w$ via a Fourier sparse function in $\mathcal T_s$, with sparsity $s$ linear in the statistical dimension $\smu$. Using this approximation, we give a blackbox bound  on $\tmu$ in terms of the Fourier sparse leverage scores under the data distribution $p$, giving the following analog to Theorem \ref{thm:sparseReduction}:

\begin{theorem}[Kernel operator leverage function bound]\label{thm:css2} Let $s = \lceil 36 \cdot \smu \rceil +1$. 
For all $x \in \R$:
	\begin{align*}
	\tmu(x) \le (2+8 \smu) \cdot \tau_{ s, p}(x).
	\end{align*}
\end{theorem}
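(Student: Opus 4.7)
The plan mirrors the proof of Theorem \ref{thm:sparseReduction}: approximate $f = \Fmu^* w$ by a Fourier sparse function that exactly matches $f$ at the target point $x$, then invoke the Fourier sparse leverage score bound. Fix $w \in L_2(q)$ with $\|w\|_q > 0$ and set $f = \Fmu^* w$. It suffices to show, for every $x \in \R$,
\[
p(x)\,|f(x)|^2 \;\le\; (2 + 8\smu)\,\tau_{s,p}(x)\,\bigl(\|f\|_p^2 + \lambda\|w\|_q^2\bigr),
\]
since taking the supremum over $w$ in Definition \ref{def:ridgeScores} then delivers the theorem.

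\paragraph{Construction.} The first step is an operator analog of ridge leverage score column subset selection in the spirit of \cite{AlaouiMahoney:2015,MuscoMusco:2017,AvronKapralovMusco:2017}: using that the kernel operator $\Kmu = \Fmu^* \Fmu$ has statistical dimension $\smu$ (Definition \ref{def:statDim}), produce frequencies $\eta_1,\dots,\eta_{s-1}$ with $s-1 = \lceil 36\smu\rceil$ such that, for every $w$, some coefficients $a_j \in \C$ make $\tilde{f}(y) = \sum_{j=1}^{s-1} a_j e^{2\pi i \eta_j y} \in \mathcal{T}_{s-1}$ satisfy $\|f - \tilde f\|_p^2 \le \tfrac{1}{36}\lambda\|w\|_q^2$. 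The second step then adjoins one extra exponential at a fresh frequency $\eta_s$, setting $\tilde{f}_x(y) = \tilde f(y) + (f(x) - \tilde f(x))\,e^{2\pi i \eta_s (y-x)}$, so that $\tilde f_x \in \mathcal{T}_s$ with $\tilde f_x(x) = f(x)$ on the nose. The Fourier sparse leverage score bound then yields $p(x)|f(x)|^2 = p(x)|\tilde f_x(x)|^2 \le \tau_{s,p}(x)\,\|\tilde f_x\|_p^2$, and the triangle inequality reduces the task to bounding $\|\tilde f_x\|_p^2 \le 2\|\tilde f\|_p^2 + 2|f(x)-\tilde f(x)|^2$; the first summand is $\le 4(\|f\|_p^2 + \lambda\|w\|_q^2)$ via $\|\tilde f\|_p \le \|f\|_p + \|f-\tilde f\|_p$ and the Step 1 bound.

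\paragraph{Main obstacle.} The pointwise residual $|f(x) - \tilde f(x)|^2$ is the only remaining piece, and an $L_2(p)$ approximation alone does not control it. I expect to resolve this by strengthening Step 1 to an \emph{augmented} column subset selection: sample frequencies so as to approximate both $\Fmu^* w$ in $L_2(p)$ and the point-evaluation functional $w \mapsto [\Fmu^*w](x)$ simultaneously. Since this is a rank-one augmentation of the operator, it raises the effective statistical dimension by at most a constant, and by weighting the evaluation coordinate appropriately the resulting approximation should yield a pointwise bound of the form $|f(x) - \tilde f(x)|^2 \le O(\smu)\cdot(\|f\|_p^2 + \lambda\|w\|_q^2)$. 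Feeding this bound into the triangle inequality for $\|\tilde f_x\|_p^2$ and tracking constants carefully then produces the final factor $2 + 8\smu$.
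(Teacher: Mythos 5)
There is a genuine gap, and it sits exactly where you flagged it: the pointwise residual $|f(x)-\tilde f(x)|^2$. The bound you hope to extract, $|f(x)-\tilde f(x)|^2 \le O(\smu)\cdot(\|f\|_p^2+\lambda\|w\|_q^2)$, has no dependence on $x$, and no such bound can hold for thin-tailed densities. The residual $f-\tilde f$ contains the Fourier-sparse projection $\tilde f$, and a sparse combination with $\|\tilde f\|_p\le\|f\|_p$ can be pointwise as large as roughly $\sqrt{\tau_{s,p}(x)/p(x)}\cdot\|f\|_p$ at a point $x$ in the tail of $p$ -- this Tur\'{a}n-type spiking is the entire subject of the paper. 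The correct residual bound (which the paper proves) is $p(x)\,|f(x)-\tilde f(x)|^2\le 4\lambda\smu\cdot\tau_{s,p}(x)\cdot\|w\|_q^2$, i.e.\ it necessarily carries the unbounded factor $\tau_{s,p}(x)/p(x)$. Your "augmented column subset selection" cannot remove this factor: approximating the evaluation functional $w\mapsto[\Fmu^* w](x)$ to accuracy independent of $x$ would amount to bounding that functional's leverage by a constant, which is false. A second, structural problem: even granting the correct residual bound, your assembly fails. Because you interpolate $f$ exactly at $x$ and only then apply the Fourier sparse leverage bound to $\tilde f_x$, the term $|f(x)-\tilde f(x)|^2$ gets multiplied by $\tau_{s,p}(x)$ a second time, producing a $\tau_{s,p}(x)^2/p(x)$ contribution rather than the single factor $\tau_{s,p}(x)$ in the theorem. (Your Step 1 also overclaims: frequency subset selection gives a trace-type guarantee, which per $w$ yields $\|f-\tilde f\|_p^2\le O(\smu)\,\lambda\|w\|_q^2$, not $\tfrac{1}{36}\lambda\|w\|_q^2$; this is a minor constant-tracking issue but worth noting.)

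The paper's proof avoids both problems by splitting $|f(x)|^2\le 2|[\bv{C}_s\bv{Z}w](x)|^2+2|f(x)-[\bv{C}_s\bv{Z}w](x)|^2$ \emph{before} applying any leverage bound, matching the first term against $\|f\|_p^2$ in the denominator (via $\bv{Z}^*\bv{C}_s^*\bv{C}_s\bv{Z}\preceq\Fmu\Fmu^*$ and the fact that $\bv{C}_s\bv{Z}w$ is Fourier sparse) and the second against $\lambda\|w\|_q^2$. The residual term is then handled by Cauchy--Schwarz in $L_2(q)$, $|f(x)-[\bv{C}_s\bv{Z}w](x)|^2\le\|f_x-\bv{Z}^*\bv{c}_x\|_q^2\,\|w\|_q^2$, followed by the key observation you are missing: for each fixed frequency $\eta$, the residual $e^{2\pi i\eta(\cdot)}-\bv{C}_s\bv{z}_\eta$ of approximating a \emph{pure exponential} is itself an $s$-sparse exponential sum, so its value at $x$ is controlled by $\tau_{s,p}(x)/p(x)$ times its $L_2(p)$ norm; integrating over $\eta\sim q$ converts this into the trace error $4\lambda\smu$ from the frequency-selection theorem. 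If you want to salvage your route, you must replace your hoped-for residual bound with this $\tau_{s,p}(x)$-weighted one and restructure the decomposition so the leverage factor appears only once.
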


With Theorem \ref{thm:css2} in hand, we obtain the following result for solving Problem \ref{prob:unformal_interp}, which is stated in more detail in Appendix \ref{app:cor}.

\begin{corollary}[Active Function Fitting -- Gaussian or Exponential Density]\label{cor:gaussActive}
Consider the active regression set up of Problem \ref{prob:unformal_interp}.
Let $p$ be the Gaussian density $p(x) = \frac{1}{\sigma  \sqrt{2\pi}} e^{-x^2/(2\sigma^2)}$.

 For any frequency density $q$ and $\lambda > 0$, let $\smu$ be the $\lambda$-statistical dimension of $\Kmu$. Let $s = \lceil 36 \smu \rceil +1$ and let $\bar \tau_{s,p}(x)$ be the leverage score bound of Theorem \ref{thm:gaussian}.
	Let 
	$m = c \cdot \smu^{5/2} \cdot \left(\log \smu  + 1/\delta\right)$  for a sufficiently large constant $c$. Let $x_1,\ldots,x_m$ be time points sampled independently according to the density proportional to $\bar \tau_{s,p}(x)$. There is a polynomial time solvable kernel ridge regression problem on $x_1,\ldots,x_m$ whose solution $\tilde y$ satisfies with probability $\ge 1-\delta$:
	\begin{align}
	\label{12again2}
		\|y - \tilde y\|_p^2 = 8 \norm{n}_p^2 + 6 \lambda \|{g}\|_q^2.
	\end{align}
	
	An identical bound holds when $p$ is the Laplacian density $p(x) = \frac{1}{\sqrt{2}\sigma} e^{-|x| \sqrt{2}/\sigma}$, $\bar \tau_{s,p}(x)$ is the leverage score bound of Theorem \ref{thm:exp}, and $m = c \cdot \smu^{2} \cdot \left(\log \smu  + 1/\delta\right)$.
\end{corollary}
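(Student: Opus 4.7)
The plan is to chain three ingredients: Theorem \ref{thm:css2}, which bounds the kernel operator ridge leverage function $\tmu(x)$ pointwise by $(2+8\smu)\cdot \tau_{s,p}(x)$ with $s = \lceil 36 \smu\rceil + 1$; Theorems \ref{thm:gaussian} and \ref{thm:exp}, which give explicit closed-form upper bounds $\bar\tau_{s,p}(x)$ with controlled total mass; and the leverage-score sampling machinery from \cite{AvronKapralovMusco:2019} (formalized in Appendix \ref{app:active}) that reduces Problem \ref{prob:unformal_interp} to an approximately solved kernel ridge regression via Claim \ref{claim:regression_reduction}.

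First I would combine Theorem \ref{thm:css2} with the relevant Fourier sparse bound to define
\[
\ttmu(x) \;\eqdef\; (2 + 8\smu)\cdot \bar\tau_{s,p}(x),
\]
which pointwise dominates $\tmu$ on all of $\R$. Integrating, for $p$ Gaussian, Theorem \ref{thm:gaussian} gives $\int \bar\tau_{s,p}(x)\,dx = O(s^{3/2}) = O(\smu^{3/2})$, so the total mass is $T \eqdef \int \ttmu(x)\,dx = O(\smu^{5/2})$. For $p$ Laplacian, Theorem \ref{thm:exp} gives $\int \bar\tau_{s,p}(x)\,dx = O(s\log s) = O(\smu\log\smu)$, so $T = O(\smu^{2}\log\smu)$.

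Next I would invoke the kernel ridge leverage score sampling guarantee underlying \cite{AvronKapralovMusco:2019}: drawing $m = O\bigl(T(\log T + 1/\delta)\bigr)$ time points $x_1,\ldots,x_m$ independently from the density $\ttmu/T$ and solving a weighted kernel ridge regression on the queried values $y(x_i)+n(x_i)$ with regularization $\lambda$ and inverse-probability weights produces, with probability at least $1-\delta$, a frequency-domain function $\tilde g \in L_2(q)$ satisfying \eqref{eq:approx_regress_1} with an absolute constant $C$. Substituting the two mass bounds for $T$ recovers the two sample complexities stated in the corollary. Feeding this $\tilde g$ through Claim \ref{claim:regression_reduction} (whose proof, as noted in the text, is a triangle inequality valid for arbitrary $p$) then yields $\|y - \Fmu^*\tilde g\|_p^2 \le 2C\lambda\|g\|_q^2 + 2(C+1)\|n\|_p^2$; tuning the constant hidden in $c$ so that $C = 3$ gives precisely the constants $6$ and $8$ appearing in \eqref{12again2}, and we set $\tilde y \eqdef \Fmu^*\tilde g$.

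The argument is thus a plug-and-play composition once Theorem \ref{thm:css2} is established. The main obstacle is not the high-level chain but lifting the matrix-Chernoff based kernel ridge regression sampling theorem of \cite{AvronKapralovMusco:2019} from the uniform-$p$ setting in which it is originally stated to an arbitrary data density $p$: one must verify that their operator concentration argument for approximating $\Kmu + \lambda\mathcal{I}$ depends only on the total ridge-leverage mass $T$ and on pointwise domination $\ttmu \ge \tmu$, not on any structural feature of $p$. This generalization, together with the resulting polynomial time kernel ridge regression problem whose empirical risk approximates the population risk in \eqref{eq:approx_regress_1}, is what Appendix \ref{app:active} will spell out, after which the corollary follows immediately.
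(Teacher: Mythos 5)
Your proposal is correct and follows essentially the same route as the paper: define $\ttmu(x) = (2+8\smu)\cdot\bar\tau_{s,p}(x)$ via Theorem \ref{thm:css2} and Theorems \ref{thm:gaussian}/\ref{thm:exp}, integrate to get $T = O(\smu^{5/2})$ (resp.\ $O(\smu^2\log\smu)$), feed this into the leverage-score sampling guarantee of Theorem \ref{thm:baseSampling} to get a $C=3$ approximate solution of the kernel ridge regression objective, and then apply Claim \ref{claim:regression_reduction} to obtain the constants $8$ and $6$. The generalization of the \cite{AvronKapralovMusco:2019} sampling theorem from uniform to arbitrary $p$ that you flag as the main obstacle is handled in the paper exactly as you suggest — by noting that the concentration argument depends only on pointwise domination and total leverage mass, not on the structure of $p$.
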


\paragraph{Universal Sampling.} We remark that the sampling distribution of Corollary \ref{cor:gaussActive} is \emph{independent of the frequency density $q$}. That is, we can fit a wide range of Fourier constrained functions (bandlimited, multiband, Gaussian process with any underlying kernel, etc.) with a single \emph{universal} sampling scheme. This is surprising and reflects the universality of Fourier sparse functions in approximating all of these function classes.

\vspace{-.5em}
\paragraph{Achieving Optimal Sample Complexity.} The sample complexity bounds of Corollary \ref{cor:gaussActive} are polynomial in $\smu$ rather than linear, as is essentially optimal. We note that a near linear bound can be obtained by subsampling the kernel ridge regression problem on $x_1,\ldots,x_m$ using standard finite matrix leverage score sampling  techniques, discussed in more detail in Appendix \ref{app:cor}. 

It may be possible to avoid this second round of sampling by improving our bounds on the kernel leverage scores (Def. \ref{def:ridgeScores}). In \cite{AvronKapralovMusco:2019} sample complexity $O(\smu \log \smu)$ is shown when $p$ is the uniform density over an interval. This proof starts from a bound essentially equivalent to Theorem \ref{thm:css2}. It then tightens the bound via a shifting argument that bounds the kernel leverage scores of $x$ near the edge of the interval with the leverage scores of $x$ closer to the center. It is not immediately clear how to extend such an argument to the case when $p$ is the Gaussian or Laplace density, but we believe that doing so may be possible. In general, we conjecture that a simple closed form leverage score bound achieving within a constant factor of the optimal sample complexity exists.

\section{Experimental Results}
\label{sec:experiments}
We now illustrate the potential of Fourier sparse leverage score bounds by empirically evaluating the modified random Fourier features (RFF) method of Section \ref{sec:oblivious}. We implement the method without the final JL projection, and use simplifications of the frequency distributions from Theorems \ref{thm:gaussian} and \ref{thm:exp}, which work well in experiments. For data in $\R^d$ for $d > 1$, we extend these distributions to their natural spherically symmetric versions. See Section \ref{app:experiments} for details and Figure \ref{fig:2d_vis} for a visualization.

\begin{figure}[h]
	\begin{subfigure}{.24\textwidth}
		\centering
		\includegraphics[width=\linewidth]{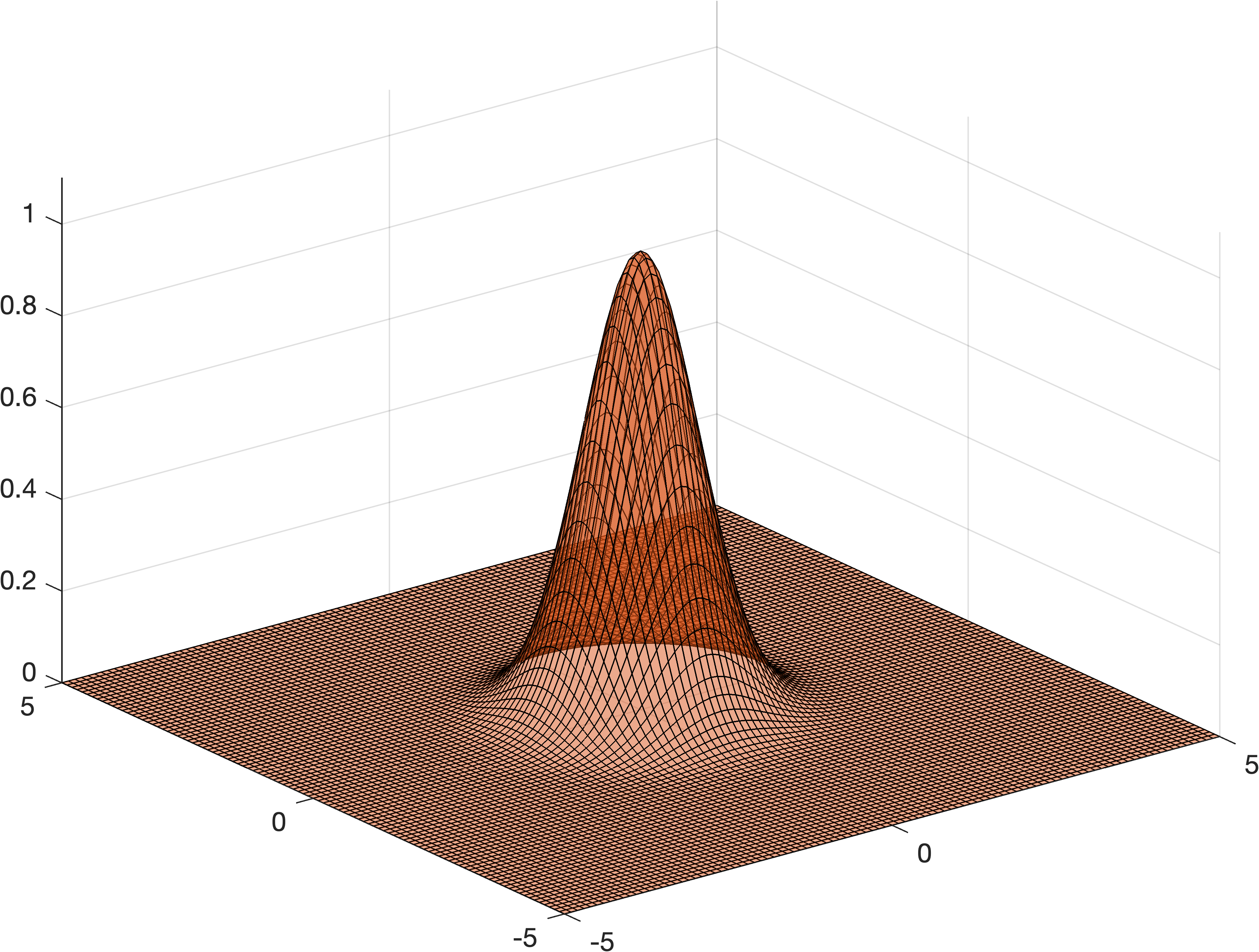}  
		\caption{Classical RFF Distribution, Gaussian kernel.}
	\end{subfigure}~
	\begin{subfigure}{.24\textwidth}
		\centering
		\includegraphics[width=\linewidth]{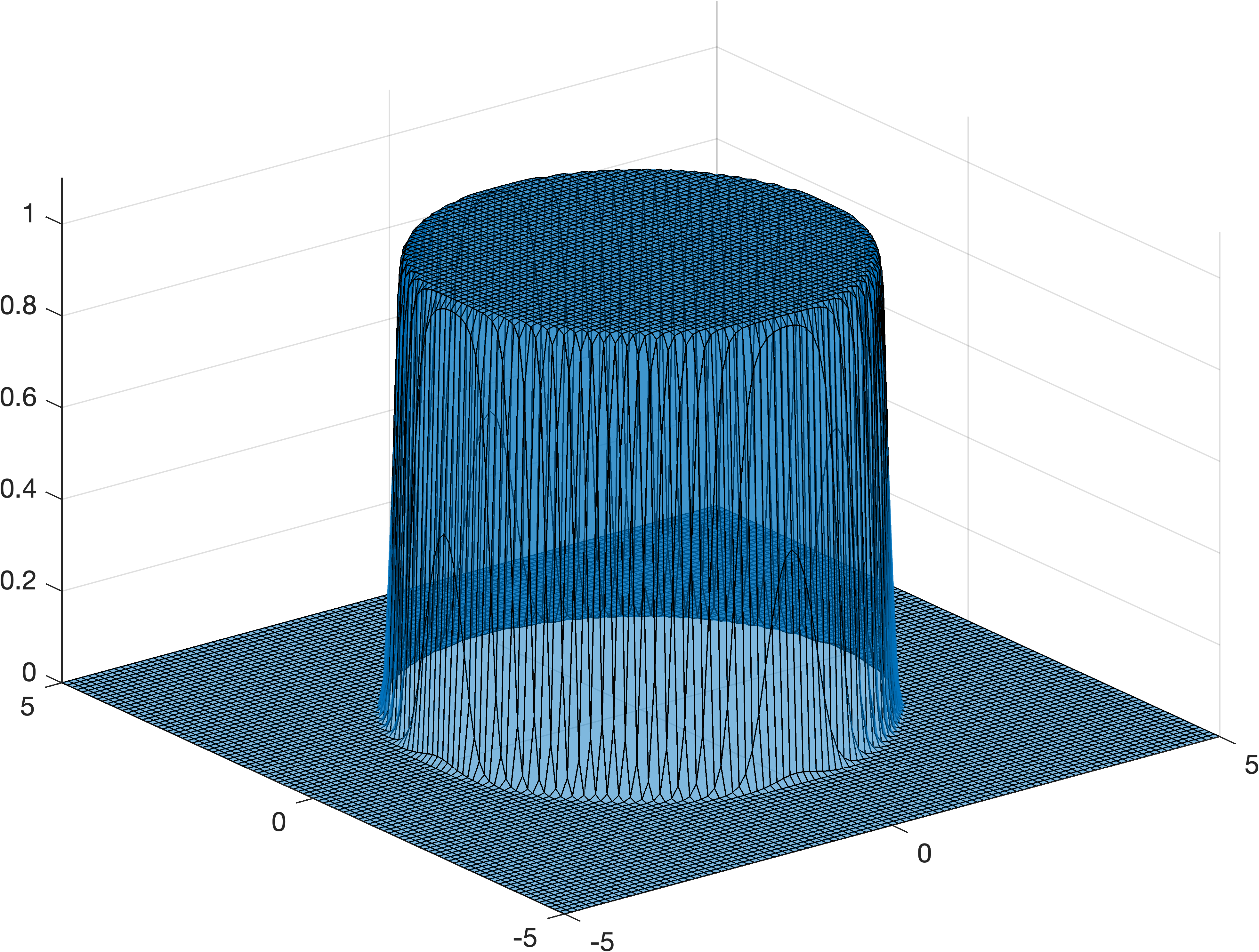} 
		\caption{Modified RFF Distribution, Gaussian kernel.}
	\end{subfigure}~
	\begin{subfigure}{.24\textwidth}
	\centering
	\includegraphics[width=\linewidth]{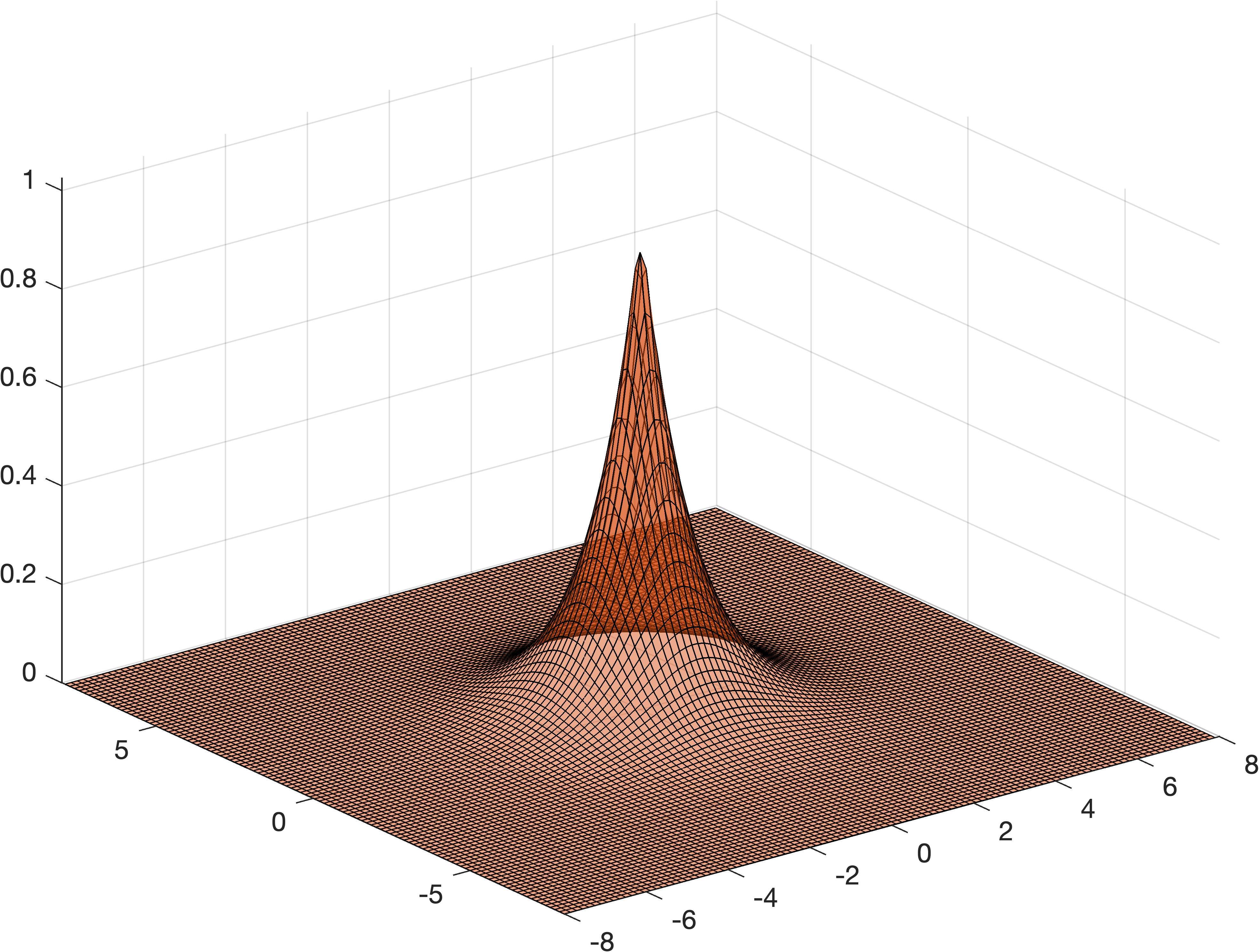} 
			\caption{Classical RFF Distribution, Cauchy kernel.}
\end{subfigure}~
	\begin{subfigure}{.24\textwidth}
	\centering
	\includegraphics[width=\linewidth]{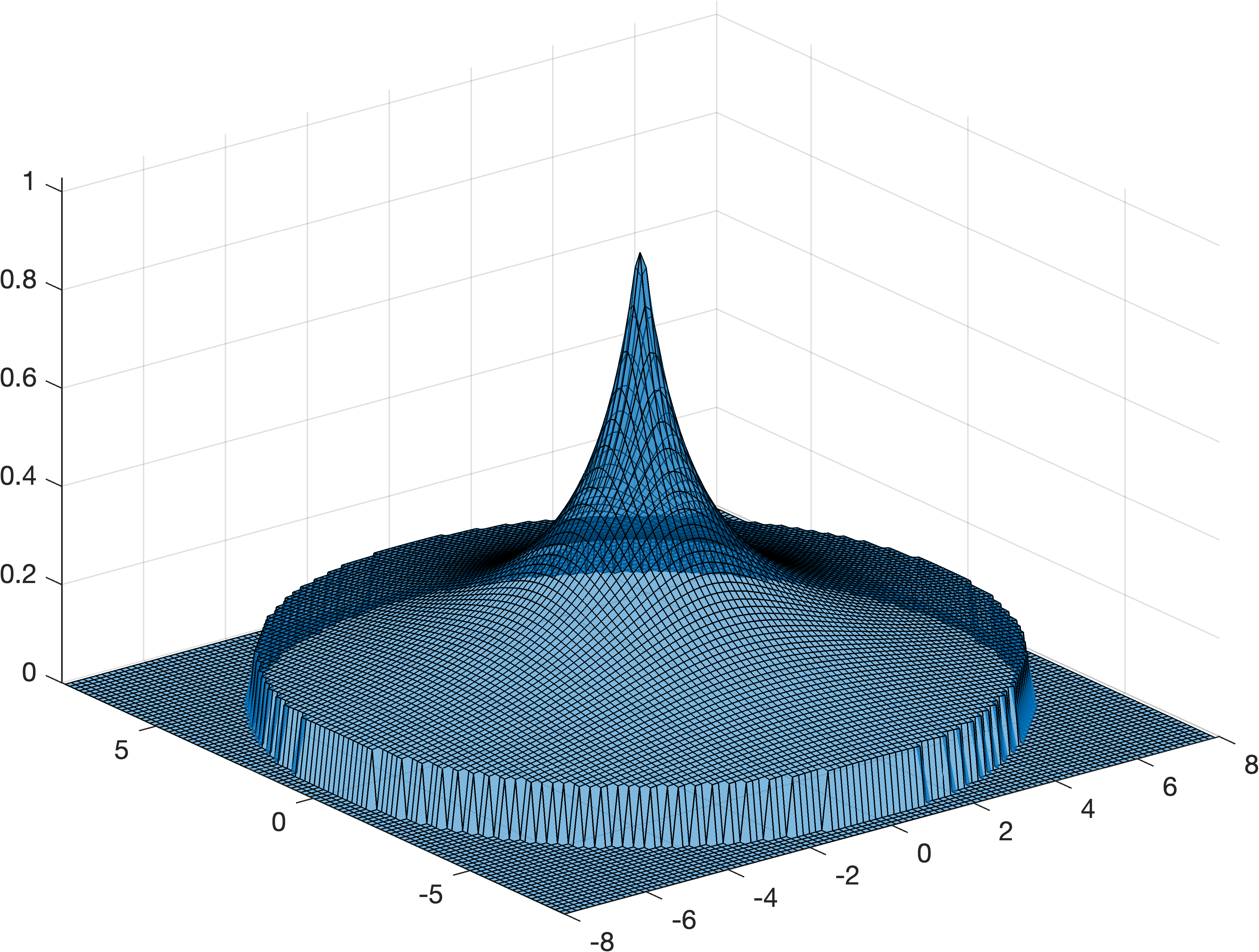} 
	\caption{Modified RFF Distribution, Cauchy kernel.}
\end{subfigure}
	\caption[]{Distributions used to sample random Fourier features frequencies $\eta_1, \ldots, \eta_m$. The ``Classical RFF'' distributions are from the original paper by Rahimi, Recht \cite{RahimiRecht:2007}. The ``Modified RFF'' distributions are simplified versions of the leverage score upper bounds from Thoerems \ref{thm:gaussian} and \ref{thm:exp}. Notably, our modified distributions sample \emph{high frequencies} (i.e. large $\ell_2$ norm) with higher probability than Classical RFF, leading to theoretical and empirical improvements in kernel approximation.}
	\label{fig:2d_vis}
\end{figure}
We compare our method against the classical RFF method on a kernel ridge regression problem involving precipitation data from Slovakia \cite{NetelerMitasova:2013}, a benchmark GIS data set, which is representative of many 2D function interpolation problems. 
See Figure \ref{fig:data_vis} for a description.
The regression solution requires computing $(\bv{K} + \lambda \bv I)^{-1}\bv{y}$, where $\bv{y}$ is a vector of training data. Doing so with a direct method is slow since $\bv{K}$ is large and dense, so an iterative solver is necessary. However, when cross validation is used to choose a kernel width $\sigma$ and regularization parameter $\lambda$, the optimal choices lead to a poorly conditioned system, which leads to slow convergence. 

\begin{figure}[h]
	\begin{subfigure}{.42\textwidth}
		\centering
		\includegraphics[width=\linewidth]{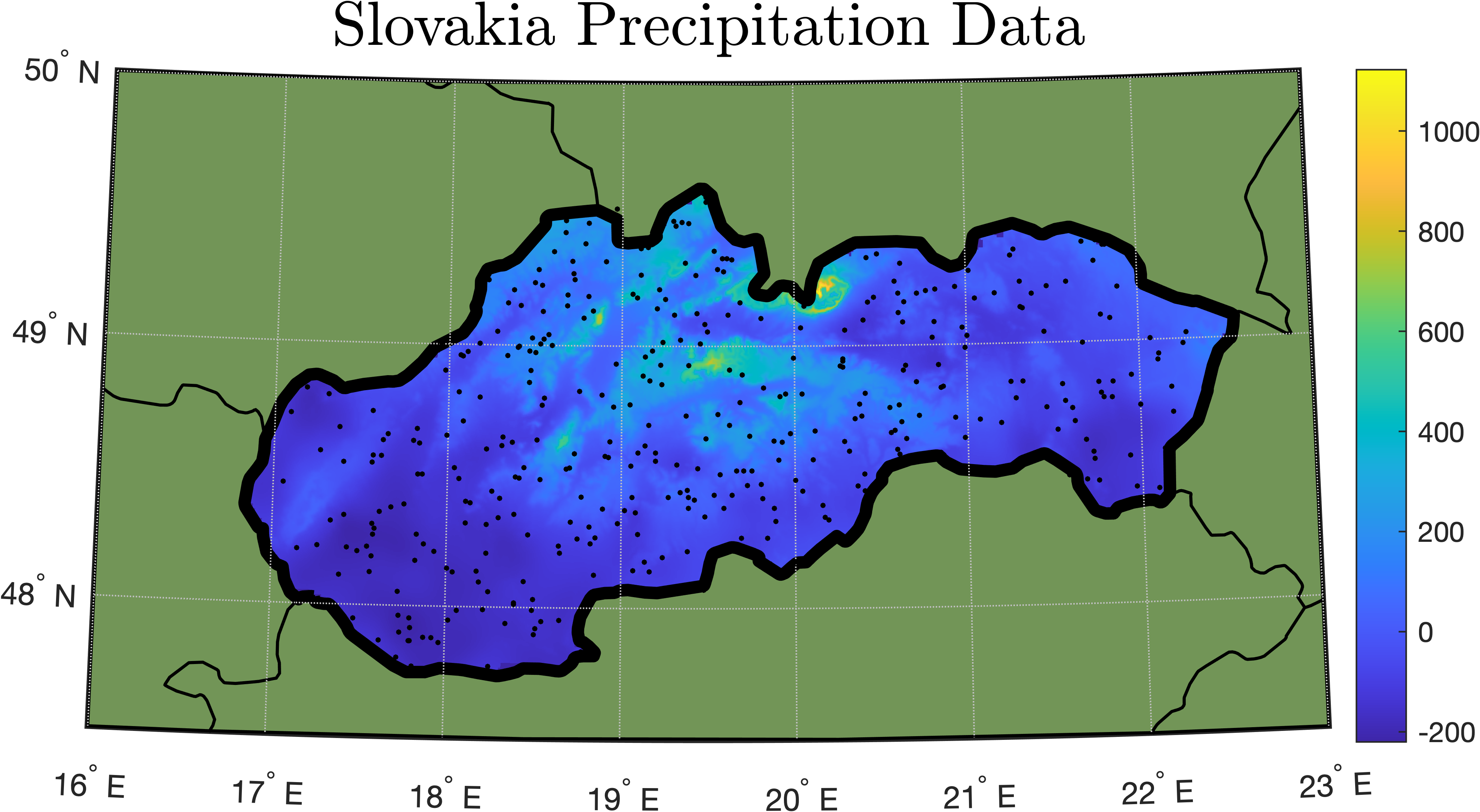}  
				\caption{Original precipitation data with samples.}
	\end{subfigure}
	\hfill
	\begin{subfigure}{.42\textwidth}
		\centering
		\includegraphics[width=\linewidth]{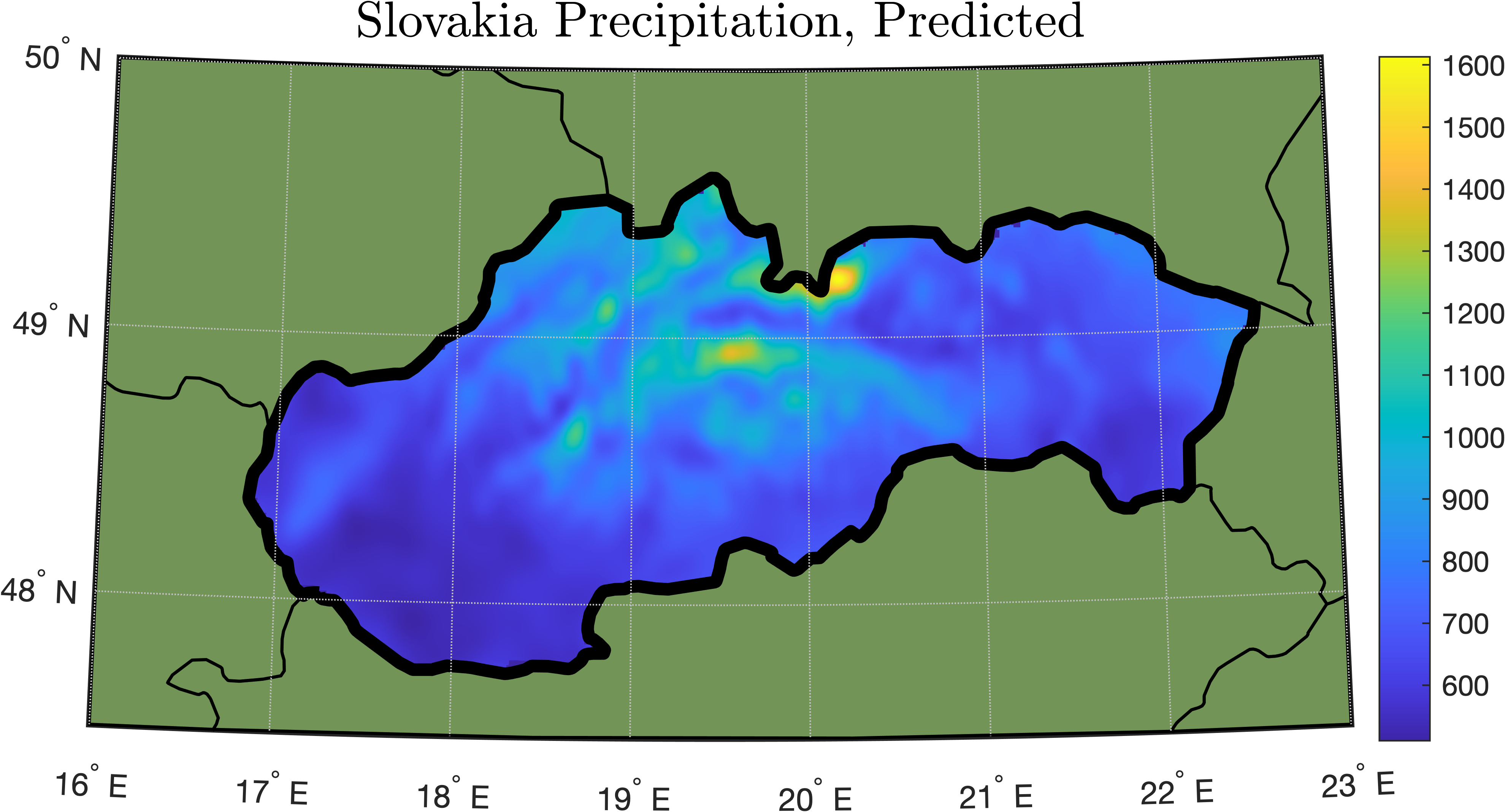} 
				\caption{Interpolated precipitation data.}
	\end{subfigure}
	\caption{The left image shows precipitation data for Slovakia in mm/year 
		Data was constructed using experimental methods and an advanced runoff analysis and is available for
		at n = 196k locations on a regular lat/long grid \cite{NetelerMitasova:2013}. Our goal is to approximate this precipitation function based on 6400 training samples from randomly selected locations (visualized as black dots). The right image shows the prediction given by a kernel regression model with Gaussian kernel, which was computed efficiently using our modified random Fourier method along with a preconditioned CG method.}
	\label{fig:data_vis}
\end{figure}

There are two ways to solve the problem faster using a kernel approximation: either $\tilde{\bv{K}}$ can be used in place of $\bv{K}$ when solving $(\bv{\tilde K} + \lambda \bv I)^{-1}\bv{y}$, or it can be used as a preconditioner to accelerate the iterative solution of $(\bv{K} + \lambda \bv I)^{-1}\bv{y}$. We explore the later approach because \cite{AvronKapralovMusco:2017} already empirically shows the effectiveness of the former. While their modified RFF algorihm is different than ours \emph{in theory}, we both make similar practical simplifications (see Section \ref{app:experiments}), which lead our empirically tested methods to be almost identical for the Gaussian kernel. Results on preconditioning are shown in Figure \ref{fig:regress_error}. Our modified RFF method leads to substantially faster convergence for a given number of random feature samples, which in turn leads to better downstream prediction error.

The superior performance of the modified RFF method can be explained theoretically: our method is designed to target the spectral approximation guarantee of Definition \ref{def:spectral_gaurantee}, which \emph{is guaranteed to ensure good preconditioning} for $\bv{K} + \lambda \bv{I}$ \cite{AvronKapralovMusco:2017}. On the other hand, the classical RFF method actually achieves better error than our method in other metrics like $\|\bv{K} - \tilde{\bv{K}}\|_2$, both in theory \cite{Tropp:2015} and empirically (Figure \ref{fig:regress_error}). However,  for preconditioning, such bounds  will not necessarily ensure fast convergence. The key observation is that the spectral guarantee requires better approximation in the \emph{small} eigenspaces of $\bv{K}$. By more aggressively sampling higher frequencies that align with these directions (see Figure \ref{fig:2d_vis}) the modified method obtains a better approximation.

\begin{figure}[h]
	\begin{subfigure}{.44\textwidth}
		\centering
		\includegraphics[width=\linewidth]{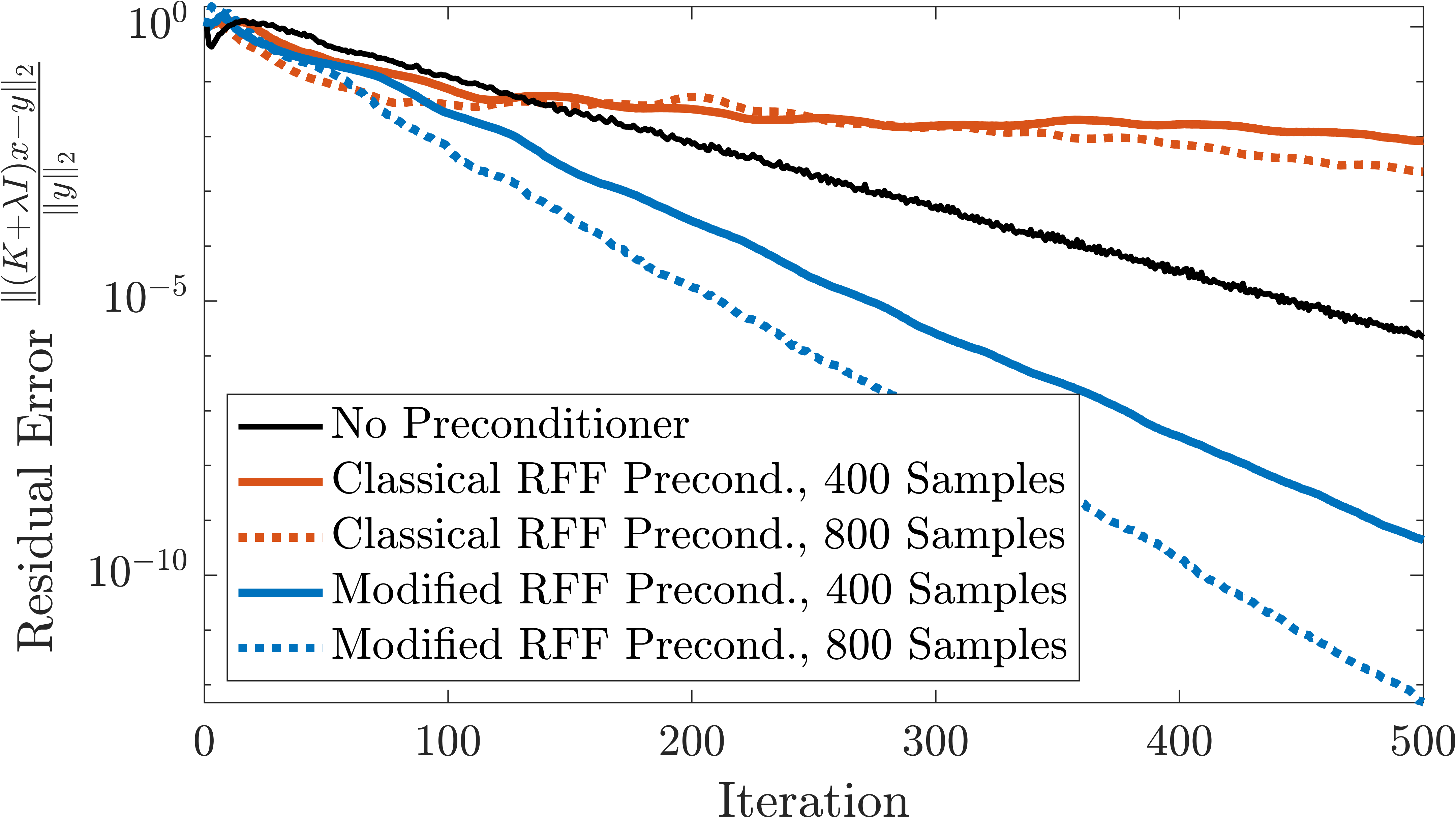}  
		\caption{Preconditioned CG Convergence.}
	\end{subfigure}\hfill
	\begin{subfigure}{.44\textwidth}
	\centering
	\includegraphics[width=\linewidth]{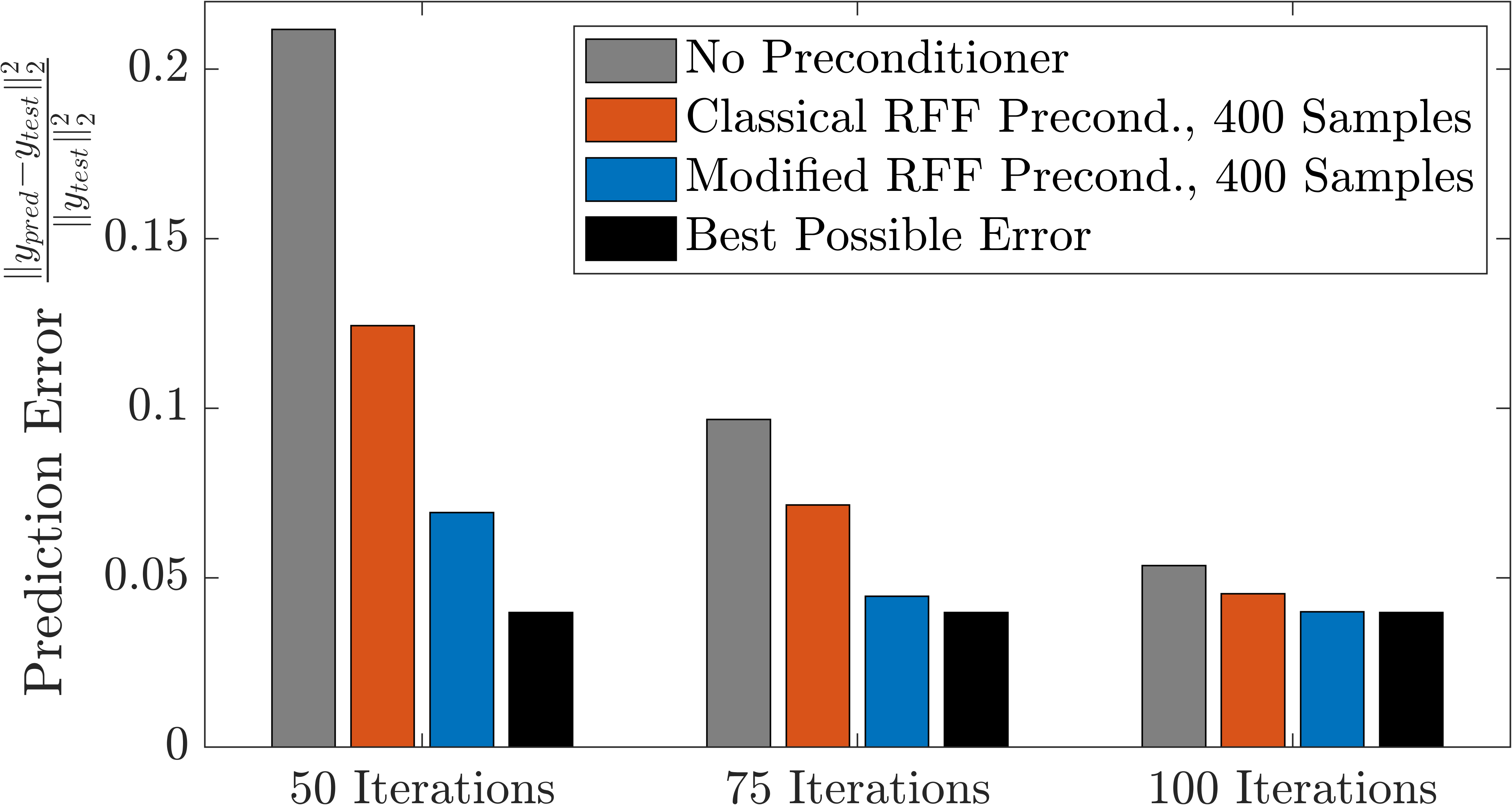} 
		\caption{Resulting test error for kernel regression.}
\end{subfigure}
\caption[]{The left plot shows residual convergence when solving $\min_{\bv{x}}\|(\bv{K} + \lambda \bv{I})\bv{x} - \bv{y}\|$ using PCG. Baseline convergence (the black line) is slow, so we preconditioned with both a classical RFF approximation and our modified RFF approximation. Classical RFF accelerates convergence in the high error regime, but slows convergence eventually. Our method significantly accelerates convergence, with better performance as the number of RFF samples increases. On the right, we show that better system solve error leads to better downstream predictions. The black bar represents the relative error of a prediction computed by exactly inverting $\bv{K} + \lambda \bv{I}$. An approximate solution obtained using our preconditioner approaches this ideal error more rapidly than the other approaches. }
\label{fig:regress_error}
\end{figure}

%
\begin{figure}[h]
	\begin{subfigure}{.44\textwidth}
		\centering
		\includegraphics[width=\linewidth]{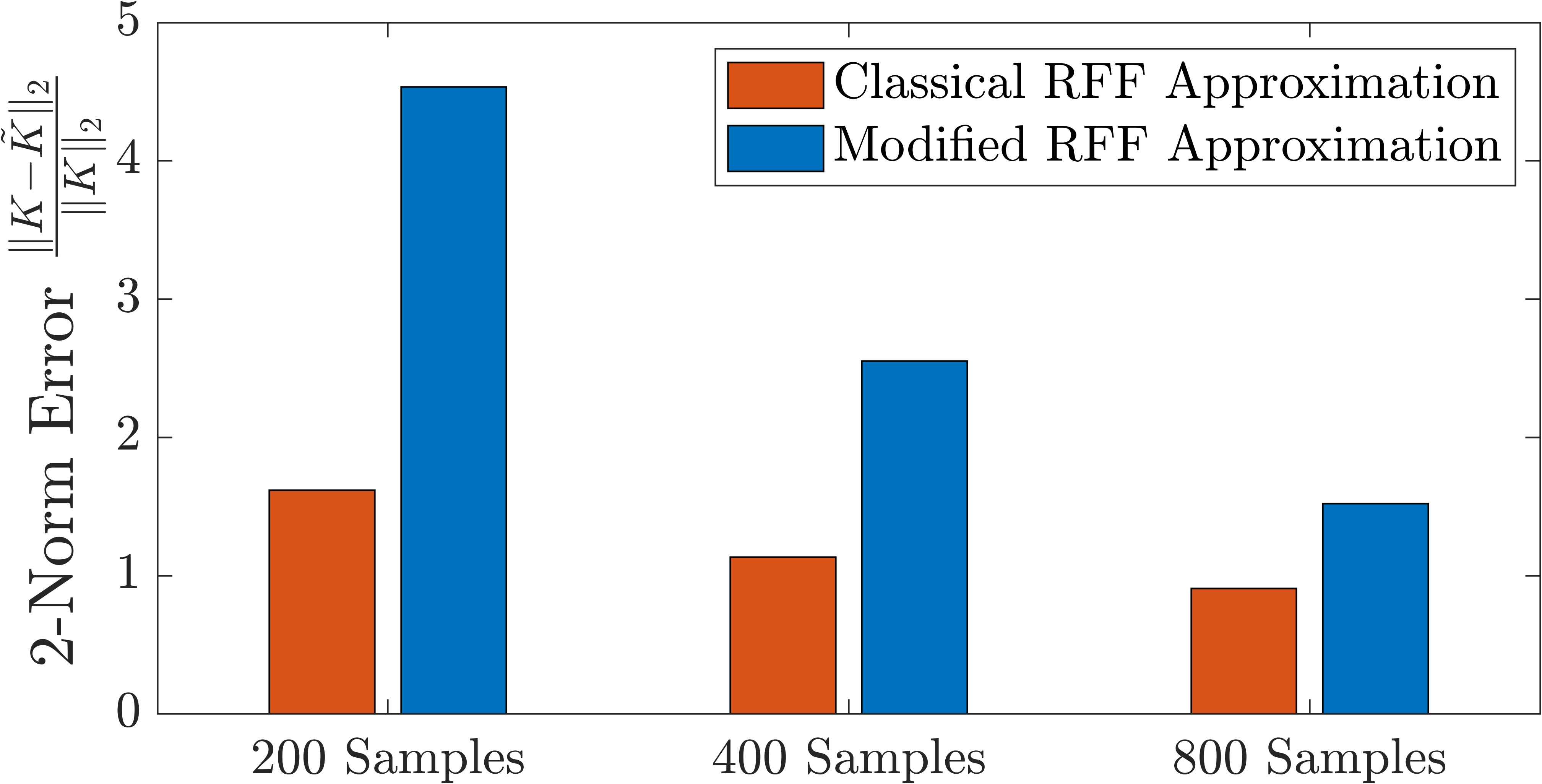}  
		\caption{Spectral Norm Error.}
	\end{subfigure}\hfill
	\begin{subfigure}{.44\textwidth}
		\centering
		\includegraphics[width=\linewidth]{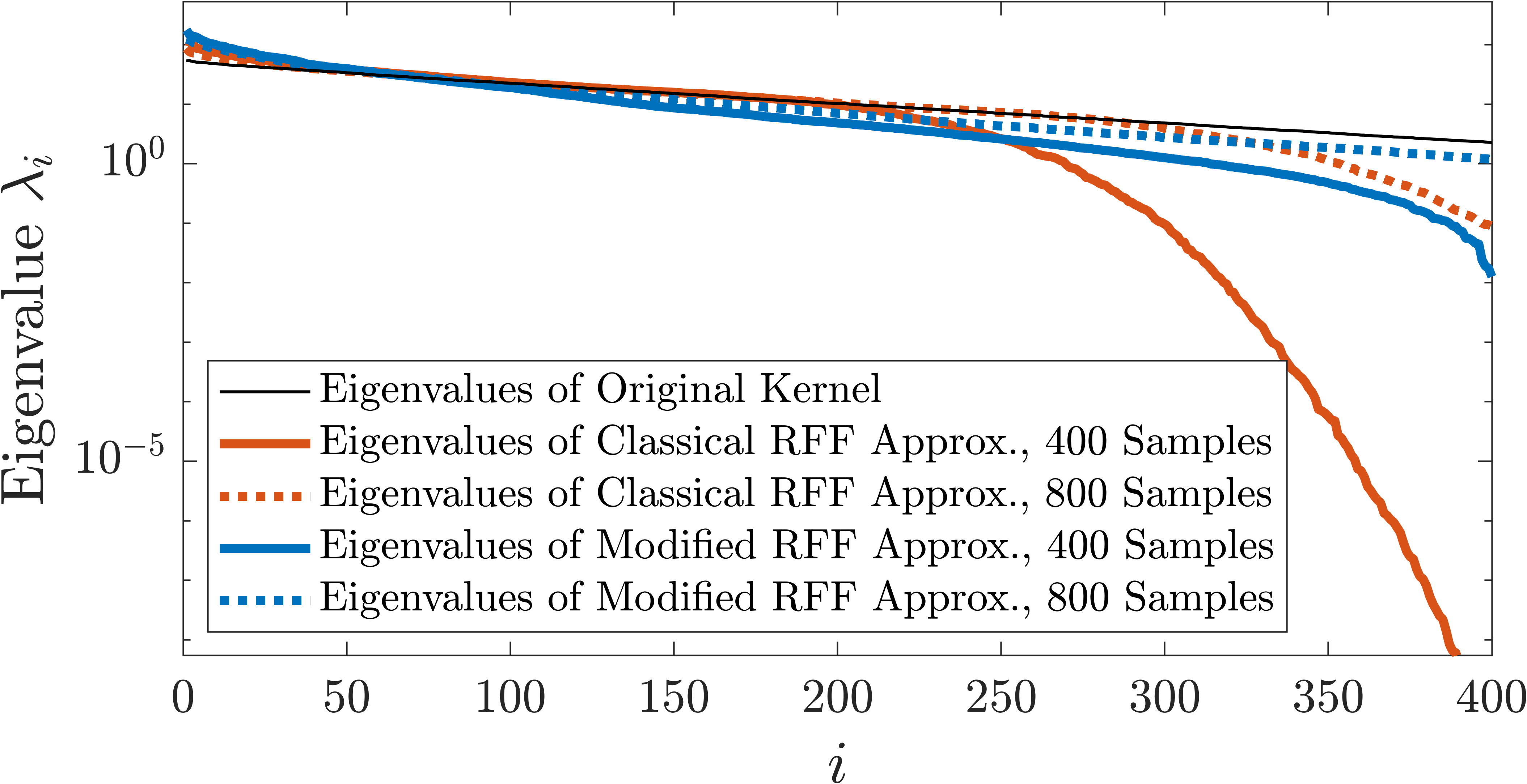} 
		\caption{Eigenvalues Comparison.}
	\end{subfigure}
	\caption[]{The left plot compares relative spectral norm errors for randomized kernel approximations for a Gaussian kernel matrix $\bv{K}$. The classical RFF method actually has \emph{better} error. However, as shown in the right plot, the modified method better approximates the small eigenvalues of $\bv{K}$, which is necessary for effective preconditioning as it leads to a better relatively condition number.}
	\label{fig:preconditioning}
	\vspace{-1em}
\end{figure}

\vspace{-.5em}
\subsection{Further details for experiments}
\label{app:experiments}

We now describe some details of the modified random Fourier features sampling algorithm and preconditioning approach used in our empirical evaluation above.

\paragraph{Details of Sampling.}
In our results, we employ simplified versions of the leverage score upper bounds from  Theorems \ref{thm:gaussian} and \ref{thm:exp}. In particular, in both of these theorems, the leverage score upper bound distributions are piecewise, following a different functions for frequencies above and below a certain cutoff $F$. $F$ equals $6\sqrt{2}\sigma \cdot  \sqrt{s}$ and  $9\sqrt{2} \sigma \cdot {s}$ in Theorems \ref{thm:gaussian} and \ref{thm:exp}, respectively.
The bulk of each distribution is on values of $|\eta| \leq F$, so we ignore the ``tail'' part of each distribution when sampling. This does not seem to significantly effect the experimental results. More over, using the empirical leverage score score distributions from Figure \ref{fig:upper_bound} as guidance, we used tighter values for $F$ than we were able to prove theoretically. For example, setting $F = 4\sigma$ seems sufficient to capture the bulk of the Fourier sparse leverage score distribution for the Gaussian measure, so this is the value we used in our experiments. I.e. samples of $\eta$ were drawn uniformly from the ball $\{\eta: \|\eta\|_2 < 4\sigma\}$.

When sampling we also use the same trick from \cite{RahimiRecht:2007} to achieve a real valued embedding, which makes it easier to work with the embedding downstream (e.g., when implementing the preconditioned solver). In particular, instead of including $C\cdot e^{-2 \pi i \eta^T x}$ in the embedding, where $C$ is the appropriate constant as in Definition \ref{def:rff}, we can include an entry equal of $C \cdot \cos(2 \pi\eta^T x + \beta)$ where $\beta$ is a uniform random variable from $[0,2\pi]$. It's not hard to check that the corresponding real valued embedding will still satisfy $\E[\bv{G}^*\bv{G}] = \bv{K}$, and experimentally, approximation quality does not appear to suffer. 

\paragraph{Details of Preconditioning.}
When solving $(\bv{K} + \lambda \bv I)^{-1}\bv{z}$ with a preconditioner, each iteration of the preconditioned solver requires 1) computing $(\bv{\tilde K} + \lambda \bv I)^{-1}\bv{z}$ for some vector $\bv{z}$ and 2) multiplying $\bv{K} + \lambda \bv{I}$ by a vector $\bv{w}$. The first step can be done efficiently whenever $\bv{\tilde K} = \bv{G}^*\bv{G}$ where $\bv{G} \in \C^{n\times m}$, which is the type of approximation we get from a random Fourier features method. In particular, let $\bv{G} = \bv{U}\bs{\Sigma}\bv{V}^T$ be $\bv{G}$'s singular value decomposition. Due to the simplification discussed above, $\bv{G}$ is always real-valued in our setting, and so is its SVD. We have $\bv{U}\in \R^{m\times m}$, $\bs{\Sigma}\in \R^{m\times m}$, and $\bv{V}\in \R^{n\times m}$. The SVD can be computed in $O(m^2n)$ time and more importantly, this operation is very fast when $\bv{G}$ fits in memory, which is often possible even when $\bv{K}\in \R^{n\times n}$ does not. So, for both classical RFF preconditioning and modified RFF preconditioning, we choose values for $m$ that allow for fast computation of the SVD, and compute the decomposition as a preprocessing step. 

Then, it is not hard to check that $(\bv{\tilde K} + \lambda \bv I)^{-1}\bv{z} = \bv{V}\left(\bs{\Sigma} + \lambda \bv{I}_{m\times m}\right)^{-1}\bv{V}^T \bv{z} + \frac{1}{\lambda}(\bv{z} - \bv{V}^T \bv{z})$, which can be computed in $O(mn)$ time. This is much faster than the cost of multiplying a vector by $\bv{K} + \lambda \bv{I}$, so the cost of preconditioning ends up being a lower order term in the solver complexity: it increases the cost of each iteration by just a small factor. 

\vspace{-.5em}
\section{Fourier Sparse Leverage Score Bounds -- Proofs}\label{app:bounds}

We now prove our main Fourier sparse leverage score bounds under the Gaussian and Laplace densities (Theorems \ref{thm:gaussian} and \ref{thm:exp}). When the minimum gap between frequencies in $f \in \mathcal{T}_s$ is bounded, we also give an improved bound based on Ingham's inequality.

For notation in this section, we let $\norm{f}_2^2 = \int_{x \in \R} |f(x)|^2 dx$  denote the $L_2$ norm of any complex valued function $f: \R \rightarrow \C$. 
We denote the $L_2$ norm over an interval by $\norm{f}_{[a,b]}^2 = \int_{a}^b |f(x)|^2 dx$ and the $L_2$ norm under any density $p$ over $\R$ as $\norm{f}_p^2 = \int_{x \in \R} |f(x)|^2 \cdot p(x) \ d x$. 

\subsection{Foundational bounds}

We build on a number of existing bounds on the uniform density leverage scores and related concentration properties of an extended class of Fourier sparse functions with possibly complex frequencies. This class and its variants have been studied extensively, e.g., in  \cite{Turan:1984,Nazarov:1993,BorweinErdelyi:1995,BorweinTamas:1996,borwein2000pointwise,Kos:2008,lubinsky2015dirichlet,Erdelyi:2017}.
\begin{align}\label{eq:calE}
\mathcal{E}_s = \left \{f: f(x) = \sum_{j=1}^s a_j e^{\lambda_j x}, a_j \in \C, \lambda_j \in \C \right \}.
\end{align}
We also consider the subclasses where $\mathcal{E}_s^+$ and $\mathcal{E}_s^-$,  which are defined analogously to $\mathcal{E}_s$ but with frequencies $\lambda_j \in \C$ required to have non-negative (respectively, non-positive) real components. Note that our main class of interest $\mathcal T_s$ defined in \eqref{eq:ts} is contained in all three of these extended classes. 

We first use a bound on the uniform density leverage score at any point $x$ on an interval, in terms of its distance from the edge of the interval.
\begin{lemma}\label{lem:unif} For any $a,b \in \R$ with $a < b$, $x \in (a,b)$, and $f \in \mathcal{E}_s$ with $f \not\equiv 0$:
\begin{align*}
\frac{|f(x)|^2}{\norm{f}_{[a, b]}^2} \le \frac{s}{\min(x-a,b-x)}.
\end{align*} 
\end{lemma}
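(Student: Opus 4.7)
My plan is to reduce the bound to a clean canonical form via translation, symmetrization, and scaling, and then invoke a standard Nikolskii-type inequality for the extended exponential sum class $\mathcal{E}_s$ on a symmetric interval around the origin.

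\textbf{Step 1 (Reduction to a symmetric interval around the query point).} Without loss of generality, assume $x-a \le b-x$, so that $L := \min(x-a, b-x) = x-a$ and the interval $[x-L,\, x+L]$ lies inside $[a,b]$. Since $|f|^2 \ge 0$, we have $\|f\|_{[x-L,\, x+L]}^2 \le \|f\|_{[a,b]}^2$, so it suffices to prove
\[
|f(x)|^2 \;\le\; \frac{s}{L}\,\|f\|_{[x-L,\, x+L]}^2 \qquad \text{for all } f \in \mathcal{E}_s.
\]
Next I translate: the map $t \mapsto f(x+t)$ sends $\mathcal{E}_s$ to itself (each coefficient $a_j$ is replaced by $a_j e^{i\lambda_j x}$, while the frequencies $\lambda_j \in \C$ are unchanged), so I may assume $x=0$ and the interval is $[-L, L]$.

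\textbf{Step 2 (Scaling to the canonical interval $[-1,1]$).} The rescaling $g(u) := f(Lu)$ again preserves $\mathcal{E}_s$ (frequencies $\lambda_j$ are replaced by $L\lambda_j$, still complex), and a change of variables gives $\|f\|_{[-L,L]}^2 = L\,\|g\|_{[-1,1]}^2$. Substituting, the bound becomes the scale-free statement
\[
|g(0)|^2 \;\le\; s\,\|g\|_{[-1,1]}^2 \qquad \text{for all } g \in \mathcal{E}_s.
\]

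\textbf{Step 3 (Canonical inequality on $[-1,1]$).} This is a pointwise Nikolskii-type inequality for exponential sums in $\mathcal{E}_s$, i.e.\ a Christoffel-function bound for the class of $s$-sparse generalized exponentials with \emph{complex} frequencies. It is precisely the sort of bound developed in the line of work I cite (\cite{Turan:1984,BorweinErdelyi:1995,BorweinErdelyi:2006,Erdelyi:2017}); the key point is that for any fixed choice of complex frequencies $\lambda_1,\ldots,\lambda_s$, the subspace $\mathrm{span}\{e^{i\lambda_j t}\}_{j=1}^s$ is $s$-dimensional, and a Christoffel/duality argument bounds the leverage score at $0$ under Lebesgue measure on $[-1,1]$ by $s$ uniformly in the choice of frequencies. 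Combined with Steps 1--2, this yields $|f(x)|^2 \le \tfrac{s}{L}\|f\|_{[a,b]}^2$, which is the claimed inequality.

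\textbf{Main obstacle.} The delicate part is Step 3: the constant $s$ must be uniform over \emph{all} choices of complex frequencies $\lambda_j$, including configurations in which some $\lambda_j$ have large imaginary part (so that the individual exponentials grow or decay rapidly on $[-1,1]$). A compactness/limit argument is needed to handle the closure of the frequency space, and to confirm that the worst case for the Christoffel function indeed occurs within the regime where the naive bound of $s$ is attained. Once this uniform bound is in hand on the canonical interval, the geometric reduction in Steps 1--2 is routine.
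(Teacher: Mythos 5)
Your proposal is correct and follows essentially the same route as the paper: reduce to the symmetric interval of half-width $\delta=\min(x-a,b-x)$ about $x$, translate and rescale to $[-1,1]$ (the paper does both at once via $g(z)=f(x-\delta z)$), and then invoke the canonical bound $|g(0)|^2\le s\,\norm{g}_{[-1,1]}^2$ for $g\in\mathcal{E}_s$, which the paper also takes as a black box from equation (3) of \cite{BorweinErdelyi:2006}. Your ``main obstacle'' in Step 3 is exactly what that cited result resolves, so nothing further is needed.
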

Lemma \ref{lem:unif} is stated, up to a constant factor $2$ in Theorem 7.1 \cite{Erdelyi:2017}. We prove it here for completeness and improve this constant.
\begin{proof}
It is shown in equation (3) of \cite{BorweinErdelyi:2006} that for any $g \in \mathcal{E}_s$ with $g \not\equiv 0$,
\begin{align}\label{eq:borwein}
\frac{|g(0)|^2}{\norm{g}_{[-1,1]}^2} \le s.
\end{align}
%
For $x \in (a,b)$, let $\delta = \min(x-a,b-x)$ and $g(z) = f \left ( x - \delta \cdot z\right )$. 
Note that if $f \in \mathcal E_s$ and $f \not\equiv 0$, we have $g \in \mathcal E_s$ and $g \not\equiv 0$. Additionally, we have $g(0) = f(x)$ and $\norm{f}_{[a,b]}^2 \ge \norm{f}_{[x-\delta,x+\delta]}^2 = \delta \cdot \norm{g}_{[-1,1]}^2$. Applying \eqref{eq:borwein} we then have:
%
\begin{align*}
\frac{|f(x)|^2}{\norm{f}_{[a,b]}^2} \le \frac{|g(0)|^2}{\delta \cdot \norm{g}_{[-1,1]}^2} \le \frac{s}{\delta},
\end{align*}
which completes the proof.
\end{proof}
We note that Lemma \ref{lem:unif} can be combined with Lemma 3.2 of \cite{Denisov:2016}, which tightens bounds proven in \cite{Erdelyi:2017} and \cite{Kos:2008} to give the following bound for the uniform density leverage scores:  
\begin{corollary}[Uniform Density Leverage Score Bound]\label{thm:unif}
	Consider the uniform density $u(x) = \frac{1}{2\sigma}$ for $x \in [-\sigma,\sigma]$, $u(x) = 0$ otherwise, and let
	\begin{align*}
	\bar \tau_{s,z}(x) = \begin{cases} 
	\frac{s}{\sigma-|x|}\text{ for } |x| \leq \sigma (1-\frac{4}{\pi s}) \\
	\frac{\pi}{4\sigma}s^2 \text{ for } \sigma (1-\frac{4}{\pi s}) < |x| \leq \sigma \\
	0 \text{ for } |x| > \sigma
	\end{cases} 
	\end{align*}
	We have $\bar \tau_{s,u}(x) \ge \tau_{s,u}(x)$ for all $x \in \R$ and $\int_{-\infty}^{\infty}\bar \tau_{s,u}(x)\ dx = 2s(1 + \ln(\frac{\pi}{4} s)) = O(s\ln s)$.
\end{corollary}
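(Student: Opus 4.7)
The plan is to obtain the claim by directly combining Lemma~\ref{lem:unif} with the uniform cap from Lemma 3.2 of~\cite{Denisov:2016}, which the excerpt already flags as the right ingredient. First I would unpack the definition of $\tau_{s,u}(x)$: for $x\in[-\sigma,\sigma]$ the factors $u(x)=1/(2\sigma)$ and $\|f\|_u^2 = \frac{1}{2\sigma}\|f\|_{[-\sigma,\sigma]}^2$ cancel, so
\[
\tau_{s,u}(x) \;=\; \sup_{f\in\mathcal{T}_s,\,f\not\equiv 0}\frac{|f(x)|^2}{\|f\|_{[-\sigma,\sigma]}^2},
\]
and for $|x|>\sigma$ the density vanishes, so $\tau_{s,u}(x)=0$ trivially, matching the third branch.

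Next, since $\mathcal{T}_s\subseteq\mathcal{E}_s$, I apply Lemma~\ref{lem:unif} with $a=-\sigma$, $b=\sigma$. Then $\min(x-a,b-x)=\sigma-|x|$ and hence $\tau_{s,u}(x)\le s/(\sigma-|x|)$ for every $x\in(-\sigma,\sigma)$. This is sharp in the interior but blows up at the endpoints, so I complement it with the global bound from Lemma 3.2 of~\cite{Denisov:2016}, which gives $\tau_{s,u}(x)\le \pi s^2/(4\sigma)$ uniformly on $[-\sigma,\sigma]$. Taking the pointwise minimum of the two bounds yields exactly the piecewise definition of $\bar\tau_{s,u}$: the two expressions coincide precisely when $s/(\sigma-|x|)=\pi s^2/(4\sigma)$, i.e.\ at the crossover $|x|=\sigma(1-\tfrac{4}{\pi s})$, and below this threshold the hyperbolic bound is tighter while above it the quadratic bound is.

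It then remains to evaluate $\int_{-\infty}^\infty \bar\tau_{s,u}(x)\,dx$, which by symmetry is $2\int_0^\sigma \bar\tau_{s,u}(x)\,dx$. The "edge" region contributes $2\cdot \tfrac{\pi s^2}{4\sigma}\cdot \tfrac{4\sigma}{\pi s}=2s$ by a trivial rectangle, and the "interior" region contributes
\[
2\int_0^{\sigma(1-4/(\pi s))}\frac{s}{\sigma-x}\,dx \;=\; 2s\bigl[\ln\sigma-\ln\tfrac{4\sigma}{\pi s}\bigr] \;=\; 2s\ln\!\bigl(\tfrac{\pi s}{4}\bigr),
\]
for a total of $2s(1+\ln(\tfrac{\pi s}{4}))=O(s\ln s)$, as claimed.

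I do not expect a real obstacle here: both non-trivial inputs (Lemma~\ref{lem:unif} for the interior growth and Denisov's uniform cap) are off-the-shelf, and the rest is bookkeeping. The only subtlety is checking that the crossover point chosen in the statement is indeed where the two bounds meet, so that the piecewise $\bar\tau_{s,u}$ is literally the pointwise minimum rather than a strictly larger envelope; this would keep the integral tight up to the claimed $2s(1+\ln(\pi s/4))$ value.
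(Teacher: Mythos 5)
Your proposal is correct and is essentially the paper's own argument: the paper only sketches this corollary by remarking that Lemma \ref{lem:unif} (applied with $[a,b]=[-\sigma,\sigma]$, giving the interior bound $s/(\sigma-|x|)$) is to be combined with the uniform endpoint cap $\tfrac{\pi}{4\sigma}s^2$ from Lemma 3.2 of \cite{Denisov:2016}, and you have filled in exactly those details, including the correct crossover point $|x|=\sigma(1-\tfrac{4}{\pi s})$ where the two bounds meet and the integral evaluation $2s + 2s\ln(\tfrac{\pi s}{4})$. No gaps.
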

Corollary \ref{thm:unif} mirrors our Theorems \ref{thm:gaussian} and \ref{thm:exp}, and as mentioned in Section \ref{sec:main_tech}, no upper bound can improve on the integral of $O(s\ln s)$ by more than a $\ln s$ factor. Understanding if this $\ln s$ can be eliminated or if it is necessary is an interesting open question.

We also employ a bound  due to Tur\'{a}n \cite{BorweinErdelyi:1995}, which plays a central role in his book \cite{Turan:1984}. 
\begin{lemma}[Tur\'{a}n's lemma]\label{lem:growth0}
For any $g \in \mathcal E_s^+$ and $\alpha,\beta > 0$:
\begin{align*}
|g(0)| \le \left (\frac{2e(\alpha+\beta)}{\beta} \right )^{s} \cdot \norm{g}_{[\alpha,\alpha+\beta]}.
\end{align*}
\end{lemma}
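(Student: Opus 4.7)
The plan is to adapt Turán's classical extremal-polynomial technique, as laid out in \cite{Turan:1984, BorweinErdelyi:1995}, to produce the stated inequality with the $L_2$ norm on the right-hand side. The first step is a normalization: under the dilation $x \mapsto x/\beta$, an element $g \in \mathcal{E}_s^+$ maps to another element of $\mathcal{E}_s^+$ (the new frequencies are $\lambda_j/\beta$, and since $\beta > 0$ the condition $\mathrm{Re}(\lambda_j) \ge 0$ is preserved), while the norm rescales in a controlled way. Setting $c := \alpha/\beta$, the claim reduces to showing $|g(0)| \le (2e(c+1))^s \, \|g\|_{[c, c+1]}$ for $g \in \mathcal{E}_s^+$.

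For the normalized inequality I would exploit the finite dimensionality of the subspace $V = \mathrm{span}\{e^{i\lambda_1 x}, \ldots, e^{i\lambda_s x}\}$, regarded as a subspace of $L_2([c, c+1])$. Point evaluation at $0$ is a bounded linear functional on the finite-dimensional space $V$, so by the Riesz representation theorem there is a unique $K \in V$ with $g(0) = \int_c^{c+1} g(x)\, \overline{K(x)}\, dx$ for all $g \in V$. Cauchy--Schwarz then yields $|g(0)| \le \|K\|_{[c,c+1]} \cdot \|g\|_{[c,c+1]}$, so the task becomes proving the dual estimate
\[
\|K\|_{[c,c+1]} \;=\; \sup\bigl\{|h(0)| \,:\, h \in V,\ \|h\|_{[c,c+1]} = 1\bigr\} \;\le\; (2e(c+1))^s,
\]
uniformly in the choice of the frequencies $\lambda_1, \ldots, \lambda_s$.

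The core of the proof is this extremal estimate, which I would obtain by comparison with the Chebyshev--Markov extremal problem: for any $h \in V$ normalized by $\|h\|_{[c,c+1]} = 1$, the value $|h(0)|$ is controlled by the extrapolated value at $0$ of the Chebyshev polynomial of degree $s-1$ relative to $[c, c+1]$, normalized in $L_2$. This extremal value is classically of order $(c + 1/2 + \sqrt{(c+1/2)^2 - 1/4})^{s}$, which simplifies to the stated $(2e(c+1))^s$ after standard arithmetic estimates. The main obstacle is that $\mathcal{E}_s^+$ allows genuinely complex frequencies, while the Chebyshev-polynomial reduction is cleanest when the $\lambda_j$ are real (where the substitution $z = e^{ix/(c+1)}$ converts exponentials into monomials). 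The half-plane condition $\mathrm{Re}(\lambda_j) \ge 0$ must therefore be used to preserve the bound under complex extrapolation, either via a Phragmén--Lindelöf argument on the relevant half-plane or by writing the reconstruction kernel $K$ explicitly via Lagrange interpolation at $s$ Chebyshev-type nodes in $[c, c+1]$ and bounding its $L_2$ norm through a Vandermonde determinant estimate in the $e^{i \lambda_j x_k}$ system. The constant $2e$ in the final bound arises from Stirling's approximation in this determinant estimate, and getting it exactly (rather than a weaker $C^s$) is the most delicate part of the calculation.
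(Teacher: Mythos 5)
The paper does not prove this lemma at all: it is imported verbatim from the literature (Tur\'{a}n's book and Borwein--Erd\'{e}lyi \cite{Turan:1984,BorweinErdelyi:1995}), and the only thing the paper actually derives from it is Lemma \ref{lem:growth}, via a one-line substitution. So the relevant comparison is between your sketch and the known proofs of Tur\'{a}n's lemma itself, and there your proposal has a genuine gap: every step you actually carry out (dilation, Riesz representation of point evaluation on the finite-dimensional span $V$, Cauchy--Schwarz) is a reformulation that is \emph{equivalent} to the statement being proved, and the one step containing all the content --- that $\sup\{|h(0)| : h \in V,\ \norm{h}_{[c,c+1]}=1\} \le (2e(c+1))^s$ \emph{uniformly} over all admissible frequency tuples --- is only gestured at. The Chebyshev--Markov comparison you invoke is an extremal fact about algebraic polynomials; transferring it to $\mathrm{span}\{e^{i\lambda_1 x},\dots,e^{i\lambda_s x}\}$ with complex frequencies is precisely what Tur\'{a}n's power-sum method (interpolation identities whose coefficients are binomial, whence the $2e$) or the Borwein--Erd\'{e}lyi Descartes-system/Remez machinery accomplishes. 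You correctly identify this as ``the most delicate part'' but do not supply it, and a Phragm\'{e}n--Lindel\"{o}f or Vandermonde-plus-Stirling argument is not a known route to this constant.

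There is also a concrete error in your first step. The inequality is not dilation-invariant: under $x\mapsto \beta x$ the left side $|g(0)|$ is unchanged while $\norm{g}_{[\alpha,\alpha+\beta]}$ scales by $\sqrt{\beta}$, so proving the normalized case $\beta=1$ yields the general statement only for $\beta\ge 1$; your reduction silently discards this $\sqrt{\beta}$ factor. Indeed, as literally stated the lemma fails for small $\beta$: take $s=1$, $g\equiv 1$ (so $\lambda_1=0$ and $g\in\mathcal{E}_1^+$), $\alpha\to 0^+$ and $\beta<1/(4e^2)$; then the right-hand side tends to $2e\sqrt{\beta}<1=|g(0)|$. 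The paper is unaffected because it only ever invokes the lemma with $\beta = x/2 \ge 3\sqrt{s}$, but any purported proof of the statement for all $\alpha,\beta>0$ must be wrong somewhere, and in yours the error enters exactly at the normalization. The honest course here is the paper's: cite the precise $L_2$ form from \cite{BorweinErdelyi:1995,Erdelyi:2017}, with whatever normalization of the interval it carries, rather than re-derive Tur\'{a}n's lemma from scratch.
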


Tur\'{a}n's lemma can be used to bound the growth of any function in $\mathcal{E}_s^- \supset \mathcal{T}_s$ outside of an interval in terms of its norm on that interval.
\begin{lemma}[Lemma 12.2 \cite{Erdelyi:2017}]\label{lem:growth} For any $a \in \R$, $d  > 0$, $x \ge a+d$, and $f \in \mathcal{E}_s^-$:
\begin{align*}
|f(x)| \le \left ( \frac{2e (x-a)}{d} \right )^{s} \cdot  \norm{f}_{[a, a+d]}.
\end{align*} 
\end{lemma}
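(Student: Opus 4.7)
The plan is to reduce the growth estimate for $f \in \mathcal{E}_s^-$ to Tur\'{a}n's lemma (Lemma \ref{lem:growth0}) for a suitably reflected function in $\mathcal{E}_s^+$. The intuition is that Tur\'{a}n's inequality bounds the value at a \emph{left endpoint} of an interval (namely $0$) by the $L_2$ norm on a \emph{shifted} interval $[\alpha,\alpha+\beta]$ to its right; by reflecting time around the point $x$ we can translate this into a bound at the right end of the domain by an $L_2$ norm over an interval to its left, which is exactly what Lemma \ref{lem:growth} asks for.

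Concretely, write $f(y) = \sum_{j=1}^{s} a_j e^{i \lambda_j y}$ with $\mathrm{Re}(\lambda_j) \le 0$, and define $g(t) \eqdef f(x - t) = \sum_{j=1}^{s} \bigl(a_j e^{i\lambda_j x}\bigr) e^{-i \lambda_j t}$. The frequencies of $g$ are $-\lambda_j$, each with $\mathrm{Re}(-\lambda_j) \ge 0$, so $g \in \mathcal{E}_s^+$. Moreover $g(0) = f(x)$.

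The main step is then to apply Tur\'{a}n's lemma to $g$ with the parameter choice $\alpha = x - a - d$ and $\beta = d$, so that $\alpha + \beta = x - a$. This gives
\begin{equation*}
|f(x)| \;=\; |g(0)| \;\le\; \left(\frac{2e(\alpha+\beta)}{\beta}\right)^{s} \|g\|_{[\alpha,\alpha+\beta]} \;=\; \left(\frac{2e(x-a)}{d}\right)^{s} \|g\|_{[x-a-d,\,x-a]}.
\end{equation*}
A straightforward change of variables $y = x - t$ (with Jacobian of absolute value $1$) converts $\|g\|_{[x-a-d,\,x-a]}^2 = \int_{x-a-d}^{x-a} |f(x-t)|^2\, dt$ into $\int_{a}^{a+d} |f(y)|^2\, dy = \|f\|_{[a,a+d]}^2$, which yields the claimed inequality.

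The one minor technical point is that Lemma \ref{lem:growth0} requires $\alpha,\beta > 0$, whereas the hypothesis $x \ge a+d$ only guarantees $\alpha \ge 0$. The boundary case $x = a+d$ is handled by continuity: the inequality for $x > a+d$ passes to the limit $x \downarrow a+d$ on both sides, since $f$ is an entire function and the right-hand side is continuous in $x$. I do not foresee a serious obstacle here; the only subtlety is keeping track of the reflection so that the class $\mathcal{E}_s^-$ is mapped into $\mathcal{E}_s^+$ where Tur\'{a}n's lemma is stated, but this is immediate from the form of the exponents.
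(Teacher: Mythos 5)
Your proposal is correct and follows essentially the same route as the paper: reflect via $g(t) := f(x-t)$ to move from $\mathcal{E}_s^-$ to $\mathcal{E}_s^+$, apply Tur\'{a}n's lemma with $\alpha = x-(a+d)$ and $\beta = d$, and change variables to recover $\|f\|_{[a,a+d]}$. Your remark on the boundary case $\alpha = 0$ (i.e., $x = a+d$) is a small extra care the paper does not spell out, but otherwise the arguments coincide.
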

\begin{proof}
Let $f \in {\mathcal E}_s^{-}$. Let $g \in {\mathcal E}_s^+$ be defined by $g(t) := f(x-t)$.
We define $\alpha := x-(a+d)$ and $\beta := d$.
Applying Lemma \ref{lem:growth0} with $g \in {\mathcal E}_s^+$ we have
$$|f(x)| = |g(0)| \leq \left( \frac{2e(\alpha+\beta)}{\beta} \right)^s \|g\|_{[\alpha,\alpha+\beta]}  
= \left( \frac{2e(x-a)}{d} \right)^s \|f\|_{[a,a+d]}.$$
\end{proof}

Finally, our gap-based result apply to the following restricted class of $\mathcal T_s$:
\begin{align}\label{eq:tsGap}
\mathcal{T}_{s,\gamma} = \left \{f: f(x) = \sum_{j=1}^s a_j e^{i \lambda_j x}, a_j \in \C, \lambda_j \in \R\text{ with } \min_{j,k} |\lambda_k - \lambda_{j}| \ge \gamma > 0 \right \}.
\end{align}
We denote the leverage score of this class with respect to a density $p$ by $\tau_{s,\gamma,p}(x)$. In bounding these scores we use the following bound due to Ingham \cite{ingham1936some}:
\begin{lemma}[Ingram's Inequality]\label{lem:gap} 
For any $\gamma > 0$,  $f \in \mathcal{T}_{s,\gamma}$ with coefficients $a_1,\ldots,a_s$,
and $T > \pi/\gamma, $ 
$$c_1(T,\gamma) \sum_{j=1}^s{|a_j|^2} \leq \norm{f}_{[-T,T]}^2 
\leq c_2(T,\gamma) \sum_{j=1}^s{|a_j|^2}\, ,$$
where 
$$c_1(T,\gamma) := \frac{4T}{\pi}\left(1 - \frac{\pi^2}{T^2\gamma^2} \right) \qquad \text{and} \qquad
c_2(T,\gamma) := \frac{16T}{\pi}\left(1 + \frac{\pi^2}{T^2\gamma^2} \right).$$
\end{lemma}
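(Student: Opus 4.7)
The plan is to prove Ingham's inequality via the classical auxiliary-kernel technique from harmonic analysis. The strategy is to sandwich the indicator $\mathbf{1}_{[-T,T]}$ between two non-negative functions $K_-$ and $K_+$ whose Fourier transforms are explicit and sufficiently decaying, and then expand $\int K_\pm(t)\,|f(t)|^2\,dt$ as a bilinear form in the coefficients $\{a_j\}$.

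Concretely, for any real-valued even kernel $K$ on $\R$, substituting $f(t) = \sum_{j=1}^s a_j e^{i\lambda_j t}$ yields
\[
\int_\R K(t)\,|f(t)|^2\, dt \;=\; \widehat{K}(0) \sum_{j=1}^s |a_j|^2 \;+\; \sum_{j \ne k} a_j \overline{a_k}\, \widehat{K}(\lambda_k - \lambda_j),
\]
where $\widehat{K}(\xi) = \int K(t) e^{-i\xi t}\,dt$. The ``diagonal'' term supplies the main contribution $\widehat{K}(0) \sum_j |a_j|^2$, while the ``off-diagonal'' remainder is controlled by the gap hypothesis $|\lambda_j - \lambda_k| \ge \gamma$.

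For the lower bound I would use the classical Ingham kernel $K_-(t) = \cos(\pi t/(2T))$ on $[-T,T]$ (zero elsewhere), which is non-negative and bounded by $1$, so $\int K_- |f|^2 \le \norm{f}_{[-T,T]}^2$. A direct trigonometric integration gives
\[
\widehat{K_-}(\xi) \;=\; \frac{4\pi T \cos(\xi T)}{\pi^2 - 4T^2 \xi^2},
\]
hence $\widehat{K_-}(0) = 4T/\pi$ (which produces the prefactor $4T/\pi$ in $c_1$) and, for $|\xi| \ge \gamma > \pi/T$, the pointwise decay $|\widehat{K_-}(\xi)| \le 4\pi T / (4T^2 \xi^2 - \pi^2)$. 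For the upper bound I would dually take a non-negative majorant $K_+ \ge \mathbf{1}_{[-T,T]}$ of $\cos^2$-type supported on $[-2T,2T]$, arranged so that $\widehat{K_+}(0) = 16T/\pi$ with analogous tail decay; this yields $\int K_+ |f|^2 \ge \norm{f}_{[-T,T]}^2$ and the prefactor $16T/\pi$ in $c_2$.

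The main obstacle is extracting the precise correction $\pi^2/(T^2\gamma^2)$ from the off-diagonal sum. For each fixed $j$, the $\gamma$-separation means that at most two of the remaining $\lambda_k$ can lie in any band $\{\xi : n\gamma \le |\xi - \lambda_j| < (n+1)\gamma\}$, so
\[
\sum_{k \ne j} |\widehat{K_\pm}(\lambda_k - \lambda_j)| \;\le\; 2 \sum_{n=1}^\infty \sup_{|\xi| \ge n\gamma} |\widehat{K_\pm}(\xi)|,
\]
and bounding this tail by the integral $\int_\gamma^\infty 4\pi T/(4T^2\xi^2 - \pi^2)\,d\xi$ gives row sums of size at most $\widehat{K_\pm}(0)\cdot \pi^2/(T^2\gamma^2)$. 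Applying the elementary inequality $\bigl|\sum_{j\ne k} a_j \overline{a_k} c_{jk}\bigr| \le \bigl(\max_j \sum_{k\ne j} |c_{jk}|\bigr) \sum_j |a_j|^2$ to close the argument produces the stated constants $c_1(T,\gamma) = \tfrac{4T}{\pi}(1 - \pi^2/(T^2\gamma^2))$ for the lower bound and $c_2(T,\gamma) = \tfrac{16T}{\pi}(1 + \pi^2/(T^2\gamma^2))$ for the upper bound, with the sign in the parenthetical factor arising from whether the off-diagonal piece is subtracted from or added to the diagonal.
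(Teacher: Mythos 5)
First, a point of reference: the paper does not prove Lemma \ref{lem:gap} at all --- it imports the statement from Ingham \cite{ingham1936some} --- so there is no internal proof to compare against. Your argument is, in substance, Ingham's own kernel method, and it is the right approach. The lower-bound half is essentially complete: $K_-(t)=\cos(\pi t/(2T))\cdot\mathbf{1}_{[-T,T]}$ is a nonnegative minorant of $\mathbf{1}_{[-T,T]}$, your formula $\widehat{K_-}(\xi)=4\pi T\cos(\xi T)/(\pi^2-4T^2\xi^2)$ is correct, the diagonal term $\widehat{K_-}(0)=4T/\pi$ matches the prefactor in $c_1$, the ``at most two frequencies per band of width $\gamma$'' count is valid under the pairwise-gap hypothesis, and the Gershgorin/Schur bound on the off-diagonal bilinear form is legitimate since the matrix $\widehat{K_-}(\lambda_k-\lambda_j)$ is real symmetric.

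The one genuine gap is quantitative: the off-diagonal tail cannot be closed by the integral comparison you propose. Writing $\beta=2T\gamma/\pi>2$, the row sum is $2\sum_{n\ge1}\tfrac{4T}{\pi(\beta^2n^2-1)}$ and you need it to be at most $\tfrac{4T}{\pi}\cdot\tfrac{\pi^2}{T^2\gamma^2}=\tfrac{4T}{\pi}\cdot\tfrac{4}{\beta^2}$, i.e.\ $\sum_{n\ge1}(\beta^2n^2-1)^{-1}\le 2/\beta^2$. As $\beta\downarrow 2$ both sides tend to $1/2$, so there is no slack, and any lossy estimate fails: the first term plus $\int_1^\infty (\beta^2x^2-1)^{-1}dx$ already gives $\tfrac13+\tfrac{\ln 3}{4}\approx 0.61>0.5$ at $\beta=2$. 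The clean fix is the pointwise inequality $\beta^2n^2-1\ge\tfrac{\beta^2}{4}(4n^2-1)$ for $\beta\ge2$ together with the telescoping identity $\sum_{n\ge1}(4n^2-1)^{-1}=\tfrac12$, which yields exactly $2/\beta^2$ and hence exactly $c_1$. For the upper bound your sketch works but the normalization you assert is off: the natural majorant is $K_+(t)=\sqrt{2}\cos(\pi t/(4T))$ on $[-2T,2T]$ (which dominates $1$ on $[-T,T]$), and it has $\widehat{K_+}(0)=8\sqrt{2}\,T/\pi$, not $16T/\pi$; since $8\sqrt{2}<16$ and the corresponding off-diagonal tail is controlled with the doubled parameter $\beta'=2\beta>4$, the stated $c_2$ follows with room to spare. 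So the architecture is sound; only the sharp summation step needs to replace the integral bound.
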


Setting $T = 2\pi/\gamma$ in Ingram's inequality gives:
\begin{corollary}\label{cor:gap} 
For any $\gamma > 0$ and $f \in \mathcal{T}_{s,\gamma}$ with coefficients $a_1,\ldots,a_s$, we have:
\begin{align*}
\frac{6}{\gamma} \sum_{j=1}^s{|a_j|^2} \leq \norm{f}_{[-2\pi/\gamma,2\pi/
\gamma]}^2 
\leq  \frac{40}{\gamma} \sum_{j=1}^s{|a_j|^2}.
\end{align*} 
\end{corollary}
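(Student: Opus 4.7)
The plan is to derive Corollary \ref{cor:gap} as a direct specialization of Lemma \ref{lem:gap} (Ingham's inequality), by plugging in $T = 2\pi/\gamma$. First I would check the hypothesis $T > \pi/\gamma$: since $2\pi/\gamma > \pi/\gamma$ for all $\gamma > 0$, Lemma \ref{lem:gap} applies.

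Next I would compute the two constants $c_1(T,\gamma)$ and $c_2(T,\gamma)$ at $T = 2\pi/\gamma$. The key observation is that with this choice, $T\gamma = 2\pi$, so $\pi^2/(T^2\gamma^2) = 1/4$, which is what allows the clean closed form. Then
\[
c_1(T,\gamma) = \frac{4T}{\pi}\left(1 - \tfrac{1}{4}\right) = \frac{3T}{\pi} = \frac{3 \cdot (2\pi/\gamma)}{\pi} = \frac{6}{\gamma},
\]
and similarly
\[
c_2(T,\gamma) = \frac{16T}{\pi}\left(1 + \tfrac{1}{4}\right) = \frac{20T}{\pi} = \frac{20 \cdot (2\pi/\gamma)}{\pi} = \frac{40}{\gamma}.
\]
Substituting these into the two-sided inequality of Lemma \ref{lem:gap} with interval $[-T,T] = [-2\pi/\gamma, 2\pi/\gamma]$ yields the claimed bounds.

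This is essentially a one-line calculation, so there is no real obstacle; the only thing to verify is the arithmetic and that the condition $T > \pi/\gamma$ is met. I would present the proof as a short \textbf{Proof} block that states the choice of $T$, verifies the hypothesis, computes $c_1$ and $c_2$, and invokes Lemma \ref{lem:gap}.
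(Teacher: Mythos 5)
Your proposal is correct and matches the paper exactly: the paper obtains Corollary \ref{cor:gap} by the same one-line substitution $T = 2\pi/\gamma$ into Lemma \ref{lem:gap}, and your computation of $c_1 = 6/\gamma$ and $c_2 = 40/\gamma$ (via $\pi^2/(T^2\gamma^2) = 1/4$) is accurate. Your explicit verification of the hypothesis $T > \pi/\gamma$ is a small touch the paper omits but does no harm.
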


\subsection{Bounds for the Gaussian density}\label{sec:gauss}

Our leverage score bound for the Gaussian density (Theorem \ref{thm:gaussian}) is split into two components -- a uniform bound on $\tau_{s,g}(x)$ for all $ x\in \R$ (Claim \ref{clm:small}) and a bound for $x$ restricted to have sufficiently large magnitude (Claim \ref{clm:large}). Combining these two results gives the two part bound of Theorem \ref{thm:gaussian}. In this section we focus solely on the unit width Gaussian density: $g(x) = \frac{1}{\sqrt{\pi}} e^{-x^2}$. Bounds under this density can immediately be translated into bounds for any width $\sigma > 0$ via scaling. While they are not applicable to our algorithmic results, we give leverage score lower bounds as well, which help clarify the tightness of the bounds given.

\begin{claim}[Gaussian Leverage Bound  -- Uniform Bound]\label{clm:small}
Letting $g(x) = \frac{1}{\sqrt{\pi}} e^{-x^2}$, for all $x \in \R$:
\begin{align*}
\frac{s}{3\pi} \le \tau_{\mathcal{E}_s,g} \le e\cdot s.
\end{align*}
As a consequence $\tau_{s,g}(x) \le e\cdot s$.
\end{claim}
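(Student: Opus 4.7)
The key structural observation I would exploit is that, for the Gaussian density, the leverage score $\tau_{\mathcal{E}_s,g}(x)$ is translation invariant in $x$, once we allow the complex frequencies of the extended class $\mathcal{E}_s$. Given any $f(y) = \sum_{j=1}^s a_j e^{i\lambda_j y} \in \mathcal{E}_s$ and $x \in \R$, I would define
\[
h(u) := f(u+x)\, e^{-xu} = \sum_{j=1}^s \bigl(a_j e^{i\lambda_j x}\bigr)\, e^{i(\lambda_j + ix)u},
\]
which still lies in $\mathcal{E}_s$ precisely because $\mathcal{E}_s$ admits complex frequencies. Using $(u+x)^2 = u^2 + 2xu + x^2$ one checks that the pointwise identity $|f(u+x)|^2 g(u+x) = e^{-x^2}|h(u)|^2 g(u)$ holds, so $\|f\|_g^2 = e^{-x^2}\|h\|_g^2$ while clearly $|f(x)|^2 = |h(0)|^2$. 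Combined with $g(x) = g(0)e^{-x^2}$ this collapses to
\[
\frac{|f(x)|^2 g(x)}{\|f\|_g^2} \;=\; \frac{|h(0)|^2 g(0)}{\|h\|_g^2}.
\]
Since $f\mapsto h$ is a bijection on $\mathcal{E}_s$ (its inverse is $f(y) = h(y-x)e^{x(y-x)}$), this reduces the entire claim to evaluating $\tau_{\mathcal{E}_s,g}$ at the single point $0$.

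For the upper bound at the origin, I would apply Lemma \ref{lem:unif} to $f \in \mathcal{E}_s$ on the interval $[-1,1]$, obtaining $|f(0)|^2 \le s\int_{-1}^{1}|f(y)|^2\,dy$. On this interval the Gaussian weight is bounded below by $g(\pm 1) = e^{-1}/\sqrt{\pi}$, so the unweighted integral is at most $\sqrt{\pi}\, e\, \|f\|_g^2$. Combining with $g(0) = 1/\sqrt{\pi}$ gives $\tau_{\mathcal{E}_s,g}(0) \le e\cdot s$ cleanly, matching the claimed constant. The bound $\tau_{s,g}(x) \le e\cdot s$ follows at once from $\mathcal{T}_s \subset \mathcal{E}_s$.

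For the lower bound I would exhibit an explicit witness
\[
f(y) = \sum_{j=1}^{s} e^{i j\epsilon y}
\]
with a sufficiently large real gap $\epsilon$ (e.g.\ $\epsilon^2 = 4\ln 2$). The Gaussian characteristic function identity $\int e^{i\mu y} g(y)\,dy = e^{-\mu^2/4}$ yields $\|f\|_g^2 = \sum_{j,k} e^{-(j-k)^2\epsilon^2/4}$, whose diagonal contributes exactly $s$ and whose off-diagonal is bounded by $2s\sum_{d\ge 1} e^{-d^2\epsilon^2/4} \le 2s$ for the chosen $\epsilon$. Thus $\|f\|_g^2 \le 3s$, and since $|f(0)|^2 = s^2$ we obtain $\tau_{\mathcal{E}_s,g}(0) \ge s^2 g(0)/(3s) = s/(3\sqrt{\pi}) \ge s/(3\pi)$.

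The main conceptual step is the translation-invariance reduction: without it, a naive interval bound gives only $(s/r)\exp(2|x|r + r^2)$, which cannot be made uniformly $\le es$ as $|x|\to\infty$. The crucial point is that the imaginary shift $\lambda_j \mapsto \lambda_j + ix$ keeps us inside $\mathcal{E}_s$ exactly because complex frequencies are allowed — this is why the claim is phrased for $\mathcal{E}_s$ rather than $\mathcal{T}_s$, even though the $\mathcal{T}_s$ bound is what subsequent applications require. Once this reduction is in hand, both bounds are short: the upper bound is a direct application of Lemma \ref{lem:unif}, and the lower bound reduces to a simple Gaussian moment computation with well-separated frequencies.
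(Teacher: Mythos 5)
Your proof is correct and follows essentially the same route as the paper's: the imaginary frequency shift $\lambda_j \mapsto \lambda_j + ix$ (the paper's $w_a(x) = f(x+a)e^{-(x+a)^2/2}$ with $a=x$) reduces everything to the origin, the upper bound is Lemma \ref{lem:unif} on $[-1,1]$ paying the same factor $e$ from the Gaussian weight, and the lower bound uses the same kind of witness (a sum of $s$ well-separated exponentials translated to the evaluation point), with your characteristic-function computation replacing the paper's periodicity argument but yielding the same $s/(3\pi)$ up to the choice of gap. No gaps; the explicit statement of translation invariance as a bijection of $\mathcal{E}_s$ is a slightly cleaner packaging of the identical idea.
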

\begin{proof}
For any $f \in \mathcal{E}_s$ and $a \in \R$, define the shifted and weighted function $w_a(x) = f(x+a) \cdot e^{-(x+a)^2/2}.$
We can write:
\begin{align*}
w_a(x) &= \sum_{j=1}^s a_j e^{ i \lambda_j (x+a) } e^{-x^2/2} e^{-a^2/2} e^{-x a}\\
&= e^{-x^2/2} \cdot  \sum_{j=1}^s \left (a_j  \cdot e^{i \lambda_j a} \cdot e^{-a^2/2} \right ) \cdot e^{ (i \lambda_j - a) x}.
\end{align*}
If we let $h_a(x)=  \sum_{j=1}^s \left (a_j  \cdot e^{i \lambda_j a} \cdot e^{-a^2/2} \right ) \cdot e^{ (i \lambda_j - a) x}$ we thus have $w_a(x) = e^{-x^2/2} \cdot h_a(x)$ and $h_a(x) \in \mathcal{E}_s$.
Applying Lemma \ref{lem:unif} with $[a,b] = [-1,1]$ and $x= 0$ gives:
\begin{align*}
\frac{|h_a(0)|^2}{\norm{h_a}_{[-1,1]}^2} \le s.
\end{align*} 
This gives 
\begin{align*}
\frac{|w_a(0)|^2}{\norm{w_a}_{2}^2}  \le \frac{|w_a(0)|^2}{\norm{w_a}_{[-1,1]}^2} \le s \cdot \frac{e^0}{e^{-1}} = e \cdot s.
\end{align*} 
Plugging in $a = x$ this gives:
\begin{align*}
\frac{|f(x)|^2 \cdot \frac{1}{\sqrt{\pi}} e^{-x^2}}{\norm{f}_{g}^2}= \frac{|w_x(0)|^2}{\norm{w_x}_{2}^2}  \le e\cdot s.
\end{align*}
where  we use that $\norm{f}_{g}^2 = \frac{1}{\sqrt{\pi}}\norm{w_a}_{2}^2$ for any $a$ due to the weighting $e^{-(x+a)^2/2}$. Thus, we have $\tau_{\mathcal{E}_s,g}(x) \le e\cdot s$, completing the upper bound.

For the lower bound, let $w_t \in {\mathcal E}_s$ be defined by 
$$w_t(x) := f(x-t)e^{tx}\,, \qquad f(x) := \sum_{j=0}^{s-1}{e^{ijx}}\,.$$
We have
\begin{align}\label{eq:41}
|w_t(t)|^2 \cdot e^{-t^2} = s^2e^{2t^2} \cdot e^{-t^2} = s^2 e^{t^2}.
\end{align}
Additionally,
\begin{align*}
 \int_{{t \in \R}}{|w_t(x)|^2 e^{-x^2} \, dx} & = \int_{{x \in \R}}{|f(x-t)|^2e^{2tx} e^{-x^2} \, dx} \\
&= e^{t^2} \int_{{x \in \R}}{|f(x-t)|^2 e^{-(x-t)^2} \, dx}\\
&= e^{t^2} \int_{u \in \R}{|f(u)|^2e^{-u^2} \, du}. 
\end{align*}
Since $f$ is a sum of complex exponentials with integer frequencies with period $2\pi$, we can bound:
\begin{align}\label{eq:42}
 \int_{{t \in \R}}{|w_t(x)|^2 e^{-x^2} \, dx} &\le e^{t^2} \int_{u \in \R}{|f(u)|^2e^{-u^2} \, du} \nonumber\\
 & \leq e^{t^2} \left( \int_0^\pi{|f(u)|^2e^{-u^2} \, du} \right) \cdot  \left( 2 + 2\sum_{k=1}^\infty{e^{-(k\pi)^2}} \right)\nonumber\\
& \leq \, e^{t^2} \cdot 3\pi s,
\end{align} 
where the last bound follows from the fact that $\int_0^\pi{|f(x)|^2 \, dx} = \pi s.$
Combining \eqref{eq:41} and \eqref{eq:42} we obtain the lower bound of the theorem. 

\end{proof}

\begin{claim}[Gaussian Leverage Bound -- Large $x$]\label{clm:large} Letting $g(x) = \frac{1}{\sqrt{\pi}} e^{-x^2}$, when
$|x| \ge 6 \sqrt{s}$,
\begin{align*}
\tau_{s,g}(x) \le e^{-x^2/2}.
\end{align*}
\end{claim}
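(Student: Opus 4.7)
Both $\tau_{s,g}(\cdot)$ and the target $e^{-x^2/2}$ are even functions of $x$: if $f \in \mathcal T_s$ then $\tilde f(y) := f(-y)$ is also in $\mathcal T_s$, with $|\tilde f(x)| = |f(-x)|$ and $\|\tilde f\|_g = \|f\|_g$ (since $g$ is even). So it suffices to treat $x \ge 6\sqrt{s}$. After the $1/\sqrt{\pi}$ factors in numerator and denominator cancel, the task reduces to showing that, for every nonzero $f \in \mathcal T_s$,
\[
|f(x)|^2\, e^{-x^2} \;\le\; e^{-x^2/2}\, \int_{\R} |f(y)|^2 e^{-y^2}\, dy.
\]

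I would bound $|f(x)|$ in two steps. First, a Tur\'an-type growth bound, applicable because $\mathcal T_s \subset \mathcal E_s^-$, gives Lemma \ref{lem:growth}: for any $a \in \R$ and $d > 0$ with $x \ge a + d$,
$|f(x)| \le \bigl(2e(x-a)/d\bigr)^s \|f\|_{[a, a+d]}$.
Second, a trivial weight-swap yields $\|f\|_{[a, a+d]}^2 \le e^{\max_{y \in [a, a+d]} y^2} \int_\R |f(y)|^2 e^{-y^2}\, dy$. The two factors trade off: pushing the interval toward $x$ inflates $e^{\max y^2}$, while pushing it toward the origin inflates the Tur\'an factor $\bigl(2e(x-a)/d\bigr)^{2s}$. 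The plan is to balance them by taking $[a, a+d] = [0, x/2]$. With this choice $x \ge a + d$ holds, $(2e(x-a)/d)^s = (4e)^s$, and $\max_{y \in [0, x/2]} y^2 = x^2/4$. Combining the two steps gives
\[
\tau_{s,g}(x) \;\le\; (4e)^{2s}\, e^{x^2/4 - x^2} \;=\; (4e)^{2s}\, e^{-3x^2/4}.
\]

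To match the target $e^{-x^2/2}$ it remains to verify $(4e)^{2s} \le e^{x^2/4}$, i.e.\ $x^2 \ge 8 s \ln(4e) = 8s(1 + 2\ln 2) \approx 19.1\, s$. This is comfortably satisfied whenever $x \ge 6\sqrt{s}$, since then $x^2 \ge 36\, s$. This last inequality is in fact what pins down the constant $6$ in the threshold $|x| \ge 6\sqrt{s}$, and getting the right interval $[0, x/2]$ so that the Tur\'an factor and the Gaussian weight-swap factor together leave a slack of $e^{-x^2/2}$ is the main judgment call; once the interval is chosen, the rest of the argument is bookkeeping, and I do not anticipate any further obstacle.
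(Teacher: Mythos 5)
Your proposal is correct and follows essentially the same route as the paper: apply the Tur\'an-type growth bound (Lemma \ref{lem:growth}) on the interval $[0,x/2]$, trade the resulting $(4e)^{2s}$ factor against the Gaussian weight on that interval to get $\tau_{s,g}(x)\le (4e)^{2s}e^{-3x^2/4}$, and absorb $(4e)^{2s}$ using $x^2\ge 36s$. The only cosmetic difference is in the final arithmetic (the paper bounds $(4e)^{2s}\le e^{6s}\le e^{x^2/6}$ rather than $(4e)^{2s}\le e^{x^2/4}$ directly), and your explicit evenness reduction makes the implicit $x>0$ assumption in the paper's argument precise.
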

\begin{proof}
Applying Lemma \ref{lem:growth} with $a = 0$ and $d =  x/2$ gives that for any $f \in \mathcal{T}_s$:
\begin{align*}
\frac{|f(x)|^2}{ \norm{f}_{[0,x/2]}^2} \le \left (4e \right)^{2s}.
\end{align*}
This gives in turn that:
\begin{align}\label{eq:this}
\tau_{s,g}(x) \le \frac{e^{-x^2} \cdot (4e)^{2s}}{e^{-(x/2)^2} } \le e^{-3/4 \cdot x^2 + 6s}.
\end{align}
When $|x| \ge 6 \sqrt{s}$, $6s \le \frac{x^2}{6}$ and so \eqref{eq:this} gives $\tau_{s,g}(x) \le e^{(-3/4 + 1/6) \cdot x^2} \le e^{-x^2/2} $, completing the claim.
\end{proof}

We can prove Theorem \ref{thm:gaussian} directly from Claims \ref{clm:small} and \ref{clm:large}.
\begin{proof}[Proof of Theorem \ref{thm:gaussian}]
For the Gaussian density $g(x) = \frac{1}{\sigma\sqrt{2\pi}} \cdot e^{-x^2/(2\sigma^2)}$: 
\begin{align*}
 \tau_{s,g}(x) = \sup_{f \in \mathcal{T}_s} \frac{|f(x)|^2 \cdot e^{-x^2/(2\sigma^2)}}{\int_{-\infty}^{\infty} | f(y)|^2 e^{-y^2/(2\sigma^2)}\ dy}
\end{align*}
Let $\bar g(x) = \frac{1}{\sqrt{\pi}} e^{-x^2}$ be the Gaussian density with variance $1/2$. For any $f \in \mathcal{T}_s$, let $f_\sigma = f(\sqrt{2} \sigma \cdot x)$. Note that $f_\sigma \in \mathcal{T}_s$. We have:
\begin{align*}
\frac{|f(x)|^2 \cdot e^{-x^2/(2\sigma^2)}}{\int_{-\infty}^{\infty} | f(y)|^2 e^{-y^2/(2\sigma^2)}\ dy} = \frac{|f_\sigma(x/(\sqrt{2} \sigma))|^2 \cdot e^{-x^2/(2\sigma^2)}}{\int_{-\infty}^{\infty} | f_\sigma(y/(\sqrt{2}\sigma) )|^2 e^{-y^2/(2\sigma^2)}\ dy} = \frac{|f_\sigma(x/(\sqrt{2}\sigma)|^2 \cdot \bar g(x/(\sqrt{2}\sigma))}{\sqrt{2} \sigma \cdot \int_{-\infty}^{\infty} | f_\sigma(y)|^2 \cdot \bar g(y) \ dy}.
\end{align*}
Thus, $\tau_{s,g}(x) = \frac{1}{\sqrt{2} \sigma} \cdot \tau_{s,\bar g}(x/(\sqrt{2}\sigma))$. 
By Claims  \ref{clm:small} and \ref{clm:large}, if we define:
\begin{align*}
\bar \tau_{s,g}(x) = \begin{cases} \frac{1}{\sqrt{2} \sigma} \cdot e^{-x^2/(4\sigma^2)} \text{ for } |x| \ge 6 \sqrt{2} \sigma \cdot \sqrt{s}\\
 \frac{1}{\sqrt{2} \sigma} \cdot e \cdot s \text{ for } |x| \le 6\sqrt{2} \sigma \cdot \sqrt{s}
\end{cases} 
\end{align*}
we have 
\begin{align*}
\tau_{s,g}(x) = \frac{1}{\sqrt{2} \sigma} \cdot \tau_{s,\bar g}(x/(\sqrt{2}\sigma)) \le \bar \tau_{s,g} (x).
\end{align*}
Further,
\begin{align*}
\int_{\infty}^\infty \bar \tau_{s,g} (x) \ dx = \int_{-6 \sqrt{2} \sigma \cdot \sqrt{s}}^{6\sqrt{2} \sigma \cdot \sqrt{s}} \frac{es}{\sqrt{2} \sigma} \ dx + \frac{2}{\sigma} \int_{6\sqrt{2} \sigma \cdot \sqrt{s}}^{\infty} e^{-x^2/(4\sigma^2)} \ dx \le 12e s^{3/2} + 1,
\end{align*}
which completes the theorem.
\end{proof}

\subsection{Bounds for the Laplace density}\label{sec:exp}

We now give bounds for the Laplace density, again focusing on the unit width case and then proving Theorem \ref{thm:exp} via a simple scaling argument. Again, our bound is split into two components: a uniform bound for all $x$ and an improved bound for $x$ with large enough magnitude.

\begin{claim}[Laplace Leverage Bound -- Universal]\label{clm:expSmall}
 Letting $z(x) = \frac{1}{2} e^{-|x|}$, for all $x \in \R$
 \begin{align*}
 \tau_{\mathcal{E}_s,z}(x) \le \frac{e^2 \cdot s}{1+|x|}.
 \end{align*}
 As a consequence, $\tau_{s,g}(x) \le \frac{e^2 \cdot s}{1+|x|}$.
\end{claim}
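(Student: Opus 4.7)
The plan is to adapt the shift-and-reweight strategy of Claim \ref{clm:small} to the Laplace weight $e^{-|x|}$, whose piecewise character prevents a clean Gaussian-style factorization. Concretely, I would prove two complementary bounds on $\tau_{\mathcal{E}_s,z}(x_0)$ for $x_0 \ge 0$ (the case $x_0<0$ will follow by symmetry, since both $z$ and $\mathcal{E}_s$ are invariant under $x\mapsto -x$) and then take their pointwise minimum.

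First, I would establish a \emph{universal} bound $\tau_{\mathcal{E}_s,z}(x_0) \le e\cdot s$ by applying Lemma \ref{lem:unif} to $f$ directly on the unit interval $[x_0-1,x_0+1]$, giving $|f(x_0)|^2 \le s\,\|f\|_{[x_0-1,x_0+1]}^2$. On this interval $|y|\le x_0+1$, so $e^{-|y|}\ge e^{-(x_0+1)}$, whence $\|f\|_z^2 \ge \tfrac12 e^{-(x_0+1)}\|f\|_{[x_0-1,x_0+1]}^2$. Multiplying by $z(x_0)=\tfrac12 e^{-x_0}$ makes the $e^{x_0}$ factors cancel, leaving the bound $e\cdot s$.

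Next, I would prove a \emph{decay} bound $\tau_{\mathcal{E}_s,z}(x_0) \le s/x_0$ for $x_0>0$ via the shift-and-reweight trick. Introduce $h(x) = f(x)e^{-x/2}$; since $e^{-x/2} = e^{i(i/2)x}$ is itself a complex exponential, multiplication merely shifts each frequency $\lambda_j$ of $f$ by $i/2$, so $h \in \mathcal{E}_s$ (this step crucially uses that $\mathcal{E}_s$ allows complex frequencies). The key identity $|h(y)|^2 = |f(y)|^2 e^{-|y|}$ for $y \ge 0$ yields $\|h\|_{[0,2x_0]}^2 \le 2\|f\|_z^2$, and Lemma \ref{lem:unif} on $[0,2x_0]$ (midpoint $x_0$, distance-to-boundary $x_0$) gives $|h(x_0)|^2 \le (2s/x_0)\|f\|_z^2$. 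Since $|h(x_0)|^2 = |f(x_0)|^2 e^{-x_0}$, this rearranges to the desired bound after multiplying by $z(x_0)$.

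To combine: when $x_0 \in [0, e-1]$ the universal bound suffices because $1+x_0 \le e$ implies $e\cdot s \le \frac{e^2 s}{1+x_0}$, and when $x_0 > e-1$ the decay bound suffices because $\tfrac{1+x_0}{x_0} = 1 + \tfrac{1}{x_0} < 1 + \tfrac{1}{e-1} = \tfrac{e}{e-1} < e^2$. Symmetry in $x$ extends the bound to $x_0<0$, and the consequence for $\tau_{s,z}$ follows from $\mathcal{T}_s\subseteq \mathcal{E}_s$. The main obstacle is precisely this piecewise character of $e^{-|x|}$: unlike the Gaussian, where $e^{-(x+a)^2/2}$ factors into pieces all absorbable into a single $\mathcal{E}_s$ representative, $h = f\cdot e^{-x/2}$ only tracks the Laplace weight correctly on the half-line $[0,\infty)$. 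This asymmetry forces the case split; a naive single interval straddling $0$ loses a factor of roughly $1+e^2$ instead of the targeted $e^2$, so the threshold $x_0 = e-1$ is chosen precisely to make the two bounds meet at the desired constant.
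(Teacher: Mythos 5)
Your proof is correct, and it reaches the stated bound by a slightly different route than the paper. Both arguments share the two essential ingredients: the uniform-interval bound of Lemma \ref{lem:unif}, and the observation that multiplying by $e^{-x/2}$ shifts frequencies by $i/2$ and hence stays inside $\mathcal{E}_s$. The difference is how the kink of $e^{-|x|}$ at the origin is handled. The paper uses a \emph{single} interval $[-1,2x+1]$ straddling zero: the reweighted function $w(y)=f(y)e^{-y/2}$ matches the Laplace weight exactly on $[0,2x+1]$ and is off by at most a factor $e^{2}$ on $[-1,0]$, so one application of Lemma \ref{lem:unif} (distance to boundary $x+1$) directly yields $e^{2}s/(1+x)$ --- in particular, contrary to your closing remark, the single straddling interval loses exactly $e^{2}$, not $1+e^{2}$. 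You instead split into two regimes: a direct unit-interval argument near the origin giving the universal bound $e\cdot s$, and the interval $[0,2x_0]$ for large $x_0$, on which the reweighted function tracks the Laplace weight \emph{exactly}, giving $s/x_0$. Your case split costs a little bookkeeping at the threshold $x_0=e-1$ but produces the pointwise sharper estimate $\min(e\,s,\ s/|x|)$, of which the claimed $e^{2}s/(1+|x|)$ is a weakening; the paper's version is shorter but saturates the $e^{2}$ constant everywhere. All individual steps check out (the symmetry reduction, $\|h\|_{[0,2x_0]}^2\le 2\|f\|_z^2$, and the threshold comparison $1+1/x_0<e^2$), so there is no gap.
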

\begin{proof}
Assume that $x$ is nonnegative. The same bound holds for negative $x$, since for any $f \in \mathcal{E}_s$, letting $f'(x) = f(-x)$, $f' \in \mathcal{E}_s$ as well.
For any $f \in \mathcal{T}_s$ define the weighted function $w(x) = f(x) \cdot \frac{1}{\sqrt{2}} e^{-x/2}$. We can see that $w(x) \in \mathcal{E}_s$ as defined in \eqref{eq:calE} by writing:
\begin{align*}
w(x) &= \frac{1}{\sqrt{2}} \sum_{j=1}^s a_j e^{ i \lambda_j x } e^{-x/2} = \frac{1}{\sqrt{2}}\sum_{j=1}^s a_j e^{ (i \lambda_j - 1/2) x}.
\end{align*}
We define the `correctly' weighted function $h(x) = f(x) \cdot \frac{1}{\sqrt{2}} e^{-|x|/2}$. Note that for any $y \in [-1,0]$, we have $h(y) \ge e^{-1} \cdot w(y)$.  Thus, we have:
\begin{align*}
\frac{|f(x)|^2 \cdot \frac{1}{2} e^{-|x|}}{\norm{f}_{z}^2} = \frac{|h(x)|^2}{\norm{h}_{2}^2} \le 
\frac{|h(x)|^2}{\norm{h}_{[-1,2x+1]}^2} \le e^2 \cdot \frac{|w(x)|^2}{\norm{w}_{[-1,2x+1]}^2}.
\end{align*}
Applying Lemma \ref{lem:unif} with $[a,b] = [1,2x+1]$ then gives:
\begin{align*}
\frac{|f(x)|^2 \cdot \frac{1}{2} e^{-|x|}}{\norm{f}_{z}^2} \le e^2 \cdot \frac{|w(x)|^2}{\norm{w}_{[-1,2x+1]}^2} \le \frac{e^2 \cdot s}{1+x},
\end{align*} 
completing the claim.
\end{proof}

\begin{claim}[Laplace Leverage Bound -- Large $x$]\label{clm:expLarge}
 Letting $z(x) = \frac{1}{2} e^{-|x|}$, when $|x| > 18s$,
 \begin{align*}
 \tau_{s,z}(x) \le e^{-|x|/6}.
 \end{align*}
\end{claim}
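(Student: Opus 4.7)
The plan closely parallels the proof of Claim \ref{clm:large} for the Gaussian case: use Tur\'{a}n's growth bound (Lemma \ref{lem:growth}) to control $|f(x)|^2$ by $\|f\|_{[0,x/2]}^2$, then lower bound $\|f\|_z^2$ by restricting the integral to that same interval, where the Laplace weight $\frac{1}{2}e^{-|y|}$ is not much smaller than $\frac{1}{2}e^{-x/2}$. By symmetry of the Laplace density and the invariance of $\mathcal{T}_s$ under the reflection $x \mapsto -x$, it suffices to prove the bound for $x > 0$.

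First, applying Lemma \ref{lem:growth} with $a=0$ and $d = x/2$ to any $f \in \mathcal{T}_s \subset \mathcal{E}_s^{-}$ yields
\begin{align*}
|f(x)|^2 \;\le\; \left(\frac{2e \cdot x}{x/2}\right)^{2s} \|f\|_{[0, x/2]}^2 \;=\; (4e)^{2s}\, \|f\|_{[0, x/2]}^2.
\end{align*}
Second, since $e^{-|y|} \ge e^{-x/2}$ for all $y \in [0, x/2]$,
\begin{align*}
\|f\|_z^2 \;\ge\; \int_0^{x/2} |f(y)|^2 \cdot \tfrac{1}{2} e^{-|y|}\, dy \;\ge\; \tfrac{1}{2} e^{-x/2}\, \|f\|_{[0, x/2]}^2.
\end{align*}

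Combining these two estimates,
\begin{align*}
\tau_{s,z}(x) \;=\; \sup_{f \in \mathcal{T}_s} \frac{|f(x)|^2 \cdot \tfrac{1}{2} e^{-x}}{\|f\|_z^2} \;\le\; \frac{(4e)^{2s}\, \tfrac{1}{2} e^{-x}}{\tfrac{1}{2} e^{-x/2}} \;=\; (4e)^{2s}\, e^{-x/2} \;=\; e^{2s \ln(4e) - x/2}.
\end{align*}
Since $\ln(4e) = 1 + \ln 4 < 2.4$, we have $2s \ln(4e) < 4.8 s$. When $|x| > 18 s$ the exponent is bounded by $4.8 s - x/2 < x/3 - x/2 = -x/6$, proving $\tau_{s,z}(x) \le e^{-|x|/6}$.

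The only real decision is how to pair the growth bound with the integral lower bound, and the constants work out comfortably enough that the threshold $|x| > 18s$ is far from tight (anything above roughly $14.4 s$ would work); the routine check is simply that $18/6 = 3 > 2 \ln(4e)$. Once Theorem \ref{thm:exp} is assembled, this claim combined with Claim \ref{clm:expSmall} and a width-$\sigma$ rescaling identical to the Gaussian case (replacing $g$ by $z$) gives the final two-part bound, and the integral $\int \bar\tau_{s,z}(x)\, dx$ splits into an $O(s \ln s)$ contribution from the bulk $|x| \le 9\sqrt{2}\sigma s$ and an $O(1)$ contribution from the exponentially decaying tail.
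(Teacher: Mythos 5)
Your proof is correct and follows essentially the same route as the paper's: apply Lemma \ref{lem:growth} with $a=0$, $d=x/2$, lower bound $\|f\|_z^2$ by restricting to $[0,x/2]$ where the Laplace weight is at least $\tfrac12 e^{-x/2}$, and check that $(4e)^{2s} \le e^{6s} \le e^{x/3}$ for $x \ge 18s$. Your constant tracking ($2\ln(4e) < 4.8$) is in fact slightly sharper than the paper's $\le 6s$ bound.
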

\begin{proof}
The proof is close to that of Claim \ref{clm:large} for the Gaussian density. As in Claim \ref{clm:expSmall},
assume without loss of generality that $x$ is nonnegative, so $x > 12s$.
Applying Lemma \ref{lem:growth} with $a = 0$ and $d =  x/2$ gives that for any $f \in \mathcal{T}_s$:
\begin{align*}
\frac{|f(x)|^2}{ \norm{f}_{[0,x/2]}^2} \le \left (4e \right)^{2s}.
\end{align*}
This gives:
\begin{align}\label{eq:this}
\tau_{s,z}(x) \le \frac{e^{-x} \cdot (4e)^{2s}}{e^{-x/2} } \le e^{-x/2 + 6s}.
\end{align}
When $x \ge 18 s$, $6s \le \frac{x}{3}$ and so \eqref{eq:this} gives $\tau_{s,g}(x) \le e^{(-1/2 + 1/3) \cdot x^2} \le e^{-x/6} $, completing the claim.
\end{proof}
We can prove Theorem \ref{thm:exp} directly from Claims \ref{clm:expLarge} and \ref{clm:expSmall}.
\begin{proof}[Proof of Theorem \ref{thm:exp}]
As in the proof of Theorem \ref{thm:exp}, we can observe that for the Laplace density $z(x) = \frac{1}{\sqrt{2}\sigma} \cdot e^{-|x|\sqrt{2}/\sigma}$, if we let $\bar z(x) = \frac{1}{2} e^{-|x|}$ be the density with variance $2$, we have:
$\tau_{s,z}(x) = \frac{\sqrt{2}}{ \sigma} \cdot \tau_{s,\bar z}(x\sqrt{2}/\sigma)$. 
By Claims \ref{clm:expSmall} and \ref{clm:expLarge}, if we define:
\begin{align*}
\bar \tau_{s,z}(x) = \begin{cases} \frac{\sqrt{2}}{\sigma} \cdot e^{-|x|\sqrt{2} /(6 \sigma)} \text{ for } |x| \ge 9\sqrt{2} \sigma \cdot{s}\\
\frac{\sqrt{2}}{\sigma} \cdot \frac{e^2 \cdot s}{1+ |x|\sqrt{2}/\sigma} \text{ for } |x| \le 9\sqrt{2} \sigma \cdot{s}
\end{cases} 
\end{align*}
we have 
\begin{align*}
\tau_{s,z}(x) = \frac{1}{\sqrt{2} \sigma} \cdot \tau_{s,\bar z}(x\sqrt{2}/\sigma) \le \bar \tau_{s,z} (x).
\end{align*}
Further,
\begin{align*}
\int_{\infty}^\infty \bar \tau_{s,z} (x) \ dx &= \frac{2\sqrt{2} e^2}{\sigma} \cdot \int_{0}^{9\sqrt{2} \sigma {s}}  \frac{s}{1+|x|\sqrt{2}/\sigma} \ dx + \frac{2\sqrt{2} }{\sigma} \int_{9\sqrt{2}  \sigma \cdot{s}}^{\infty} e^{-x\sqrt{2}/(6 \sigma)} \ dx\\
&=  2e^2 s \cdot \int_{0}^{18 {s}}  \frac{1}{1+x} \ dx + 2 \int_{18{s}}^{\infty} e^{-x/6} \ dx\\
&\le 2e^2 s \cdot \ln(18s+1) + 1,
\end{align*}
which completes the theorem.
\end{proof}

\subsection{Gap-based bounds}

Finally, we show how to obtain tighter bounds for the Gaussian density when considering functions in $\mathcal T_{s,\gamma}$, whose frequencies have minimum gap $\gamma > 0$ (see \eqref{eq:tsGap}). We show:
\begin{claim}\label{clm:gaussianGap}
Letting $g(x) = \frac{1}{\sqrt{\pi}} e^{-x^2}$, for all $x \in \R$:
\begin{align*}
\tau_{s,\gamma,g}(x) \le \left( \frac{\gamma}{6} \, e^{4\pi^2/\gamma^2} \right) \cdot  se^{-x^2}.
\end{align*}
\end{claim}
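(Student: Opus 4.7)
The plan is to combine a simple Cauchy–Schwarz pointwise bound on $|f(x)|$ (which uses the sparsity $s$) with a lower bound on $\|f\|_g^2$ that exploits the frequency gap via Ingham's inequality. The gap assumption is precisely what makes the sum of squared coefficients $\sum_j |a_j|^2$ a faithful proxy for the $L_2$ norm on a sufficiently wide interval, which is exactly the ingredient the earlier Gaussian bounds (Claim \ref{clm:small}) lacked.

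First, for any $f(x) = \sum_{j=1}^s a_j e^{i\lambda_j x} \in \mathcal{T}_{s,\gamma}$, Cauchy–Schwarz together with $|e^{i\lambda_j x}| = 1$ gives the uniform pointwise bound
\begin{equation*}
|f(x)|^2 \;\le\; s \cdot \sum_{j=1}^s |a_j|^2.
\end{equation*}
Next, I will lower bound the denominator $\|f\|_g^2 = \tfrac{1}{\sqrt{\pi}} \int_{\R} |f(y)|^2 e^{-y^2}\,dy$ by restricting the integration to $[-2\pi/\gamma, 2\pi/\gamma]$, where the Gaussian weight is at least $e^{-4\pi^2/\gamma^2}$. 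This yields
\begin{equation*}
\|f\|_g^2 \;\ge\; \frac{e^{-4\pi^2/\gamma^2}}{\sqrt{\pi}} \cdot \|f\|_{[-2\pi/\gamma,\, 2\pi/\gamma]}^2.
\end{equation*}

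The key step is then to apply Corollary \ref{cor:gap} (the $T = 2\pi/\gamma$ case of Ingham's inequality), which gives $\|f\|_{[-2\pi/\gamma,\, 2\pi/\gamma]}^2 \ge \tfrac{6}{\gamma} \sum_{j=1}^s |a_j|^2$. Chaining this with the previous inequality produces
\begin{equation*}
\|f\|_g^2 \;\ge\; \frac{6 \, e^{-4\pi^2/\gamma^2}}{\gamma \sqrt{\pi}} \sum_{j=1}^s |a_j|^2.
\end{equation*}

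Finally, I combine the two bounds in the definition of the leverage score:
\begin{equation*}
\frac{|f(x)|^2 \cdot g(x)}{\|f\|_g^2} \;\le\; \frac{s \sum_j |a_j|^2 \cdot \tfrac{1}{\sqrt{\pi}} e^{-x^2}}{\tfrac{6 e^{-4\pi^2/\gamma^2}}{\gamma\sqrt{\pi}} \sum_j |a_j|^2} \;=\; \left( \frac{\gamma}{6} \, e^{4\pi^2/\gamma^2} \right) \cdot s \, e^{-x^2},
\end{equation*}
and taking the supremum over $f \in \mathcal{T}_{s,\gamma}$ yields the claim. There is no serious obstacle here: the only substantive ingredient is Ingham's inequality, which is already imported as Corollary \ref{cor:gap}, and the rest is bookkeeping. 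The main point worth emphasizing is that, unlike the general bounds from Claims \ref{clm:small} and \ref{clm:large}, this estimate integrates to $O(s)$ over $\R$, matching the trivial lower bound up to constants depending on $\gamma$.
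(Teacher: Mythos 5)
Your proof is correct and follows essentially the same route as the paper's: Cauchy--Schwarz for the pointwise bound $|f(x)|^2 \le s\sum_j |a_j|^2$, Ingham's inequality (Corollary \ref{cor:gap}) to relate $\sum_j |a_j|^2$ to the $L_2$ norm on $[-2\pi/\gamma, 2\pi/\gamma]$, and the lower bound $e^{-y^2}\ge e^{-4\pi^2/\gamma^2}$ on that interval to pass to the Gaussian-weighted norm. The only difference is cosmetic -- you lower bound the denominator while the paper chains the inequalities as successive upper bounds on $|f(x)|^2$.
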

The above leverage score upper bound is just a scaling of the data density $e^{-x^2}$. For $\gamma = \Omega(1)$, it integrates to $O(s)$, within a constant factor of the lower bound $\int_{x \in \R}  \tau_{s,\gamma,g}(x)\, dx \ge s$ given by restricting $\mathcal{T}_{s,\gamma}$ to just just a single fixed set of frequencies. 

 Claim \ref{clm:gaussianGap} can be turned into a leverage score bound  for the Gaussian density of any width, using the simple scaling argument of Theorem \ref{thm:gaussian} giving:
\begin{theorem}[Gaussian Leverage Bound -- Gap Condition]\label{clm:gaussianGap2} Consider the Gaussian density with variance $\sigma^2 > 0$, $g(x) = \frac{1}{\sigma \sqrt{2\pi}} e^{-x^2/(2\sigma^2)}$, and let:
\begin{align*}
\bar \tau_{s,\gamma,g}(x) \le  \left( \frac{\gamma}{6} \, e^{4\pi^2/\gamma^2} \cdot  s \right) \cdot \left ( \frac{1}{\sqrt{2} \sigma} \cdot e^{-x^2/(2\sigma^2)} \right ).
\end{align*}
We have $\tau_{s,\gamma,g}(x) \le \bar \tau_{s,\gamma,g}(x)$ for all $x \in \R$ and $\int_{-\infty}^\infty \bar \tau_{s,\gamma,g}(x) dx = \frac{\gamma}{6} \, e^{4\pi^2/\gamma^2} \cdot  \sqrt{\pi} s$.
\end{theorem}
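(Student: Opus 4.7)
\noindent\textbf{Proof Proposal for Theorem \ref{clm:gaussianGap2}.}

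The plan is to derive the result from Claim \ref{clm:gaussianGap} (the gap-based bound for the unit-width Gaussian $\bar g(x) = \frac{1}{\sqrt{\pi}} e^{-x^2}$) by essentially the same change-of-variables argument used at the end of Section \ref{sec:gauss} to pass from Claims \ref{clm:small} and \ref{clm:large} to Theorem \ref{thm:gaussian}. Nothing new is needed beyond bookkeeping; the only wrinkle is that the gap parameter rescales alongside the time variable, so it must be tracked carefully.

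First I would set up the scaling. For any $f \in \mathcal{T}_{s,\gamma}$ with $f(x) = \sum_j a_j e^{i\lambda_j x}$, define the rescaled function $f_\sigma(x) = f(\sqrt{2}\sigma \cdot x) = \sum_j a_j e^{i(\sqrt{2}\sigma \lambda_j)x}$. This function again lies in a gap-restricted class, but with the rescaled gap $\sqrt{2}\sigma\gamma$, i.e.\ $f_\sigma \in \mathcal{T}_{s,\sqrt{2}\sigma\gamma}$. A direct change of variables $y = x/(\sqrt{2}\sigma)$ in the Gaussian integral (as in the proof of Theorem \ref{thm:gaussian}) yields the identity
\begin{align*}
\frac{|f(x)|^2 \cdot g(x)}{\|f\|_g^2} = \frac{1}{\sqrt{2}\sigma} \cdot \frac{|f_\sigma(x/(\sqrt{2}\sigma))|^2 \cdot \bar g(x/(\sqrt{2}\sigma))}{\|f_\sigma\|_{\bar g}^2},
\end{align*}
and taking the supremum over $f \in \mathcal{T}_{s,\gamma}$ gives $\tau_{s,\gamma,g}(x) = \frac{1}{\sqrt{2}\sigma} \cdot \tau_{s,\sqrt{2}\sigma\gamma,\bar g}\bigl(x/(\sqrt{2}\sigma)\bigr)$.

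Next I would apply Claim \ref{clm:gaussianGap} to the rescaled leverage score on the right (with gap parameter $\sqrt{2}\sigma\gamma$ in place of $\gamma$) and substitute $u = x/(\sqrt{2}\sigma)$ to rewrite the resulting exponential factor $e^{-u^2}$ as $e^{-x^2/(2\sigma^2)}$. Collecting the constants produces the pointwise inequality $\tau_{s,\gamma,g}(x) \le \bar\tau_{s,\gamma,g}(x)$ in the form stated in the theorem (up to tracking how the $\gamma$-dependence in the exponential factor transforms under rescaling, which is the one place I expect to have to double-check the algebra to match the stated constants). Finally, integrating, the only nontrivial piece is $\int_{-\infty}^{\infty} \frac{1}{\sqrt{2}\sigma} e^{-x^2/(2\sigma^2)}\, dx = \sqrt{\pi}$ (since $\int e^{-x^2/(2\sigma^2)}\,dx = \sigma\sqrt{2\pi}$), so the integral of $\bar\tau_{s,\gamma,g}$ equals $\frac{\gamma}{6} e^{4\pi^2/\gamma^2} \sqrt{\pi} s$ as claimed.

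I do not expect any serious obstacle: the hard work is done in Claim \ref{clm:gaussianGap}, and Theorem \ref{clm:gaussianGap2} is essentially a corollary obtained by a linear change of variables plus an evaluation of the Gaussian normalizing integral. The only thing to be careful about is that the gap parameter $\gamma$ enters Claim \ref{clm:gaussianGap} in a nontrivial way (through both a multiplicative factor $\gamma/6$ and an exponential factor $e^{4\pi^2/\gamma^2}$), so one must track the substitution $\gamma \mapsto \sqrt{2}\sigma\gamma$ consistently through both factors before simplifying.
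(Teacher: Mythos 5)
Your approach is exactly the paper's: the paper gives no proof of Theorem \ref{clm:gaussianGap2} beyond the remark that it follows from Claim \ref{clm:gaussianGap} ``using the simple scaling argument of Theorem \ref{thm:gaussian},'' which is precisely the substitution $f_\sigma(u) = f(\sqrt{2}\sigma u)$ and the identity $\tau_{s,\cdot,g}(x) = \frac{1}{\sqrt{2}\sigma}\tau_{s,\cdot,\bar g}(x/(\sqrt{2}\sigma))$ that you set up.

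The one point you flag for double-checking is in fact where the stated constants fail to come out, and your instinct to rescale the gap is the correct one. Carrying your substitution through: $f_\sigma \in \mathcal{T}_{s,\sqrt{2}\sigma\gamma}$, so Claim \ref{clm:gaussianGap} applied with gap parameter $\sqrt{2}\sigma\gamma$ gives $\tau_{s,\sqrt{2}\sigma\gamma,\bar g}(u) \le \frac{\sqrt{2}\sigma\gamma}{6}e^{2\pi^2/(\sigma^2\gamma^2)}\, s\, e^{-u^2}$, and hence
\begin{align*}
\tau_{s,\gamma,g}(x) \;\le\; \frac{1}{\sqrt{2}\sigma}\cdot\frac{\sqrt{2}\sigma\gamma}{6}\,e^{2\pi^2/(\sigma^2\gamma^2)}\, s\, e^{-x^2/(2\sigma^2)} \;=\; \frac{\gamma}{6}\,e^{2\pi^2/(\sigma^2\gamma^2)}\, s\, e^{-x^2/(2\sigma^2)},
\end{align*}
with integral $\frac{\gamma}{6}e^{2\pi^2/(\sigma^2\gamma^2)}\, s\,\sigma\sqrt{2\pi}$. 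This agrees with the theorem as stated only at $\sigma = 1/\sqrt{2}$; for general $\sigma$ the paper's displayed bound keeps $\gamma$ un-rescaled inside both the prefactor and the exponential while still attaching the $\frac{1}{\sqrt{2}\sigma}$ and $e^{-x^2/(2\sigma^2)}$ factors, which is dimensionally inconsistent (the stated $\bar\tau_{s,\gamma,g}$ has units of $1/\mathrm{length}^2$ and its stated integral is not dimensionless). So your proof strategy is right, but executed carefully it proves the corrected statement above rather than the literal one in the paper; you can verify the corrected form independently by running the Cauchy--Schwarz/Ingham/Tur\'{a}n-weighting argument of Claim \ref{clm:gaussianGap} directly against the width-$\sigma$ density, which yields the same expression.
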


\begin{proof}[Proof of Claim \ref{clm:gaussianGap}]
Consider $f \in \mathcal T_{s,\gamma}$ with $f(x) := \sum_{j=1}^s{a_je^{i\lambda_j x}},$ and $\min_{j,k}  |\lambda_k-\lambda_j| \ge \gamma > 0$. Combining the Cauchy-Schwarz inequality with Ingham's inequality (Lemma \ref{lem:gap}), we obtain
\begin{align*}
|f(x)|^2 = \left| \sum_{j=1}^s{a_j e^{i\lambda_j x}} \right|^2 &\leq \left( \sum_{j=1}^s{\left| e^{i\lambda_j t} \right|^2} \right) \left( \sum_{j=1}^s{|a_j|^2} \right) \\  
& \leq   \frac{\gamma s}{6}  \int_{-2\pi/\gamma}^{2\pi/\gamma}{\left| \sum_{j=1}^s{a_j e^{i\lambda_j x}} \right|^2 \, dx} \\ 
& \leq \left( \frac{\gamma}{6} \, e^{4\pi^2/\gamma^2} \right) s \int_{\R}{|f(x)|^2e^{-x^2} \, dx}.
\end{align*}
Hence
$$g(x) \cdot |f(x)|^2 \leq \left( \frac{\gamma}{6} \, e^{4\pi^2/\gamma^2} \right) s e^{-x^2} \cdot \norm{f}_g^2,$$
completing the claim.
\end{proof}

\bibliographystyle{alpha}
\bibliography{gaussianScores}	

\newcommand{\etalchar}[1]{$^{#1}$}
\begin{thebibliography}{AKK{\etalchar{+}}20b}

\bibitem[AC20]{AdcockCardenas:2020}
Ben Adcock and Juan~M. Cardenas.
\newblock Near-optimal sampling strategies for multivariate function
  approximation on general domains.
\newblock {\em SIAM Journal on Mathematics of Data Science}, 2020.

\bibitem[AKK{\etalchar{+}}20a]{AgarwalKakadeKidambi:2020}
Naman Agarwal, Sham Kakade, Rahul Kidambi, Yin-Tat Lee, Praneeth Netrapalli,
  and Aaron Sidford.
\newblock Leverage score sampling for faster accelerated regression and {ERM}.
\newblock In {\em Proceedings of the 31st International Conference on
  Algorithmic Learning Theory}, volume 117, pages 22--47, 2020.

\bibitem[AKK{\etalchar{+}}20b]{AhleKapralovKnudsen:2020}
Thomas~D. Ahle, Michael Kapralov, Jakob B.~T. Knudsen, Rasmus Pagh, Ameya
  Velingker, David~P. Woodruff, and Amir Zandieh.
\newblock Oblivious sketching of high-degree polynomial kernels.
\newblock In {\em \SODA{2020}}, pages 141--160, 2020.

\bibitem[AKM{\etalchar{+}}17]{AvronKapralovMusco:2017}
Haim Avron, Michael Kapralov, Cameron Musco, Christopher Musco, Ameya
  Velingker, and Amir Zandieh.
\newblock Random {F}ourier features for kernel ridge regression: approximation
  bounds and statistical guarantees.
\newblock In {\em \ICML{2017}}, pages 253--262, 2017.

\bibitem[AKM{\etalchar{+}}19]{AvronKapralovMusco:2019}
Haim Avron, Michael Kapralov, Cameron Musco, Christopher Musco, Ameya
  Velingker, and Amir Zandieh.
\newblock A universal sampling method for reconstructing signals with simple
  {F}ourier transforms.
\newblock In {\em \STOC{2019}}, 2019.

\bibitem[AM15]{AlaouiMahoney:2015}
Ahmed Alaoui and Michael~W. Mahoney.
\newblock Fast randomized kernel ridge regression with statistical guarantees.
\newblock In {\em \NIPS{2015}}, pages 775--783, 2015.

\bibitem[ANW14]{avron2014subspace}
Haim Avron, Huy Nguyen, and David Woodruff.
\newblock Subspace embeddings for the polynomial kernel.
\newblock In {\em \NIPS{2014}}, pages 2258--2266, 2014.

\bibitem[Bac17]{Bach:2017}
Francis Bach.
\newblock On the equivalence between kernel quadrature rules and random feature
  expansions.
\newblock {\em Journal of Machine Learning Research}, 18(21):1--38, 2017.

\bibitem[BDMI14]{boutsidis2014near}
Christos Boutsidis, Petros Drineas, and Malik Magdon-Ismail.
\newblock Near-optimal column-based matrix reconstruction.
\newblock {\em SIAM Journal on Computing}, 43(2):687--717, 2014.

\bibitem[BE95]{BorweinErdelyi:1995}
Peter Borwein and Tam{\'a}s Erd{\'e}lyi.
\newblock {\em Polynomials and polynomial inequalities}, volume 161 of {\em
  Graduate Texts in Mathematics}.
\newblock Springer-Verlag, New York, 1995.

\bibitem[BE96]{BorweinTamas:1996}
Peter Borwein and Tam\'{a}s Erd\'{e}lyi.
\newblock A sharp {B}ernstein-type inequality for exponential sums.
\newblock {\em Journal f\"{u}r die reine und angewandte Mathematik}, pages
  127--141, 1996.

\bibitem[BE00]{borwein2000pointwise}
Peter Borwein and Tam{\'a}s Erd{\'e}lyi.
\newblock Pointwise {R}emez-and {N}ikolskii-type inequalities for exponential
  sums.
\newblock {\em Mathematische Annalen}, 316(1):39--60, 2000.

\bibitem[BE06]{BorweinErdelyi:2006}
Peter Borwein and Tam{\'a}s Erd{\'e}lyi.
\newblock Nikolskii-type inequalities for shift invariant function spaces.
\newblock {\em Proceedings of the American Mathematical Society},
  134(11):3243--3246, 2006.

\bibitem[CCM{\etalchar{+}}15]{ChkifaCohenMigliorati:2015}
Abdellah Chkifa, Albert Cohen, Giovanni Migliorati, Fabio Nobile, and Raul
  Tempone.
\newblock Discrete least squares polynomial approximation with random
  evaluations - application to parametric and stochastic elliptic {PDE}s.
\newblock {\em ESAIM: M2AN}, 49(3):815--837, 2015.

\bibitem[CKPS16]{ChenKanePrice:2016}
Xue Chen, Daniel~M. Kane, Eric Price, and Zhao Song.
\newblock Fourier-sparse interpolation without a frequency gap.
\newblock In {\em \FOCS{2016}}, pages 741--750, 2016.
\newblock \farXiv{1609.01361}.

\bibitem[CLM{\etalchar{+}}15]{CohenLeeMusco:2015}
Michael~B. Cohen, Yin~Tat Lee, Cameron Musco, Christopher Musco, Richard Peng,
  and Aaron Sidford.
\newblock Uniform sampling for matrix approximation.
\newblock In {\em \ITCS{2015}}, pages 181--190, 2015.

\bibitem[CM17]{CohenMigliorati:2017}
Albert Cohen and Giovanni Migliorati.
\newblock Optimal weighted least-squares methods.
\newblock {\em {SMIA} Journal of Computational Mathematics}, 3:181--203, 2017.

\bibitem[CMM17]{CohenMuscoMusco:2017}
Michael~B. Cohen, Cameron Musco, and Christopher Musco.
\newblock Input sparsity time low-rank approximation via ridge leverage score
  sampling.
\newblock In {\em \SODA{2017}}, pages 1758--1777, 2017.

\bibitem[CNW16]{cohen2016optimal}
Michael~B Cohen, Jelani Nelson, and David~P Woodruff.
\newblock Optimal approximate matrix product in terms of stable rank.
\newblock In {\em \ICALP{2016}}, 2016.

\bibitem[CP17]{chen2017relative}
Di~Chen and Jeff~M Phillips.
\newblock Relative error embeddings of the {G}aussian kernel distance.
\newblock In {\em International Conference on Algorithmic Learning Theory},
  pages 560--576, 2017.

\bibitem[CP19a]{ChenPrice:2019a}
Xue Chen and Eric Price.
\newblock Active regression via linear-sample sparsification.
\newblock {\em \COLT{2019}}, 2019.

\bibitem[CP19b]{ChenPrice:2019}
Xue Chen and Eric Price.
\newblock Estimating the frequency of a clustered signal.
\newblock In {\em \ICALP{2019}}, 2019.

\bibitem[CVSK16]{ChenVarmaSingh:2016}
Sihen {Chen}, Rohan {Varma}, Aarti {Singh}, and Jelena {Kova{\v c}cevi{\'c}}.
\newblock A statistical perspective of sampling scores for linear regression.
\newblock In {\em \ISIT{2016}}, pages 1556--1560, 2016.

\bibitem[Den16]{Denisov:2016}
S.~Denisov.
\newblock On the size of the polynomials orthonormal on the unit circle with
  respect to a measure which is a sum of the {L}ebesgue measure and $ p$ point
  masses.
\newblock {\em Proceedings of the American Mathematical Society},
  144:1029--1039, 2016.

\bibitem[DM16]{DrineasMahoney:2016}
Petros Drineas and Michael~W. Mahoney.
\newblock Rand{NLA}: Randomized numerical linear algebra.
\newblock {\em Communications of the ACM}, 59(6), 2016.

\bibitem[DMM06a]{DrineasMahoneyMuthukrishnan:2006}
Petros Drineas, Michael~W. Mahoney, and S.~Muthukrishnan.
\newblock Sampling algorithms for {$\ell_2$} regression and applications.
\newblock In {\em \SODA{2006}}, pages 1127--1136, 2006.

\bibitem[DMM06b]{drineas2006subspace}
Petros Drineas, Michael~W Mahoney, and Shanmugavelayutham Muthukrishnan.
\newblock Subspace sampling and relative-error matrix approximation:
  Column-based methods.
\newblock In {\em Approximation, Randomization, and Combinatorial Optimization.
  Algorithms and Techniques}, pages 316--326. Springer, 2006.

\bibitem[DWH18]{DerezinskiWarmuthHsu:2018}
Michal Derezinski, Manfred K.~K Warmuth, and Daniel~J Hsu.
\newblock Leveraged volume sampling for linear regression.
\newblock In {\em \NIPS{2018}}. 2018.

\bibitem[ELMM20]{EldarLiMusco:2020}
Yonina~C. Eldar, Jerry Li, Cameron Musco, and Christopher Musco.
\newblock Sample efficient toeplitz covariance estimation.
\newblock {\em \SODA{2020}}, 2020.

\bibitem[Erd17]{Erdelyi:2017}
Tam\'{a}s Erd\'{e}lyi.
\newblock Inequalities for exponential sums.
\newblock {\em Sbornik: Mathematics}, 208(3):433--464, 2017.

\bibitem[FB96]{FengBresler:1996}
Ping Feng and Yoram Bresler.
\newblock Spectrum-blind minimum-rate sampling and reconstruction of multiband
  signals.
\newblock In {\em \ICASSP{1996}}, pages 1688--1691, 1996.

\bibitem[FSS19]{fanuel2019nystr}
Micha{\"e}l Fanuel, Joachim Schreurs, and Johan~AK Suykens.
\newblock Nystr\"{o}m landmark sampling and regularized {C}hristoffel
  functions.
\newblock {\em \arXiv{1905.12346}}, 2019.

\bibitem[GM13]{GittensMahoney:2013}
Alex Gittens and Michael Mahoney.
\newblock Revisiting the {N}ystr\"{o}m method for improved large-scale machine
  learning.
\newblock In {\em \ICML{2013}}, pages 567--575, 2013.

\bibitem[GS12]{guruswami2012optimal}
Venkatesan Guruswami and Ali~Kemal Sinop.
\newblock Optimal column-based low-rank matrix reconstruction.
\newblock In {\em \SODA{2012}}, pages 1207--1214. SIAM, 2012.

\bibitem[HD15]{HamptonDoostan:2015}
Jerrad Hampton and Alireza Doostan.
\newblock Coherence motivated sampling and convergence analysis of least
  squares polynomial chaos regression.
\newblock {\em Computer Methods in Applied Mechanics and Engineering},
  290:73--97, 2015.

\bibitem[HS93]{HandcockStein:1993}
Mark~S. Handcock and Michael~L. Stein.
\newblock A {B}ayesian analysis of kriging.
\newblock {\em Technometrics}, 35(4):403--410, 1993.

\bibitem[Ing36]{ingham1936some}
Albert~Edward Ingham.
\newblock Some trigonometrical inequalities with applications to the theory of
  series.
\newblock {\em Mathematische Zeitschrift}, 41(1):367--379, 1936.

\bibitem[KKP{\etalchar{+}}20]{kammonen2020adaptive}
Aku Kammonen, Jonas Kiessling, Petr Plech{\'a}{\v{c}}, Mattias Sandberg, and
  Anders Szepessy.
\newblock Adaptive random {F}ourier features with {M}etropolis sampling.
\newblock {\em \arXiv{2007.10683}}, 2020.

\bibitem[K{\'o}s08]{Kos:2008}
G.~K{\'o}s.
\newblock Two {T}ur{\'a}n type inequalities.
\newblock {\em Acta Mathematica Hungarica}, 119(3):219--226, 2008.

\bibitem[Kot33]{Kotelnikov:1933}
Vladimir~A. Kotelnikov.
\newblock On the carrying capacity of the ether and wire in telecommunications.
\newblock {\em Material for the First All-Union Conference on Questions of
  Communication, Izd. Red. Upr. Svyazi RKKA}, 1933.

\bibitem[KS16]{KyngSachdeva:2016}
Rasmus Kyng and Sushant Sachdeva.
\newblock Approximate {G}aussian elimination for laplacians - fast, sparse, and
  simple.
\newblock {\em \FOCS{2016}}, pages 573--582, 2016.

\bibitem[Lan67]{Landau:1967a}
H.~J. Landau.
\newblock Sampling, data transmission, and the {N}yquist rate.
\newblock {\em Proceedings of the IEEE}, 55(10):1701--1706, 1967.

\bibitem[LH12]{LakeyHogan:2012}
Joseph~D. Lakey and Jeffrey~A. Hogan.
\newblock On the numerical computation of certain eigenfunctions of time and
  multiband limiting.
\newblock {\em Numerical Functional Analysis and Optimization},
  33(7-9):1095--1111, 2012.

\bibitem[LHC{\etalchar{+}}20]{liu2019random}
Fanghui Liu, Xiaolin Huang, Yudong Chen, Jie Yang, and Johan~AK Suykens.
\newblock Random {F}ourier features via fast surrogate leverage weighted
  sampling.
\newblock In {\em Proceedings of the $34^{th}$ {AAAI} Conference on Artificial
  Intelligence (AAAI)}, 2020.

\bibitem[Lor83]{Lorch:1983}
Lee Lorch.
\newblock Alternative proof of a sharpened form of {B}ernstein's inequality for
  {L}egendre polynomials.
\newblock {\em Applicable Analysis}, 14(3):237--240, 1983.

\bibitem[LP61]{LandauPollak:1961}
Henry~J. Landau and Henry~O. Pollak.
\newblock Prolate spheroidal wave functions, {F}ourier analysis and uncertainty
  -- {II}.
\newblock {\em The Bell System Technical Journal}, 40(1):65--84, 1961.

\bibitem[LP62]{LandauPollak:1962}
Henry~J. Landau and Henry~O. Pollak.
\newblock Prolate spheroidal wave functions, {F}ourier analysis and uncertainty
  -- {III}: The dimension of the space of essentially time- and band-limited
  signals.
\newblock {\em The Bell System Technical Journal}, 41(4):1295--1336, 1962.

\bibitem[LP19]{LasserrePauwels:2019}
Jean~B. Lasserre and Edouard Pauwels.
\newblock The empirical {C}hristoffel function with applications in data
  analysis.
\newblock {\em Advances in Computational Mathematics}, 45(3):1439--1468, 2019.

\bibitem[LS15]{LeeSidford:2015}
Yin~Tat Lee and Aaron Sidford.
\newblock Efficient inverse maintenance and faster algorithms for linear
  programming.
\newblock In {\em \FOCS{2015}}, 2015.

\bibitem[LTOS19]{li2018towards}
Zhu Li, Jean-Francois Ton, Dino Oglic, and Dino Sejdinovic.
\newblock Towards a unified analysis of random {F}ourier features.
\newblock {\em \ICML{2019}}, 2019.

\bibitem[Lub15]{lubinsky2015dirichlet}
Doron~S Lubinsky.
\newblock Dirichlet orthogonal polynomials with laguerre weight.
\newblock {\em Journal of Approximation Theory}, 194:146--156, 2015.

\bibitem[LW80]{LandauWidom:1980}
Henry~J. Landau and Harold Widom.
\newblock Eigenvalue distribution of time and frequency limiting.
\newblock {\em Journal of Mathematical Analysis and Applications},
  77(2):469--481, 1980.

\bibitem[MDMW12]{MahoneyDrineasMagdon-Ismail:2012}
Michael~W. Mahoney, Petros Drineas, Malik Magdon{-}Ismail, and David~P.
  Woodruff.
\newblock Fast approximation of matrix coherence and statistical leverage.
\newblock {\em Journal of Machine Learning Research}, 13:3475--3506, 2012.
\newblock \pICML{2012}.

\bibitem[ME09]{MishaliEldar:2009}
Moshe Mishali and Yonina~C. Eldar.
\newblock Blind multiband signal reconstruction: Compressed sensing for analog
  signals.
\newblock {\em IEEE Transactions on Signal Processing}, 57(3):993--1009, 2009.

\bibitem[Mig15]{Migliorati:2015}
Giovanni Migliorati.
\newblock Multivariate {M}arkov-type and {N}ikolskii-type inequalities for
  polynomials associated with downward closed multi-index sets.
\newblock {\em Journal of Approximation Theory}, 189:137 -- 159, 2015.

\bibitem[MM17]{MuscoMusco:2017}
Cameron Musco and Christopher Musco.
\newblock Recursive sampling for the {N}ystr\"{o}m method.
\newblock In {\em \NIPS{2017}}, pages 3833--3845, 2017.

\bibitem[MM20]{meyer2020statistical}
Raphael~A Meyer and Christopher Musco.
\newblock The statistical cost of robust kernel hyperparameter tuning.
\newblock In {\em \NIPS{2020}}, 2020.

\bibitem[MMY15]{MaMahoneyYu:2015}
Ping Ma, Michael~W. Mahoney, and Bin Yu.
\newblock A statistical perspective on algorithmic leveraging.
\newblock {\em Journal of Machine Learning Research}, 16(1):861--911, 2015.

\bibitem[MSW19]{meister2019tight}
Michela Meister, Tam\'{a}s Sarl\'{o}s, and David Woodruff.
\newblock Tight dimensionality reduction for sketching low degree polynomial
  kernels.
\newblock In {\em \NIPS{2019}}, pages 9470--9481, 2019.

\bibitem[Naz93]{Nazarov:1993}
F.~L. Nazarov.
\newblock Local estimates for exponential polynomials and their applications to
  inequalities of the uncertainty principle type. (russian).
\newblock {\em Algebra i Analiz}, 5(4):3--66, 1993.
\newblock Translation in St. Petersburg Math. J. 5 (1994), no. 4, 663--717.

\bibitem[Nev86]{Nevai:1986}
Paul Nevai.
\newblock G{\'e}za {F}reud, orthogonal polynomials and {C}hristoffel functions.
  a case study.
\newblock {\em Journal of Approximation Theory}, 48(1):3 -- 167, 1986.

\bibitem[NM13]{NetelerMitasova:2013}
Markus Neteler and Helena Mitasova.
\newblock {\em Open source GIS: a GRASS GIS approach}, volume 689.
\newblock Springer Science \& Business Media, 2013.

\bibitem[Nyq28]{Nyquist:1928}
Harry Nyquist.
\newblock Certain topics in telegraph transmission theory.
\newblock {\em Transactions of the American Institute of Electrical Engineers},
  47(2):617--644, 1928.

\bibitem[OR14]{OsipovRokhlin:2014}
Andrei Osipov and Vladimir Rokhlin.
\newblock On the evaluation of prolate spheroidal wave functions and associated
  quadrature rules.
\newblock {\em Applied and Computational Harmonic Analysis}, 36(1):108--142,
  2014.

\bibitem[PBV18]{PauwelsBachVert:2018}
Edouard Pauwels, Francis Bach, and Jean-Philippe Vert.
\newblock Relating leverage scores and density using regularized {C}hristoffel
  functions.
\newblock In {\em \NIPS{2018}}, pages 1670--1679, 2018.

\bibitem[PP13]{PhamPagh:2013}
Ninh Pham and Rasmus Pagh.
\newblock Fast and scalable polynomial kernels via explicit feature maps.
\newblock In {\em \KDD{2013}}, pages 239--247, 2013.

\bibitem[PT20]{phillips2020gaussiansketch}
Jeff~M Phillips and Wai~Ming Tai.
\newblock The {G}aussian{S}ketch for almost relative error kernel distance.
\newblock In {\em \APPROX{2020}}, 2020.

\bibitem[PVG{\etalchar{+}}11]{sklearn}
F.~Pedregosa, G.~Varoquaux, A.~Gramfort, V.~Michel, B.~Thirion, O.~Grisel,
  M.~Blondel, P.~Prettenhofer, R.~Weiss, V.~Dubourg, J.~Vanderplas, A.~Passos,
  D.~Cournapeau, M.~Brucher, M.~Perrot, and E.~Duchesnay.
\newblock Scikit-learn: Machine learning in {P}ython.
\newblock {\em Journal of Machine Learning Research}, 12, 2011.

\bibitem[RR07]{RahimiRecht:2007}
Ali Rahimi and Benjamin Recht.
\newblock Random features for large-scale kernel machines.
\newblock In {\em \NIPS{2007}}, pages 1177--1184, 2007.

\bibitem[RW06]{RasmussenWilliams06}
Carl~Edward Rasmussen and Christopher K.~I. Williams.
\newblock {\em Gaussian Processes for Machine Learning}.
\newblock The MIT Press, 2006.

\bibitem[RW12]{RauhutWard:2012}
Holger Rauhut and Rachel Ward.
\newblock Sparse {L}egendre expansions via $\ell 1$-minimization.
\newblock {\em Journal of Approximation Theory}, 164(5):517 -- 533, 2012.

\bibitem[Sar06]{Sarlos:2006}
Tam\'{a}s Sarl\'{o}s.
\newblock Improved approximation algorithms for large matrices via random
  projections.
\newblock In {\em \FOCS{2006}}, pages 143--152, 2006.

\bibitem[Sha49]{Shannon:1949}
Claude~E. Shannon.
\newblock Communication in the presence of noise.
\newblock {\em Proceedings of the Institute of Radio Engineers}, 37(1):10--21,
  1949.

\bibitem[SK19]{shahrampour2019sampling}
Shahin Shahrampour and Soheil Kolouri.
\newblock On sampling random features from empirical leverage scores:
  Implementation and theoretical guarantees.
\newblock {\em \arXiv{1903.08329}}, 2019.

\bibitem[SP61]{SlepianPollak:1961}
David Slepian and Henry~O. Pollak.
\newblock Prolate spheroidal wave functions, {F}ourier analysis and uncertainty
  -- {I}.
\newblock {\em The Bell System Technical Journal}, 40(1):43--63, 1961.

\bibitem[SS11]{SpielmanSrivastava:2011}
Daniel~A. Spielman and Nikhil Srivastava.
\newblock Graph sparsification by effective resistances.
\newblock {\em SIAM Journal on Computing}, 40(6):1913--1926, 2011.
\newblock \pSTOC{2008}.

\bibitem[Ste12]{stein2012interpolation}
Michael~L Stein.
\newblock {\em Interpolation of spatial data: some theory for kriging}.
\newblock Springer Science \& Business Media, 2012.

\bibitem[Tro15]{Tropp:2015}
Joel~A. Tropp.
\newblock An introduction to matrix concentration inequalities.
\newblock {\em Foundations and Trends in Machine Learning}, 8(1-2):1--230,
  2015.

\bibitem[Tur84]{Turan:1984}
Paul Tur\'{a}n.
\newblock {\em On a new method of analysis and its applications}.
\newblock John Wiley \& Sons, 1984.

\bibitem[WA13]{wilson2013gaussian}
Andrew Wilson and Ryan Adams.
\newblock Gaussian process kernels for pattern discovery and extrapolation.
\newblock In {\em \ICML{2013}}, pages 1067--1075, 2013.

\bibitem[Whi15]{Whittaker:1915}
Edmund~T. Whittaker.
\newblock On the functions which are represented by the expansions of the
  interpolation theory.
\newblock {\em Proceedings of the Royal Society of Edinburgh}, 35:181--194,
  1915.

\bibitem[Woo14]{Woodruff:2014}
David~P. Woodruff.
\newblock Sketching as a tool for numerical linear algebra.
\newblock {\em Foundations and Trends in Theoretical Computer Science},
  10(1--2):1--157, 2014.

\bibitem[WS01]{williams2001using}
Christopher Williams and Matthias Seeger.
\newblock Using the {N}ystr\"{o}m method to speed up kernel machines.
\newblock In {\em \NIPS{2001}}, pages 682--688, 2001.

\bibitem[XRY01]{XiaoRokhlinYarvin:2001}
Hong Xiao, Vladimir Rokhlin, and Norman Yarvin.
\newblock Prolate spheroidal wavefunctions, quadrature and interpolation.
\newblock {\em Inverse Problems}, 17(4):805--838, 2001.

\end{thebibliography}

\clearpage
\appendix

\section{Kernel Approximation -- Omitted Proofs}
\label{app:kernels}
As discussed in Section \ref{sec:oblivious}, our result on oblivious kernel embedding (Theorem \ref{thm:high_kernel}) is based on a result from \cite{AvronKapralovMusco:2017}, which shows that strong kernel approximations can be obtained via random Fourier features methods which sample by the kernel ridge leverage scores of Definition \ref{def:kl}:

\begin{theorem}[Kernel Embedding via Leverage Score Sampling, \cite{AvronKapralovMusco:2017}]\label{thm:rff}
	Let $s_\lambda$ denote the $\lambda$-statistical dimension of $\bv{K}$.
	Given a function $\bar \tau_{\lambda,\bv{K}} (\eta)$ with:
	$$\bar \tau_{\lambda,\bv{K}} (\eta) \ge \tau_{\lambda,\bv{K}}(\eta)\text{ for all }\eta \in \R\text{ and }T \eqdef \int_{\eta \in \R} \bar \tau_{\lambda,\bv{K}} (\eta)  d\eta,$$ if we apply modified RFF sampling (Definition \ref{def:rff}) with density $q(\eta)= \frac{\bar \tau_{\lambda,\bv{K}} (\eta)}{T}$ and sample size $m = \frac{3 T \ln(16 s_\lambda/\delta)}{\epsilon^2}$, then with probability $\ge 1-\delta$, $\bv{G}^* \bv{G}$ is an $(\epsilon,\lambda)$-spectral approximation of $\bv{K}$.
\end{theorem}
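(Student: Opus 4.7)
The plan is a standard matrix concentration argument, applied to $\bv{G}^*\bv{G}$ after preconditioning by $(\bv{K}+\lambda\bv{I})^{-1/2}$. Write $\bv{K}=\bs{\Phi}^*\bs{\Phi}$ as in Section~\ref{sec:rffApproach}, and note that the modified RFF sampling in Definition~\ref{def:rff} yields $\bv{G}^*\bv{G} = \sum_{i=1}^m \bv{X}_i$ where $\bv{X}_i = \frac{1}{m}\cdot\frac{p_k(\eta_i)}{q(\eta_i)}\bv{\phi}(\eta_i)\bv{\phi}(\eta_i)^*$, with $\bv{\phi}(\eta)=[e^{-2\pi i\eta x_1},\ldots,e^{-2\pi i \eta x_n}]^*$ and $\eta_i\sim q$ drawn i.i.d. Bochner's theorem (Fact~\ref{fact:bochners}) combined with the definition of $\bs{\Phi}$ gives $\mathbb{E}[\bv{X}_i]=\bv{K}/m$, so $\bv{G}^*\bv{G}$ is an unbiased estimator of $\bv{K}$.

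First I would reformulate the spectral approximation goal \eqref{eq:spectral_gaurantee} as an operator norm bound on the preconditioned error: \eqref{eq:spectral_gaurantee} is equivalent to $\|\bv{E}\|_2 \leq \epsilon$ for $\bv{E} := \sum_{i=1}^m (\bv{Y}_i-\mathbb{E}[\bv{Y}_i])$, where $\bv{Y}_i := (\bv{K}+\lambda\bv{I})^{-1/2}\bv{X}_i(\bv{K}+\lambda\bv{I})^{-1/2}$. Each $\bv{Y}_i$ is rank-one positive semidefinite, and unfolding the supremum in Definition~\ref{def:kl} as a Rayleigh quotient gives the key identity $\tau_{\lambda,\bv{K}}(\eta) = p_k(\eta)\cdot\bv{\phi}(\eta)^*(\bv{K}+\lambda\bv{I})^{-1}\bv{\phi}(\eta)$. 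From this identity,
$$\|\bv{Y}_i\|_2 \;=\; \frac{\tau_{\lambda,\bv{K}}(\eta_i)}{m\cdot q(\eta_i)} \;=\; \frac{T\,\tau_{\lambda,\bv{K}}(\eta_i)}{m\,\bar\tau_{\lambda,\bv{K}}(\eta_i)} \;\leq\; \frac{T}{m},$$
using the hypothesis $\bar\tau_{\lambda,\bv{K}}\geq\tau_{\lambda,\bv{K}}$. For the matrix variance, I would use $\bv{Y}_i^2 \preceq \|\bv{Y}_i\|_2\cdot\bv{Y}_i \preceq (T/m)\cdot\bv{Y}_i$ to deduce $\bigl\|\sum_i \mathbb{E}[\bv{Y}_i^2]\bigr\|_2 \leq (T/m)\cdot\|(\bv{K}+\lambda\bv{I})^{-1/2}\bv{K}(\bv{K}+\lambda\bv{I})^{-1/2}\|_2 \leq T/m$.

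Next, I would apply the intrinsic-dimension matrix Bernstein inequality (rather than standard Bernstein) so that the tail bound pays a factor of $\log s_\lambda$ instead of $\log n$. The relevant intrinsic dimension is controlled by $\tr\bigl((\bv{K}+\lambda\bv{I})^{-1}\bv{K}\bigr)=s_\lambda$, which is the trace of the summed expectation $\sum_i\mathbb{E}[\bv{Y}_i]$. Bernstein's tail then yields $\|\bv{E}\|_2 \leq \epsilon$ with probability at least $1-\delta$, provided $m \gtrsim T\log(s_\lambda/\delta)/\epsilon^2$, matching the claimed sample size up to constants.

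The main obstacle is the sharp per-sample bound, which hinges on the reformulation $\tau_{\lambda,\bv{K}}(\eta) = p_k(\eta)\bv{\phi}(\eta)^*(\bv{K}+\lambda\bv{I})^{-1}\bv{\phi}(\eta)$; this is precisely why leverage-score-proportional sampling is the ``right'' distribution, and without it one cannot get a bound depending on the integral $T$ rather than the pointwise maximum of $p_k/q$. A secondary subtlety is invoking intrinsic-dimension matrix Bernstein in the presence of the infinite-dimensional operator $\bs{\Phi}$: one must verify that each $\bv{Y}_i$ and its expectation remain well-defined finite-dimensional PSD operators on $\C^n$, and that the effective dimension of the error is $O(s_\lambda)$ and not the ambient $n$ (or worse, infinity).
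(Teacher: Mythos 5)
Theorem~\ref{thm:rff} is not proved in this paper at all: it is imported verbatim from \cite{AvronKapralovMusco:2017}, so there is no in-paper argument to compare against. Your sketch is a faithful reconstruction of the proof in that reference: rewrite the guarantee as an operator-norm bound on the preconditioned error $(\bv{K}+\lambda\bv{I})^{-1/2}(\bv{G}^*\bv{G}-\bv{K})(\bv{K}+\lambda\bv{I})^{-1/2}$, use the closed form $\tau_{\lambda,\bv{K}}(\eta)=p_k(\eta)\,\bs{\phi}(\eta)^*(\bv{K}+\lambda\bv{I})^{-1}\bs{\phi}(\eta)$ (the same identity this paper invokes in the proof of Theorem~\ref{thm:rffSD}) to get the per-sample bound $T/m$ and the variance bound, and close with intrinsic-dimension matrix Bernstein, where the intrinsic dimension $\tr(V)/\norm{V}_2=O(s_\lambda)$ accounts for the $\ln(16 s_\lambda/\delta)$ factor (using, as you note, that one may assume $\lambda\le\norm{\bv{K}}_2$ so that $\norm{(\bv{K}+\lambda\bv{I})^{-1}\bv{K}}_2\ge 1/2$). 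I see no gap; the only caveat is that the argument as written recovers the theorem up to constants rather than with the specific constants $3$ and $16$ quoted from \cite{AvronKapralovMusco:2017}.
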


\subsection{Kernel leverage score bounds via Fourier sparse approximation}
To make use of Theorem \ref{thm:rff}, we need access to an upper bound $\bar \tau_{\lambda,\bv{K}}(\eta)$ on the kernel ridge leverage scores. We remark that $\int_{\eta \in \R} \tau_{\lambda,\bv K}(\eta) d\eta = \tr(\bv{K}+\lambda \bv I)^{-1} \bv{K}) = s_\lambda$ \cite{AvronKapralovMusco:2017}. Thus, if $\bar \tau_{\lambda,\bv{K}}(\eta)$ is a tight bound, Theorem \ref{thm:rff} yields an embedding dimension $m = \tilde O(s_\lambda/\epsilon^2)$. 
Our goal is to obtain a nearly tight bound by reducing the problem of bounding $\tau_{\lambda, \bv{K}}$ to that of bounding the Fourier sparse leverage score under the density $p_k$ given by  the kernel Fourier transform. We prove:
\begin{reptheorem}{thm:sparseReduction} Consider a positive definite, shift invariant kernel $k: \R \rightarrow \R$, any points $x_1,\ldots, x_n \in \R$ and the associated kernel matrix $\bv{K}$, with statistical dimension $s_\lambda$.  Let $s = 6\lceil s_\lambda\rceil +1$. Then:
	\begin{align*}
	\forall \eta \in \R, \quad \tau_{\lambda,\bv{K}}(\eta)  \le  (2 +6 s_\lambda) \cdot  \tau_{ s, p_k}(\eta).
	\end{align*}
\end{reptheorem}
As discussed in Section \ref{sec:oblivious}, we prove Theorem \ref{thm:sparseReduction} by first showing that any function $\bv{\Phi} \bv{w}$ in the span of our kernelized data points is well approximated by  via an $O(s_\lambda)$ sparse Fourier function.
%
%
%
This Fourier sparse approximation result is based on the well-known fact that any matrix with bounded statistical dimension can be well approximated via projection onto a small subset of rows or columns \cite{drineas2006subspace,guruswami2012optimal,boutsidis2014near}. In particular, we show via a simple reformulation of known results:
\begin{theorem}[Row Subset Selection]\label{thm:rss} Consider the setting of Theorem \ref{thm:sparseReduction}. For $t = 6 \cdot \lceil s_\lambda\rceil $, there exists a subset of $t$ indices ${i_1},\ldots,{i_t} \subseteq [n]$ and $\bv{Z} \in \R^{t \times n}$ 
	such that, letting $\bs{\Phi}_t: \C^t \rightarrow L_2$ be the operator with $[\bs{\Phi}_t \bv w](\eta) = \sqrt{p_k(\eta)} \cdot \sum_{j=1}^t \bv w_j e^{-2\pi i \eta x_{i_j}} $ (i.e., the operator containing the $t$ columns of $\bs{\Phi}$ corresponding to the indices $i_1,\ldots,i_t$):
	\begin{align*}
	\tr(\bv{K} - \bv{Z}^T \bs{\Phi}_t^* \bs{\Phi}_t \bv{Z}) \le 3 \lambda s_\lambda\, \text{ and }\, \bv{Z}^T \bs{\Phi}_t^* \bs{\Phi}_t \bv{Z} \preceq \bv{K}.
	\end{align*}
\end{theorem}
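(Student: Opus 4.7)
My plan is to reduce both guarantees to a single column subset selection (CSS) statement for the quasi-matrix $\bs{\Phi}$ and then invoke a near-optimal deterministic CSS theorem. Let $S = \{i_1,\ldots,i_t\}$ be an as-yet-undetermined index set, let $\bs{\Phi}_t$ be the restriction of $\bs{\Phi}$ to those columns, and let $\bv{P}_t$ be the $L_2$-orthogonal projection onto the column span of $\bs{\Phi}_t$. Define $\bv{Z} = \bv{K}_{S,S}^{+} \bv{K}_{S,*}$, which is real because $\bv{K}$ is; this is the Nystr\"om-type coefficient matrix satisfying $\bs{\Phi}_t \bv{Z} = \bv{P}_t \bs{\Phi}$. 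Consequently $\bv{Z}^{T} \bs{\Phi}_t^{*} \bs{\Phi}_t \bv{Z} = \bs{\Phi}^{*} \bv{P}_t \bs{\Phi}$, so the Loewner bound $\bv{Z}^{T} \bs{\Phi}_t^{*} \bs{\Phi}_t \bv{Z} \preceq \bv{K}$ is automatic from $\bv{P}_t \preceq \mathcal{I}$, while the trace condition becomes
\[
\tr\bigl(\bv{K} - \bv{Z}^{T}\bs{\Phi}_t^{*}\bs{\Phi}_t \bv{Z}\bigr) \;=\; \tr\bigl(\bs{\Phi}^{*}(\mathcal{I} - \bv{P}_t)\bs{\Phi}\bigr) \;=\; \|(\mathcal{I}-\bv{P}_t)\bs{\Phi}\|_F^2.
\]
So it suffices to exhibit a size-$t$ column subset of $\bs{\Phi}$ whose projection error in Frobenius norm squared is at most $3\lambda s_\lambda$.

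Next I convert the statistical dimension bound into a rank-$k$ tail bound on the spectrum of $\bv{K}$. Let $k^{*} = |\{i : \lambda_i(\bv{K}) \ge \lambda\}|$. For $i \le k^{*}$, $\lambda_i(\bv{K})/(\lambda_i(\bv{K})+\lambda) \ge 1/2$, so $s_\lambda \ge k^{*}/2$, giving $k^{*} \le 2 s_\lambda \le 2\lceil s_\lambda\rceil$. For $i > k^{*}$, the same ratio is at least $\lambda_i(\bv{K})/(2\lambda)$, so $\sum_{i > k^{*}} \lambda_i(\bv{K}) \le 2\lambda s_\lambda$. Setting $k := 2\lceil s_\lambda\rceil \ge k^{*}$, the best rank-$k$ Frobenius error of $\bs{\Phi}$ is therefore at most $2\lambda s_\lambda$.

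I then invoke a Guruswami--Sinop-style deterministic CSS theorem \cite{guruswami2012optimal}: for any matrix $\bv{A}$ with finitely many columns and any $r \ge k \ge 1$, there exist $r$ columns of $\bv{A}$ whose projection achieves Frobenius-squared error at most $\bigl(1 + k/(r-k+1)\bigr) \cdot \|\bv{A} - \bv{A}_k\|_F^2$. Applied with $k = 2\lceil s_\lambda\rceil$ and $r = 6\lceil s_\lambda\rceil$, the prefactor is at most $1 + \tfrac{2\lceil s_\lambda\rceil}{4\lceil s_\lambda\rceil + 1} \le 3/2$, yielding projection error at most $(3/2)\cdot 2\lambda s_\lambda = 3\lambda s_\lambda$, exactly matching the statement. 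The only subtlety is that $\bs{\Phi}$ has infinitely many rows; this is handled by observing that its $n$ columns span at most an $n$-dimensional subspace of $L_2$. Fixing any orthonormal basis $\{\bv{u}_1,\ldots,\bv{u}_n\}$ for that subspace and defining $\bv{A}_{j,i} = \langle \bv{u}_j, \bs{\Phi}\bv{e}_i\rangle$ produces a finite $n \times n$ matrix with $\bv{A}^{*}\bv{A} = \bv{K}$, the same singular values, and column subsets in perfect correspondence (with identical projection errors), so the CSS theorem applies verbatim and the chosen indices can be transported back to $\bs{\Phi}$.

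The main obstacle is obtaining the CSS bound with precisely the right constants so that $t = 6\lceil s_\lambda\rceil$ columns suffice; this forces the use of the sharp $(1 + k/(r-k+1))$ Guruswami--Sinop form rather than a loose $(1+\epsilon)$ variant that would introduce additional slack. Everything else is routine: the Nystr\"om-style definition of $\bv{Z}$ makes the PSD condition automatic once the projection error is controlled, and the conversion from statistical dimension to a tail bound on the eigenvalues is a one-line computation.
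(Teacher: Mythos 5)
Your proposal is correct and follows essentially the same route as the paper: both reduce to a finite matrix square root of $\bv{K}$ (your orthonormal-basis matrix $\bv{A}$ is exactly the paper's $\bv{B}$, and your $\bv{Z} = \bv{K}_{S,S}^{+}\bv{K}_{S,*}$ coincides with the paper's $\bv{B}_t^{+}\bv{B}$), get the Loewner bound for free from the projection, convert $s_\lambda$ into the tail bound $\sum_{i > 2\lceil s_\lambda\rceil}\lambda_i(\bv{K}) \le 2\lambda s_\lambda$, and invoke the Guruswami--Sinop column subset selection theorem with $6\lceil s_\lambda\rceil$ columns to pick up the factor $3/2$. No gaps.
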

\begin{proof}
	Let $\bv{B} \in \R^{n \times n}$ be any matrix squareroot of $\bv{K}$ with $\bv{B}^T \bv{B} = \bv{K}$. Since $\bv{B}^T \bv{B} = \bs{\Phi}^* \bs{\Phi}$ it suffices to prove the existence of a subset of indices ${i_1},\ldots,{i_t} \subseteq [n]$ and a matrix $\bv{Z} \in \R^{t \times n}$ such that, letting $\bv{B}_t$ contain the columns of $\bv{B}$ corresponding to those indices:
	\begin{align}\label{eq:bsuffices}
	\tr(\bv{K} - \bv{Z}^T \bv{B}_t^T \bv{B}_t \bv{Z}) \le 3 \lambda s_\lambda \text{ and } \bv{Z}^T \bv{B}_t^T \bv{B}_t \bv{Z} \preceq \bv{K}.
	\end{align}
	Let $\bv{Z} = \bv{B}_t^+ \bv{B}$. Letting $\bv{P}_t  = \bv{B}_t \bv{B}_t^+$ be the orthogonal projection matrix onto the columns of $\bv{B}_t$, we can see that $\bv{Z}^T \bv{B}_t^T \bv{B}_t \bv{Z} = \bv{B}^T \bv{P}^2_t \bv{B} = \bv{B}^T \bv{P}_t \bv{B} $. We first observe that for any $\bv x \in \R^n$:
	\begin{align*}
	\bv{x}^T \bv{Z}^T \bv{B}_t^T \bv{B}_t \bv{Z}\bv {x} = \norm{\bv{P}_t \bv{B}\bv{x}}_2^2 \le \norm{ \bv{B}\bv{x}}_2^2 =\bv{x}^T \bv{K}\bv {x},
	\end{align*}
	which proves that $\bv{Z}^T \bv{B}_t^T \bv{B}_t \bv{Z} \preceq \bv{K}$, giving the second part of \eqref{eq:bsuffices}. To prove the first part of \eqref{eq:bsuffices} we employ an optimal column-based matrix reconstruction result \cite{guruswami2012optimal}, Theorem 1.1, which shows that there  exists a set of $s = 6 \cdot \lceil s_\lambda\rceil$ indices such that:
	\begin{align}\label{eq:css}
	\norm{\bv{B} - \bv{B}_t \bv{Z}}_F^2 \le 1.5 \norm{\bv{B} - \bv{B}_{2\lceil s_\lambda \rceil}}_F^2,
	\end{align}
	where $\bv{B}_{ 2 \lceil s_\lambda\rceil }$ is the best rank-$2 \lceil s_\lambda\rceil$ approximation to $\bv{B}$ (given by projecting $\bv{B}$ onto its top $2 \lceil s_\lambda\rceil$ singular vectors). Since $\bv{B}_t \bv{Z}$ is the projection of $\bv{B}$ onto the column space of $\bv{B}_t$ we can write via the Pythagorean theorem:
	\begin{align*}
	\norm{\bv{B} - \bv{B}_t \bv{Z}}_F^2 = \norm{\bv{B}}_F^2 - \norm{\bv{B}_t \bv{Z}}_F^2 = \tr(\bv{B}^T \bv{B}) - \tr(\bv{Z}^T \bv{B}_t^T \bv{B}_t \bv{Z}) = \tr(\bv{B}^T \bv{B} - \bv{Z}^T \bv{B}_t^T \bv{B}_t \bv{Z}).
	\end{align*}
	Thus, in combination with \eqref{eq:css}, if we can show $\norm{\bv{B} - \bv{B}_{2 \lceil s_\lambda\rceil}}_F^2 \le 2 \lambda s_\lambda$, we will have 
	\begin{align*}
	\tr(\bv{B}^T \bv{B} - \bv{Z}^T \bv{B}_t^T \bv{B}_t \bv{Z}) \le 3 \lambda s_\lambda,
	\end{align*}
	yielding the first part of \eqref{eq:bsuffices} and the theorem.
	This bound follows from the fact that 
	$\norm{\bv{B} - \bv{B}_{\lceil 2s_\lambda\rceil}}_F^2 = \sum_{i=2 \lceil s_\lambda\rceil+1}^n \lambda_i(\bv{K}).$
	We can apply the following claim, which quantifies the eigenvalue decay of a matrix in terms of its statistical dimension:
	\begin{claim}\label{clm:skBound} For any positive semidefinite $\bv{K} \in \R^{n \times n}$ with statistical dimension $s_\lambda$:
		$$
		\sum_{i=2 \lceil s_\lambda\rceil+1}^n \lambda_i(\bv{K}) \le 2 \lambda s_\lambda.$$
	\end{claim}
	\begin{proof}
		Let $I_\lambda$ be the number of eigenvalues of $\bv{K}$ that are $\ge \lambda$. We have:
		\begin{align*}
		s_\lambda = \sum_{i=1}^n \frac{\lambda_i(\bv{K})}{\lambda_i(\bv{K}) + \lambda} &= \sum_{i=1}^{I_\lambda} \frac{\lambda_i(\bv{K})}{\lambda_i(\bv{K}) + \lambda}  + \sum_{i=I_\lambda+1}^{n} \frac{\lambda_i(\bv{K})}{\lambda_i(\bv{K}) + \lambda} \\
		&\ge \frac{1}{2} \cdot I_\lambda + \frac{1}{2 \lambda } \sum_{i=I_\lambda+1}^{n} \lambda_i(\bv{K}),
		\end{align*}
		where the second line follows from that fact that $\lambda_i(\bv{K}) \ge \lambda$ for $i \le I_\lambda$ and $\lambda_i(\bv{K}) < \lambda$ for $i > I_\lambda$
		Rearranging we have $2 \lceil s_\lambda\rceil \ge 2 s_\lambda \ge I_\lambda$ and $2 s_\lambda \ge \frac{1}{\lambda } \sum_{i=I_\lambda+1}^{n} \lambda_i(\bv{K})$, and in turn:
		\begin{align*}
		2 s_\lambda \ge  \frac{1}{\lambda} \sum_{i=2 \lceil s_\lambda \rceil +1}^{n} \lambda_i(\bv{K}) \implies 2 \lambda s_\lambda \ge \sum_{i=2 \lceil s_\lambda \rceil +1}^{n} \lambda_i(\bv{K}).
		\end{align*}
	\end{proof}
	Claim \ref{clm:skBound} directly gives that  $\norm{\bv{B} - \bv{B}_{2 \lceil s_\lambda\rceil}}_F^2 = \sum_{i=2 \lceil s_\lambda\rceil+1}^n \lambda_i(\bv{K}) \le 2 \lambda s_\lambda$, completing the proof of Theorem \ref{thm:rss}.
\end{proof} 

\begin{proof}[Proof of Theorem \ref{thm:sparseReduction}]
	Applying Theorem \ref{thm:rss} we can bound the kernel leverage score by breaking the function $\bs{\Phi} \bv{w}$ into its projection onto $\bs{\Phi}_t$, which after a change of density is a $t = \lceil 6s_\lambda \rceil$-sparse Fourier function in $\mathcal{T}_t$, and the residual.
	\begin{align}\label{eq:split1}
	\tau_{\lambda,\bv{K}}(\eta) = \sup_{\bv w \in \C^n, \bv{w} \neq 0} \frac{|[\bs{\Phi} \bv w](\eta)|^2}{\|\bs{\Phi}\bv w\|_{2}^2 + \lambda \|\bv w\|_2^2} &\le \frac{2 |[\bs{\Phi}_t \bv{Z } \bv w](\eta)|^2}{\|\bs{\Phi}\bv w\|_{2}^2 + \lambda \|\bv w\|_2^2} + \frac{2|[\bs{\Phi} \bv w](\eta) - [\bs{\Phi}_t \bv{Z } \bv w](\eta)|^2}{\|\bs{\Phi}\bv w\|_{2}^2 + \lambda \|\bv w\|_2^2}\nonumber\\
	&\le \frac{2 |[\bs{\Phi}_t \bv{Z } \bv w](\eta)|^2}{\|\bs{\Phi}\bv w\|_{2}^2} + \frac{2|[\bs{\Phi}\bv w](\eta) - [\bs{\Phi}_t \bv{Z } \bv w](\eta)|^2}{ \lambda \|\bv w\|_2^2}.
	\end{align}
	Since by Theorem \ref{thm:rss}, $\bv{Z}^T \bs{\Phi}_t^* \bs{\Phi}_t \bv{Z} \preceq \bv{K}$ we have 
	$$\|\bs{\Phi}\bv w\|_{2}^2 = \bv w^T \bv{K} \bv w \ge \bv w^T \bv{Z}^T \bs{\Phi}_t^* \bs{\Phi}_t \bv{Z} \bv w = \|\bs{\Phi}_t \bv{Z} \bv w\|_{2}^2,$$ which combined with \eqref{eq:split1} gives: 
	\begin{align}\label{eq:inter1}
	\tau_{\lambda,\bv{K}}(\eta) &\le \frac{2 |[\bs{\Phi}_t \bv{Z } \bv w](\eta)|^2}{\|\bs{\Phi}_t \bv{Z} \bv w\|_{2}^2} + \frac{2|[\bs{\Phi} \bv w](\eta) - [\bs{\Phi}_t \bv{Z } \bv w](\eta)|^2}{ \lambda \|\bv w\|_2^2}\nonumber\\
	&\le 2 \tau_{ t, p_k}(\eta) + \frac{2 |[\bs{\Phi} \bv w](\eta) - [\bs{\Phi}_t \bv{Z } \bv w](\eta)|^2}{ \lambda \|\bv w\|_2^2}.
	\end{align}
	The second bound follows from the fact that $\frac{[\bs{\Phi}_t \bv{Z }\bv w](\eta)}{\sqrt{p_k(\eta)}} \in \mathcal{T}_{t}$. 
	It remains to bound the second term of \eqref{eq:inter1}.  Let $\bv z(\eta) \in \C^n$ be the vector with $\bv z(\eta)_j = \left [e^{-2 \pi i \eta x_j} - \sum_{k=1}^t \bv{Z}_{k,j} \cdot  e^{-2 \pi i \eta x_{i_k}}\right ]  \cdot \sqrt{p_k(\eta)}$. Then we can bound via Cauchy-Schwarz:
	\begin{align}\label{eq:cs}
	\frac{|[\bs{\Phi} \bv w](\eta) - [\bs{\Phi}_s \bv{Z }\bv  w](\eta)|^2}{ \lambda \|\bv w\|_2^2} = \frac{|\bv z(\eta)^* \bv w|^2}{\lambda \norm{\bv w}_2^2} \le \frac{\norm{\bv z(\eta)}_2^2}{\lambda}.
	\end{align}
	We bound $\norm{\bv z(\eta)}_2^2$ as:
	\begin{claim}\label{clm:zBound} Let $\bv z(\eta) \in \C^n$ be as defined above. $\norm{\bv z(\eta)}_2^2 \le \tau_{ t +1, p_k}(\eta) \cdot 3 \lambda s_\lambda$. 
		\
	\end{claim}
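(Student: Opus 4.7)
The plan is to bound $\|\bv z(\eta)\|_2^2$ coordinate by coordinate via the Fourier sparse leverage score definition, then identify the resulting sum of norms as a Hilbert--Schmidt residual that Theorem \ref{thm:rss} already controls.

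First, I would observe that, viewed as a function of the frequency variable $\eta$, each coordinate
\[ f_j(\eta) := e^{-2\pi i \eta x_j} - \sum_{k=1}^t \bv{Z}_{k,j} \, e^{-2\pi i \eta x_{i_k}} \]
is a linear combination of at most $t+1 = s$ complex exponentials, so $f_j \in \mathcal{T}_s$. Applying the definition of $\tau_{s,p_k}$ to each $f_j$ (treating it as a $\mathcal{T}_s$-function in $\eta$) and summing over $j$ yields
\[ \|\bv z(\eta)\|_2^2 = \sum_{j=1}^n |f_j(\eta)|^2 p_k(\eta) \le \tau_{s,p_k}(\eta) \cdot \sum_{j=1}^n \|f_j\|_{p_k}^2, \]
so it suffices to show $\sum_j \|f_j\|_{p_k}^2 \le 3\lambda s_\lambda$.

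The next step is to recognize $\sum_j \|f_j\|_{p_k}^2$ as the squared Hilbert--Schmidt norm $\|\bs\Phi - \bs\Phi_t \bv Z\|_{HS}^2$, since $\|f_j\|_{p_k}^2$ is exactly the squared $L_2$ norm of the $j$-th column of $\bs\Phi - \bs\Phi_t \bv Z$. Expanding via the adjoint and using the inner-product identities $\bs\Phi^* \bs\Phi = \bv K$, $\bs\Phi^* \bs\Phi_t = \bv K_{:,I}$, and $\bs\Phi_t^* \bs\Phi_t = \bv K_{I,I}$, where $I = \{i_1, \ldots, i_t\}$, converts the sum into a purely finite-dimensional trace expression in $\bv K$, $\bv Z$, and these submatrices.

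Finally, I would plug in the specific choice $\bv Z = \bv B_t^+ \bv B$ used in the proof of Theorem \ref{thm:rss}, where $\bv B$ is a square root of $\bv K$ and $\bv B_t$ consists of its columns indexed by $I$. Writing $\bv P_t = \bv B_t \bv B_t^+$ for the orthogonal projector onto the column span of $\bv B_t$, both $\bv K_{:,I} \bv Z = \bv B^T \bv B_t \bv B_t^+ \bv B$ and $\bv Z^T \bv K_{I,I} \bv Z$ collapse to $\bv B^T \bv P_t \bv B$ using the idempotence and self-adjointness of $\bv P_t$. The two cross terms and the quadratic term then combine to leave $\tr(\bv K - \bv Z^T \bs\Phi_t^* \bs\Phi_t \bv Z)$, which is at most $3\lambda s_\lambda$ directly by Theorem \ref{thm:rss}. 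The only mildly delicate step is this algebraic telescoping; the rest is just the definitions of the leverage score and of the Hilbert--Schmidt norm combined with standard operator identities.
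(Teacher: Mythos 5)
Your proposal matches the paper's proof: both view each coordinate of $\bv z(\eta)$, after dividing out $\sqrt{p_k(\eta)}$, as a $(t+1)$-sparse Fourier function, apply the leverage score definition coordinatewise, sum over $j$, and identify the resulting sum of squared norms with $\tr(\bv{K} - \bv{Z}^T \bs{\Phi}_t^* \bs{\Phi}_t \bv{Z})$, which Theorem \ref{thm:rss} bounds by $3\lambda s_\lambda$. The only difference is that you spell out the projection algebra behind that trace identity explicitly, whereas the paper asserts it in one line (having established the same Pythagorean identity inside the proof of Theorem \ref{thm:rss}); the argument is correct and essentially identical.
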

	Combining Claim \ref{clm:zBound} with \eqref{eq:inter1} and \eqref{eq:cs} yields:
	\begin{align*}
	\tau_{\lambda,\bv{K}}(\eta) &\le 2 \tau_{ t , p_k}(\eta) + 6\tau_{ t +1, p_k}(\eta) \cdot  s_\lambda \le (2 +6 s_\lambda) \cdot  \tau_{ 6t +1, p_k}(\eta),
	\end{align*}
	which completes the theorem after recalling that we set $t = \lceil s_\lambda\rceil$ in Theorem \ref{thm:rss}.
\end{proof}

\begin{proof}[Proof of Claim \ref{clm:zBound}]
	Consider the function $g_j(\eta) = \bv z(\eta)_j$ and $g(\eta) = \sum_{j=1}^n |g_j(\eta)|^2  = \norm{\bv z(\eta)}_2^2$. 
	\begin{align*}
	g_j(\eta) = \left [e^{-2 \pi i \eta x_j} - \sum_{k=1}^s \bv{Z}(k,j) \cdot  e^{-2 \pi i \eta x_{i_k}}\right ]  \cdot \sqrt{p_k(\eta)}
	\end{align*}
	and thus, $h(\eta) \eqdef \frac{g_j(\eta)}{\sqrt{p_k(\eta)}} \in \mathcal T_{t+1}$ and so:
	\begin{align*}
	\frac{|g_j(\eta)|^2}{\norm{g_j}_{2}^2} =\frac{ p_k(\eta)\cdot |h(\eta)|^2}{\norm{h}_{p_k}^2} \le \tau_{t+1,p_k}(\eta).
	\end{align*}
	This gives:
	\begin{align*}
	\norm{\bv z(\eta)}_2^2 = \sum_{j=1}^n |g_j(\eta)|^2 &\le \tau_{t+1,p_k}(\eta) \cdot \sum_{j=1}^n \norm{g_j}_{2}^2\\
	&= \tau_{t+1,p_k}(\eta) \cdot  \tr(\bv{K} - \bv{Z}^T \bs{\Phi}_s^* \bs{\Phi}_s \bv{Z})\\
	& \le \tau_{t+1,p_k}(\eta) \cdot 3 s_\lambda,
	\end{align*}
	where the last bound follows from Theorem \ref{thm:rss}.
\end{proof}

\subsection{Oblivious kernel embedding via keverage score-based RFF}

We finally combine our Fourier sparse leverage score bounds of Theorems \ref{thm:gaussian} and \ref{thm:exp} with the kernel ridge leverage score bound of Theorem \ref{thm:sparseReduction}  and the leverage score sampling result of Theorem \ref{thm:rff} to give oblivious kernel embedding results for the kernels corresponding to the Fourier transforms of the Gaussian and Laplace densities -- i.e., the Gaussian and Cauchy (rational quadratic) kernel.
\begin{corollary}[Modified RFF Embedding -- Gaussian Kernel]\label{cor:gaussian}
	Consider any set of points $x_1,\ldots, x_n \in \R$ and the associated Gaussian kernel matrix $\bv{K} \in \R^{n \times n}$ with $\bv{K}_{i,j} = e^{-(x_i-x_j)^2/(2\sigma^2)}$. Let $s_\lambda$ be the $\lambda$-statistical dimension of $\bv{K}$, $s = 6 \lceil s_\lambda \rceil +1$, and $q(\eta)$ be the density proportional to:
	\begin{align*}
	q(\eta) \propto \begin{cases} e^{-\eta ^2 \cdot \pi^2 \cdot \sigma^2} \text{ for } |\eta| \ge \frac{3\sqrt{2}}{\sigma \pi } \cdot \sqrt{s}\\
	e\cdot s\text{ for } |\eta| \le \frac{3\sqrt{2}}{\sigma \pi } \cdot  \sqrt{s}.
	\end{cases} 
	\end{align*}
	The modified RFF embedding (Def. \ref{def:rff}) with density $q(\eta)$ and sample size $m = O\left (\frac{s_\lambda^{5/2} \cdot \log(s_\lambda/\delta)}{\epsilon^2}\right )$, satisfies $\bv{G}^*\bv{G}$ is an $(\epsilon,\lambda)$-spectral approximation of $\bv{K}$ with probability $\ge 1-\delta$.
	The embedding $\bv{g}(x_i) \in \C^m$, can be constructed obliviously in $O(m)$ time.
\end{corollary}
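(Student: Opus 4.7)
The plan is to chain together the three main tools developed earlier: Theorem \ref{thm:rff} (which turns a kernel ridge leverage score upper bound into an oblivious embedding guarantee), Theorem \ref{thm:sparseReduction} (which upper bounds the kernel ridge leverage function by the Fourier sparse leverage function under the density $p_k$), and Theorem \ref{thm:gaussian} (which gives a closed form bound on the Fourier sparse leverage function under a Gaussian density). The only work that is specific to the Gaussian kernel case is identifying $p_k$, plugging in the right width parameter, and tracking constants.

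First I would compute $p_k$ explicitly. By Bochner's theorem (Fact \ref{fact:bochners}) applied to $k(t) = e^{-t^2/(2\sigma^2)}$, the density $p_k$ is itself Gaussian, with $p_k(\eta) \propto e^{-2\pi^2 \sigma^2 \eta^2}$, i.e., a Gaussian with width $\sigma' = 1/(2\pi\sigma)$. Plugging $\sigma'$ into Theorem \ref{thm:gaussian} with sparsity $s = 6\lceil s_\lambda\rceil + 1$ yields a closed form upper bound $\bar\tau_{s,p_k}(\eta)$ that equals $\frac{2\pi\sigma}{\sqrt{2}} e^{-\eta^2 \pi^2 \sigma^2}$ for $|\eta| \geq \tfrac{3\sqrt{2}}{\sigma \pi}\sqrt{s}$ and $\frac{2\pi\sigma}{\sqrt{2}} \cdot e s$ for smaller $|\eta|$. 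Normalizing this piecewise bound to a probability density gives exactly the $q(\eta)$ stated in the corollary, since the constant prefactor $\frac{2\pi\sigma}{\sqrt{2}}$ cancels in the normalization.

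Next I would invoke Theorem \ref{thm:sparseReduction} to obtain $\tau_{\lambda,\bv{K}}(\eta) \leq (2 + 6 s_\lambda) \cdot \bar\tau_{s,p_k}(\eta)$, i.e., $(2 + 6s_\lambda)\bar\tau_{s,p_k}$ is a valid upper bound on the kernel ridge leverage function with $T \eqdef \int (2+6s_\lambda)\bar\tau_{s,p_k}(\eta)\,d\eta = (2+6s_\lambda) \cdot O(s^{3/2}) = O(s_\lambda^{5/2})$, using the integral bound from Theorem \ref{thm:gaussian} and $s = O(s_\lambda)$. Then Theorem \ref{thm:rff} applied with this bound and $q$ as above yields an $(\epsilon,\lambda)$-spectral approximation with probability $\geq 1-\delta$ once $m = \Theta(T\log(s_\lambda/\delta)/\epsilon^2) = O(s_\lambda^{5/2}\log(s_\lambda/\delta)/\epsilon^2)$.

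For the runtime claim, I would observe that the density $q$ is an explicit two-piece density (a truncated Gaussian tail beyond a threshold, and a uniform piece in the middle), so a single sample can be drawn in $O(1)$ time by inverse transform sampling after precomputing the mixture weight and the truncation normalization. Given $m$ frequencies $\eta_1,\ldots,\eta_m$ sampled obliviously once and for all, each embedding $\bv{g}(x_i)$ just evaluates $\sqrt{p_k(\eta_j)/q(\eta_j)}\,e^{-2\pi i \eta_j x_i}$ for $j=1,\ldots,m$, which is $O(m)$ work per point. There is essentially no hard step here — the entire argument is a composition of earlier results — the main thing to be careful about is the change of variables from the width $\sigma$ of the kernel to the width $1/(2\pi\sigma)$ of its Fourier transform, so that the piecewise thresholds and exponents in $q$ line up with those in Theorem \ref{thm:gaussian}.
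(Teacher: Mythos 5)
Your proposal is correct and follows essentially the same route as the paper's proof: identify $p_k$ as a Gaussian of width $1/(2\pi\sigma)$ via Bochner's theorem, chain Theorem \ref{thm:sparseReduction} with Theorem \ref{thm:gaussian} to get the piecewise upper bound on $\tau_{\lambda,\bv{K}}$ integrating to $O(s_\lambda^{5/2})$, and feed this into Theorem \ref{thm:rff}. The constant tracking (thresholds $\frac{3\sqrt{2}}{\sigma\pi}\sqrt{s}$ and exponent $\pi^2\sigma^2\eta^2$) and the $O(1)$-per-sample argument for the Gaussian-plus-uniform mixture both match the paper.
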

\begin{proof}
	For the Gaussian kernel with width $\sigma$, the Fourier transform density is also Gaussian with variance $\frac{1}{4 \pi^2 \sigma^2}$:
	\begin{align*}
	p_k(\eta) = \int_{t \in \R} e^{2 \pi i \eta t} e^{-\frac{t^2}{2 \sigma^2}} dt = \sigma  \sqrt{2 \pi} \cdot  e^{-2  \sigma^2 \pi^2 \eta^2}.
	\end{align*}
	Applying Theorem \ref{thm:sparseReduction} we have: $\tau_{\lambda,\bv{K}}(\eta) \le (2 +6s_\lambda) \cdot \tau_{s,p_k}(\eta)$ 
	for $s = 6 \lceil s_\lambda \rceil +1$. In turn, applying Theorem \ref{thm:gaussian} gives $\tau_{\lambda,\bv{K}}(\eta) \le \bar \tau_{\lambda,\bv{K}}(\eta)$ 
	where:
	\begin{align*}
	\bar \tau_{\lambda,\bv{K}}(\eta) = \begin{cases} (2 + 6s_\lambda) \cdot \pi \sqrt{2}\cdot \sigma \cdot e^{-\eta ^2 \cdot \pi^2 \cdot \sigma^2} \text{ for } |\eta| \ge \frac{3\sqrt{2}}{\sigma \pi } \cdot \sqrt{s}\\
	(2 + 6s_\lambda)  \cdot \pi \sqrt{2} e \cdot \sigma\cdot s\text{ for } |\eta| \le \frac{3\sqrt{2}}{\sigma \pi } \cdot  \sqrt{s}.
	\end{cases} 
	\end{align*}
	Thus, by Theorem \ref{thm:rff}, if we let $q(\eta)$ be the density proportional to $\bar \tau_{\lambda,\bv{K}}(\eta)$, a random Fourier features approximation satisfies the guarantee of the Theorem with sample size $m$ given by:
	\begin{align*}
	m = O \left (\frac{\int_{\eta \in \R} \bar \tau_{\lambda,\bv{K}}(\eta) d\eta\, \cdot \log(s_\lambda/\delta)}{\epsilon^2} \right ) = O\left (\frac{s_\lambda^{5/2} \cdot \log(s_\lambda/\delta)}{\epsilon^2}\right ),
	\end{align*}
	since by Theorem \ref{thm:gaussian}, $\int_{\eta \in \R} \bar \tau_{\lambda,\bv{K}}(\eta) d\eta = (2 + 6s_\lambda) \cdot O(s^{3/2}) = O(s_\lambda^{5/2})$.
	
	Finally, we observe that $q(\eta)$ is just a mixture of a Gaussian density with a uniform density, and hence can be sampled from in $O(1)$ time. Thus each embedding $\bv{g}(x_i) \in \C^m$ can be constructed obliviously in $O(m)$ time.
\end{proof} 
We give a very similar result for the Cauchy (also known as rational quadratic) kernel using our Laplacian distribution leverage score bound of Theorem \ref{thm:exp}.
\begin{corollary}[Modified RFF Embedding -- Cauchy Kernel]\label{cor:cauchy}
	Consider any set of points $x_1,\ldots, x_n \in \R$ and the associated Cauchy kernel matrix $\bv{K} \in \R^{n \times n}$ with $\bv{K}_{i,j} = \frac{1}{1+(x_i-x_j)^2/\sigma^2}$. Let $s_\lambda$ be the $\lambda$-statistical dimension of $\bv{K}$, $s = 6 \lceil s_\lambda \rceil +1$, and $q(\eta)$ be the density proportional to:
	\begin{align*}
	q(\eta) \propto \begin{cases} e^{-|\eta| \cdot \sigma \pi /3} \text{ for } |\eta| \ge \frac{9s }{\sigma \pi}\\
	\frac{e^2 s}{1+ |\eta|\cdot 2 \sigma \pi} \text{ for } |\eta| \le  \frac{9s}{\sigma \pi}.
	\end{cases} 
	\end{align*}
	The modified RFF embedding (Def. \ref{def:rff}) with density $q(\eta)$ and sample size $m =  O\left (\frac{s_\lambda^{2} \log(s_\lambda) \cdot \log(s_\lambda/\delta)}{\epsilon^2}\right )$ satisfies $\bv{G}^*\bv{G}$ is an $(\epsilon,\lambda)$-spectral approximation of $\bv{K}$ with probability $\ge 1-\delta$.
	The embedding $\bv{g}(x_i) \in \C^m$, can be constructed obliviously in $O(m)$ time.
\end{corollary}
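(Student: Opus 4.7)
}
The plan is to mirror the proof of Corollary \ref{cor:gaussian} essentially step for step, with the Gaussian Fourier-transform density replaced by a Laplace density, and Theorem \ref{thm:gaussian} replaced by Theorem \ref{thm:exp}. First I would compute the inverse Fourier transform of the Cauchy kernel. By the standard identity $\int_{t \in \R} e^{2\pi i \eta t} \frac{1}{1 + t^2} \, dt = \pi e^{-2\pi |\eta|}$, substituting $u = t/\sigma$ yields
\begin{align*}
p_k(\eta) \;=\; \int_{t \in \R} e^{2\pi i \eta t} \cdot \frac{1}{1+t^2/\sigma^2}\, dt \;=\; \sigma \pi \cdot e^{-2\pi \sigma |\eta|}.
\end{align*}
This is exactly the Laplace density of Theorem \ref{thm:exp} with width parameter $\sigma_z = \frac{1}{\sqrt{2}\pi \sigma}$, since $\frac{1}{\sqrt{2}\sigma_z} = \pi \sigma$ and $\sqrt{2}/\sigma_z = 2\pi\sigma$. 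In particular, $p_k$ integrates to $1$, as required by Bochner's theorem.

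Next I would combine Theorem \ref{thm:sparseReduction} with Theorem \ref{thm:exp}. Setting $s = 6\lceil s_\lambda\rceil + 1$, Theorem \ref{thm:sparseReduction} gives $\tau_{\lambda,\bv{K}}(\eta) \le (2 + 6s_\lambda) \cdot \tau_{s,p_k}(\eta)$, and Theorem \ref{thm:exp} applied with $\sigma_z = \frac{1}{\sqrt{2}\pi\sigma}$ yields the closed-form upper bound
\begin{align*}
\bar\tau_{\lambda,\bv{K}}(\eta) \;=\; (2 + 6 s_\lambda) \cdot \begin{cases} 2\pi \sigma \cdot e^{-|\eta|\cdot \pi \sigma/3} & \text{if } |\eta| \ge \tfrac{9s}{\pi\sigma},\\[2pt] 2\pi \sigma \cdot \dfrac{e^2 s}{1 + 2\pi\sigma|\eta|} & \text{if } |\eta| \le \tfrac{9s}{\pi\sigma}. \end{cases}
\end{align*}
The density $q(\eta)$ in the statement is (up to a scalar normalization constant that cancels in the definition of the modified RFF embedding) proportional to $\bar\tau_{\lambda,\bv{K}}(\eta)$.

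Finally I would invoke Theorem \ref{thm:rff} with this sampling density. By the second conclusion of Theorem \ref{thm:exp}, $\int \bar\tau_{s,p_k}(\eta) \, d\eta = O(s\ln s)$, so
\begin{align*}
\int_{\eta \in \R} \bar\tau_{\lambda,\bv{K}}(\eta)\, d\eta \;=\; (2 + 6 s_\lambda) \cdot O(s\ln s) \;=\; O(s_\lambda^2 \log s_\lambda).
\end{align*}
Plugging into Theorem \ref{thm:rff} gives the stated sample complexity $m = O(s_\lambda^2 \log s_\lambda \cdot \log(s_\lambda/\delta)/\epsilon^2)$ for an $(\epsilon,\lambda)$-spectral approximation. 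For the runtime claim, sampling from $q(\eta)$ reduces to a two-branch mixture between a Laplace tail (invertible CDF sampling in $O(1)$ time) and a log-shaped body on $[-\frac{9s}{\pi\sigma}, \frac{9s}{\pi\sigma}]$ (also invertible CDF sampling in $O(1)$), so each of the $m$ coordinates of $\bv g(x_i)$ can be formed obliviously in $O(1)$ time.

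The only genuinely non-routine step is the Fourier transform computation together with the bookkeeping that translates the width $\sigma_z$ of the Laplace density in Theorem \ref{thm:exp} into the constants that appear in $q(\eta)$; everything else is a direct substitution into results already established. I do not anticipate any substantive obstacle, since the Gaussian analogue (Corollary \ref{cor:gaussian}) provides a template and all inequalities are linear in the leverage score bounds.
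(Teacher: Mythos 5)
Your proposal is correct and follows essentially the same route as the paper's proof: compute the Laplace-density Fourier transform of the Cauchy kernel, chain Theorem \ref{thm:sparseReduction} with Theorem \ref{thm:exp} to get the closed-form bound $\bar\tau_{\lambda,\bv{K}}$, and plug its integral $O(s_\lambda^2\log s_\lambda)$ into Theorem \ref{thm:rff}, finishing with the inverse-CDF sampling observation for the runtime. Your explicit bookkeeping of the Laplace width parameter $\sigma_z = \frac{1}{\sqrt{2}\pi\sigma}$ is a detail the paper leaves implicit, and your constants check out against the stated form of $q(\eta)$.
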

\begin{proof}
	For the Cauchy kernel with width $\sigma$, the Fourier transform density is a Laplace density:
	\begin{align*}
	p_k(\eta) = \int_{t \in \R} e^{2 \pi i \eta t} \frac{1}{1+(t/\sigma)^2} dt = \sigma \pi \cdot e^{-|\eta| \cdot 2\sigma \pi}.
	\end{align*}
	Applying Theorem \ref{thm:sparseReduction} we have: $\tau_{\lambda,\bv{K}}(\eta) \le (2 +6s_\lambda) \cdot \tau_{s,p_k}(\eta)$ 
	for $s = 6 \lceil s_\lambda \rceil +1$. In turn, applying Theorem \ref{thm:exp} gives $\tau_{\lambda,\bv{K}}(\eta) \le \bar \tau_{\lambda,\bv{K}}(\eta)$ 
	where:
	\begin{align*}
	\bar \tau_{\lambda,\bv{K}}(\eta) = \begin{cases} (2+6s_\lambda)\cdot 2 \sigma \pi \cdot e^{-|\eta| \cdot \sigma \pi /3} \text{ for } |\eta| \ge \frac{9s }{\sigma \pi}\\
	(2+6s_\lambda) \cdot 2 \sigma \pi \cdot \frac{e^2 s}{1+ |\eta|\cdot 2 \sigma \pi} \text{ for } |\eta| \le  \frac{9s}{\sigma \pi}.
	\end{cases} 
	\end{align*}
	Thus, by Theorem \ref{thm:rff}, if we let $q(\eta)$ be the density proportional to $\bar \tau_{\lambda,\bv{K}}(\eta)$, a random Fourier features approximation satisfies the guarantee of the theorem with sample size $m$ given by:
	\begin{align*}
	m = O \left (\frac{\int_{\eta \in \R} \bar \tau_{\lambda,\bv{K}}(\eta) d\eta\, \cdot \log(s_\lambda/\delta)}{\epsilon^2} \right ) = O\left (\frac{s_\lambda^{2} \log(s_\lambda) \cdot \log(s_\lambda/\delta)}{\epsilon^2}\right ),
	\end{align*}
	%
	since by Theorem \ref{thm:exp}, $\int_{\eta \in \R} \bar \tau_{\lambda,\bv{K}}(\eta) d\eta = (2 + 6s_\lambda) \cdot O(s \log s) = O(s_\lambda^2 \log s_\lambda)$.
	
	Finally, observe that $q(\eta)$ is just a mixture of a Laplacian density with a density of the form $\frac{1}{1+|\eta| \cdot 2 \sigma \pi}$. Both can be sampled from in $O(1)$ time using, e.g., inverse transform sampling. Thus each embedding $\bv{g}(x_i)$ can be constructed obliviously in $O(m)$ time. \end{proof} 

\subsection{Final embedding via random projection}\label{sec:rp}

Corollaries \ref{cor:gaussian} and \ref{cor:cauchy} give oblivious embeddings into $\poly(s_\lambda)$ dimensions via leverage score-based RFF sampling. These oblivious embeddings can be further compressed via standard oblivious random projection time to give an oblivious embedding algorithm achieving the target dimension, linear in $s_\lambda$. Specifically we apply a stable rank approximate matrix multiplication result from \cite{cohen2016optimal}:

\begin{theorem}[Random Projection Spectral Approximation]\label{thm:rp} For any $\bv{Z} \in \R^{n \times s}$ and  $\bv{M} = \bv{ZZ}^T$ with $\lambda$-statistical dimension $s_\lambda$, if $\bs{\Pi} \in \R^{s \times m}$ has independent sub-Gaussian entries with variance $1/m$ for $m = O \left (\frac{s_\lambda + \log(1/\delta)}{\epsilon^2} \right )$, then with probability $\ge 1-\delta$, $\bv{Z} \bs{\Pi} \bs{\Pi}^T \bv{Z}^T$  is an $(\epsilon,\lambda)$-spectral approximation of $\bv{M}$.
\end{theorem}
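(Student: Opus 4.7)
The plan is to reduce the desired spectral approximation to a stable-rank approximate matrix multiplication bound. Let $\bv{A} \eqdef (\bv{M}+\lambda\bv{I})^{-1/2}\bv{Z}$. Conjugating the defining inequality
\[
(1-\epsilon)(\bv{M}+\lambda\bv{I}) \preceq \bv{Z}\bs{\Pi}\bs{\Pi}^T\bv{Z}^T + \lambda\bv{I} \preceq (1+\epsilon)(\bv{M}+\lambda\bv{I})
\]
by $(\bv{M}+\lambda\bv{I})^{-1/2}$ on both sides and using the identity $\lambda(\bv{M}+\lambda\bv{I})^{-1} = \bv{I} - \bv{A}\bv{A}^T$, the desired guarantee reduces to the single operator-norm inequality
\[
\opnorm{\bv{A}\bs{\Pi}\bs{\Pi}^T\bv{A}^T - \bv{A}\bv{A}^T} \le \epsilon.
\]
So it suffices to establish this with probability at least $1-\delta$.

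Next I would record the key spectral quantities of $\bv{A}$. Denoting the eigenvalues of $\bv{M}$ by $\mu_1 \ge \mu_2 \ge \ldots \ge 0$, the matrix $\bv{A}\bv{A}^T = (\bv{M}+\lambda\bv{I})^{-1/2}\bv{M}(\bv{M}+\lambda\bv{I})^{-1/2}$ shares eigenvectors with $\bv{M}$ and has eigenvalues $\mu_i/(\mu_i+\lambda) \in [0,1]$. In particular $\opnorm{\bv{A}}^2 \le 1$ and, reading off the definition of $\lambda$-statistical dimension, $\|\bv{A}\|_F^2 = \sum_i \mu_i/(\mu_i+\lambda) = s_\lambda$.

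Then I would invoke the stable-rank approximate matrix multiplication theorem of \cite{cohen2016optimal}: for any matrix $\bv{A}$ and a sub-Gaussian sketch $\bs{\Pi}$ with $m = \Omega((\operatorname{sr}(\bv{A}) + \log(1/\delta))/\epsilon^2)$ columns of i.i.d.\ sub-Gaussian entries of variance $1/m$, with probability at least $1-\delta$,
\[
\opnorm{\bv{A}\bs{\Pi}\bs{\Pi}^T\bv{A}^T - \bv{A}\bv{A}^T} \le \epsilon \cdot \opnorm{\bv{A}}^2,
\]
where $\operatorname{sr}(\bv{A}) = \|\bv{A}\|_F^2/\opnorm{\bv{A}}^2$. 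Using the estimates above, $\operatorname{sr}(\bv{A}) \cdot \opnorm{\bv{A}}^4 = \|\bv{A}\|_F^2 \cdot \opnorm{\bv{A}}^2 \le s_\lambda$. Substituting $\epsilon \mapsto \epsilon/\opnorm{\bv{A}}^2$ in the cited bound and simplifying therefore shows that $m = O((s_\lambda + \log(1/\delta))/\epsilon^2)$ already guarantees the absolute-error inequality $\opnorm{\bv{A}\bs{\Pi}\bs{\Pi}^T\bv{A}^T - \bv{A}\bv{A}^T} \le \epsilon$, which is exactly what we needed.

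The main delicate point is appealing to the right form of the matrix-product bound. The Cohen--Nelson--Woodruff theorem is usually phrased for products $\bv{A}^T\bs{\Pi}^T\bs{\Pi}\bv{B}$ of two possibly distinct matrices, with sample complexity scaling in $\operatorname{sr}(\bv{A}) + \operatorname{sr}(\bv{B})$; the symmetric case $\bv{B} = \bv{A}$ we need is a direct specialization. Beyond that, the whitening reduction by $(\bv{M}+\lambda\bv{I})^{-1/2}$ is a standard manipulation in the analysis of ridge-sketching, and the bookkeeping between $\operatorname{sr}(\bv{A})$, $\opnorm{\bv{A}}$, and $s_\lambda$ is routine once the identifications $\|\bv{A}\|_F^2 = s_\lambda$ and $\opnorm{\bv{A}} \le 1$ are in hand.
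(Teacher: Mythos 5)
Your proposal is correct and follows essentially the same route as the paper: whiten by $(\bv{M}+\lambda\bv{I})^{-1/2}$, observe that the whitened matrix has operator norm at most $1$ and squared Frobenius norm exactly $s_\lambda$, and invoke the stable-rank approximate matrix multiplication bound of \cite{cohen2016optimal}. The only difference is cosmetic bookkeeping — the paper quotes the bound in the form $\epsilon(\opnorm{\bv{B}}^2 + \norm{\bv{B}}_F^2/s_\lambda) \le 2\epsilon$ rather than rescaling $\epsilon$ by $\opnorm{\bv{A}}^{-2}$ — and both yield the stated sample complexity.
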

A simple example of $\bs{\Pi}$ that satisfies the theorem is one with independent $\pm 1/\sqrt{m}$ entries. See \cite{cohen2016optimal} for more details on sketching matrices that may be used, including sparse ones.
\begin{proof}
	Let $\bv{B} = (\bv M + \lambda \bv I)^{-1/2} \bv{Z}$. To prove the theorem it suffices to show that with probability  $\ge 1-\delta$, $\norm{\bv{B} \bs{\Pi} \bs{\Pi}^T \bv{B}^T - \bv{BB}^T}_2 \le \epsilon$ as this gives:
	\begin{align*}
	-\epsilon \bv{I} &\preceq \bv{B} \bs{\Pi} \bs{\Pi}^T \bv{B}^T - \bv{BB}^T  \preceq \epsilon \bv{I}\\
	-\epsilon (\bv M + \lambda \bv I) &\preceq \bv{Z} \bs{\Pi} \bs{\Pi}^T \bv{Z}^T - \bv{ZZ}^T \preceq \epsilon (\bv M + \lambda \bv I) \\
	\bv{M} -\epsilon(\bv{M} + \lambda\bv{I}) &\preceq \bv{Z} \bs{\Pi} \bs{\Pi}^T \bv{Z}^T \preceq \bv{M} +\epsilon(\bv{M} + \lambda\bv{I})\\
	(1-\epsilon)(\bv{M} + \lambda\bv{I}) &\preceq \bv{Z} \bs{\Pi} \bs{\Pi}^T \bv{Z}^T + \lambda \bv I \preceq (1+\epsilon)(\bv{M} + \lambda\bv{I}),
	\end{align*}
	which gives the theorem.
	
	To prove that $\norm{\bv{B} \bs{\Pi} \bs{\Pi}^T \bv{B}^T - \bv{BB}^T}_2 \le \epsilon$ with probability $\ge 1-\delta$ we invoke Theorem 1 of \cite{cohen2016optimal}, which gives that for our setting of $m$, with probability $\ge 1-\delta$:
	\begin{align}\label{srankBound}
	\norm{\bv{B} \bs{\Pi} \bs{\Pi}^T \bv{B}^T - \bv{BB}^T}_2  \le \epsilon \cdot (\norm{\bv B}_2^2 + \norm{\bv B}_F^2/s_\lambda).
	\end{align}
	We have $\norm{\bv B}_2^2 = \norm{(\bv{M}+ \lambda \bv{I})^{-1/2} \bv{M} (\bv{M}+ \lambda \bv{I})^{-1/2}}_2 \le 1$. Additionally, 
	\begin{align*}
	\norm{\bv{B}}_F^2 &= \sum_{i=1}^n \lambda_i \left ((\bv{M}+ \lambda \bv{I})^{-1/2} \bv{M} (\bv{M}+ \lambda \bv{I})^{-1/2}  \right )\\
	&=\sum_{i=1}^n \frac{\lambda_i(\bv{M})}{\lambda_i(\bv{M})+\lambda} = s_\lambda,
	\end{align*}
	giving that $\norm{\bv B}_F^2/s_\lambda = 1$. Thus, by \eqref{srankBound} we have with probability $\ge 1-\delta$, $\norm{\bv{B} \bs{\Pi} \bs{\Pi}^T \bv{B}^T - \bv{BB}^T}_2  \le 2\epsilon$, which completes the theorem after adjusting constants.
\end{proof}

To apply Theorem \ref{thm:rp} to the modified RFF embeddings produced by Corollaries \ref{cor:gaussian} and \ref{cor:cauchy}, we must argue that these embeddings preserve statistical dimension. We do this via an extension of Theorem \ref{thm:rff}. Variants of this type of bound are known in the finite matrix approximation setting (e.g., Lemma 20 of \cite{CohenMuscoMusco:2017}).

\begin{theorem}[Leverage Score Sampling Preserves Kernel Statistic Dimension]\label{thm:rffSD}
	Consider the setting of Theorem \ref{thm:rff}. Letting $s_\lambda(\bv{G}^* \bv{G})$ and $s_\lambda(\bv{K})$ be the $\lambda$-statistical dimensions of $\bv{G}^* \bv{G}$ and $\bv{K}$ respectively, with probability $\ge 1-\delta$ we have:
	$s_\lambda(\bv{G}^* \bv{G}) \le 4 s_\lambda(\bv{K}).$
\end{theorem}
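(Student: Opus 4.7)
The strategy decouples the bound into a deterministic spectral step and a probabilistic concentration step, using the cross-quantity $\tr(\bv{G}^*\bv{G}(\bv{K}+\lambda\bv{I})^{-1})$ as a bridge. A naive approach that only invokes $(\epsilon,\lambda)$-spectral approximation yields the loose bound $s_\lambda(\bv{G}^*\bv{G}) \le (n\epsilon + s_\lambda(\bv{K}))/(1+\epsilon)$, whose $n\epsilon$ term is useless since the $m$ prescribed by Theorem \ref{thm:rff} can be far from linear in $n$. The key observation is that $\tr(\bv{G}^*\bv{G}(\bv{K}+\lambda\bv{I})^{-1})$ is a sum of $m$ independent, bounded, non-negative random variables with mean exactly $s_\lambda(\bv{K})$, so scalar concentration does the rest.

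For the spectral step, invoke Theorem \ref{thm:rff} with $\epsilon=1/2$ (at failure probability $\delta/2$). On its success event $\bv{G}^*\bv{G}+\lambda\bv{I} \succeq \tfrac{1}{2}(\bv{K}+\lambda\bv{I})$, so inverting gives $(\bv{K}+\lambda\bv{I})^{-1} \succeq \tfrac{1}{2}(\bv{G}^*\bv{G}+\lambda\bv{I})^{-1}$, and trace monotonicity against the PSD matrix $\bv{G}^*\bv{G}$ yields
\begin{align*}
s_\lambda(\bv{G}^*\bv{G}) = \tr(\bv{G}^*\bv{G}(\bv{G}^*\bv{G}+\lambda\bv{I})^{-1}) \le 2\,\tr(\bv{G}^*\bv{G}(\bv{K}+\lambda\bv{I})^{-1}).
\end{align*}
For the concentration step, write $\bv{G}^*\bv{G} = \sum_{j=1}^m \bv{Z}_j\bv{Z}_j^*$ where $\bv{Z}_j = m^{-1/2}\sqrt{p_k(\eta_j)/q(\eta_j)}\,\bv{u}(\eta_j)$ with $\bv{u}(\eta)_i = e^{-2\pi i \eta x_i}$. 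Then $\tr(\bv{G}^*\bv{G}(\bv{K}+\lambda\bv{I})^{-1}) = \sum_{j=1}^m \xi_j$ with $\xi_j = \tau_{\lambda,\bv{K}}(\eta_j)/(m\,q(\eta_j))$, using the identity $\tau_{\lambda,\bv{K}}(\eta)=p_k(\eta)\,\bv{u}(\eta)^*(\bv{K}+\lambda\bv{I})^{-1}\bv{u}(\eta)$ that one obtains by substituting $\bv{v}=(\bv{K}+\lambda\bv{I})^{1/2}\bv{w}$ into Definition \ref{def:kl}. Since $q = \bar\tau_{\lambda,\bv{K}}/T$ and $\tau_{\lambda,\bv{K}}\le\bar\tau_{\lambda,\bv{K}}$ pointwise, each $\xi_j$ lies in $[0,T/m]$ with $\E[\xi_j] = (1/m)\int \tau_{\lambda,\bv{K}}(\eta)\,d\eta = s_\lambda(\bv{K})/m$. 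A standard scalar Bernstein bound then gives $\sum_j \xi_j \le 2\,s_\lambda(\bv{K})$ with probability $\ge 1-\delta/2$, which holds comfortably for the sample size $m = \Theta(T\log(s_\lambda/\delta))$ from Theorem \ref{thm:rff} at $\epsilon=1/2$. Union-bounding the two events gives $s_\lambda(\bv{G}^*\bv{G})\le 4\,s_\lambda(\bv{K})$ with probability $\ge 1-\delta$.

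The main obstacle is conceptual rather than technical: one must realize that spectral approximation is not strong enough by itself and that inserting $(\bv{K}+\lambda\bv{I})^{-1}$ rather than $(\bv{G}^*\bv{G}+\lambda\bv{I})^{-1}$ inside the trace reduces the problem to an i.i.d.\ scalar sum whose boundedness is delivered exactly by the leverage-score sampling recipe. As an aside, concavity of $x \mapsto x/(x+\lambda)$ together with Jensen's inequality for trace functions immediately yields $\E[s_\lambda(\bv{G}^*\bv{G})]\le s_\lambda(\bv{K})$, but this gives only constant-probability control via Markov; Bernstein through the bridge quantity is what provides the full $(1-\delta)$ guarantee claimed in Theorem \ref{thm:rffSD}.
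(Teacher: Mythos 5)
Your proof is correct and follows essentially the same route as the paper's: use the spectral guarantee at constant $\epsilon$ to replace $(\bv{G}^*\bv{G}+\lambda\bv{I})^{-1}$ inside the trace by $2(\bv{K}+\lambda\bv{I})^{-1}$, recognize the result as an i.i.d.\ sum of the bounded ratios $\tau_{\lambda,\bv{K}}(\eta_j)/(m\,q(\eta_j))$ with mean $s_\lambda(\bv{K})/m$, and apply scalar concentration plus a union bound. The only deviation is that you invoke Bernstein where the paper invokes Hoeffding --- a sound (arguably more careful) substitution, since the per-term bound $T/m$ and variance proxy $T s_\lambda(\bv{K})/m$ make the deviation probability small at the prescribed $m=\Theta(T\log(s_\lambda/\delta))$ without needing $m$ to scale with $T^2/s_\lambda(\bv{K})^2$.
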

\begin{proof}
	Following Definition \ref{def:rff}, the $j^{th}$ row of $\bv{G}$ is given by $\sqrt{\frac{1}{m \cdot q(\eta_j)}} \cdot \bs{\phi}_{\eta_j}$ where $\bs \phi_{\eta_j} \in \C^{n}$ has $[\bs \phi_{\eta_j}]_k = e^{-2 \pi i \eta_j x_k} \cdot \sqrt{p_k(\eta_j) }$. We can write:
	\begin{align*}
	s_\lambda(\bv{G}^* \bv{G})  &= \tr(\bv{G}^* \bv{G} (\bv{G}^* \bv{G} +\lambda \bv{I})^{-1})\\
	&= \tr(\bv{G} (\bv{G}^* \bv{G} +\lambda \bv{I})^{-1} \bv{G}^*)\\
	& =\frac{1}{m} \sum_{j=1}^m \frac{1}{q(\eta_j)} \cdot \bs{\phi}_{\eta_j}^* (\bv{G}^* \bv{G} +\lambda \bv{I})^{-1} \bs{\phi}_{\eta_j}.
	\end{align*}
	Assuming that the spectral approximation guarantee of Theorem \ref{thm:rff} holds, we have $(\bv{G}^* \bv{G} +\lambda \bv{I})^{-1} \le \frac{1}{1-\epsilon} \bs{\phi}_{\eta_j}^* (\bv{K} +\lambda \bv{I})^{-1} \bs{\phi}_{\eta_j} \le 2 \bs{\phi}_{\eta_j}^* (\bv{K} +\lambda \bv{I})^{-1} \bs{\phi}_{\eta_j}$ if $\epsilon \le 1/2.$ This gives:
	\begin{align*}
	s_\lambda(\bv{G}^* \bv{G})  \le \frac{2}{m} \sum_{j=1}^m \frac{1}{q(\eta_j)} \bs{\phi}_{\eta_j}^* (\bv{K} +\lambda \bv{I})^{-1} \bs{\phi}_{\eta_j} = \frac{2}{m}  \sum_{j=1}^m \frac{\tau_{\lambda,\bv{K}}(\eta_j)}{q(\eta_j)},
	\end{align*}
	where we use that $\tau_{\lambda,\bv{K}}(\eta_j) = \bs{\phi}_{\eta_j}^* (\bv{K} +\lambda \bv{I})^{-1} \bs{\phi}_{\eta_j}$. This is well known in the finite-dimensional setting, and was proven in \cite{AvronKapralovMusco:2017} in the kernel setting.
	Let $S = \frac{2}{m}  \sum_{j=1}^m \frac{\tau_{\lambda,\bv{K}}(\eta_j)}{q(\eta_j)}$. From above with probability $\ge 1-\delta$,  we have $s_\lambda(\bv{G}^* \bv{G})  \le S$. Further:
	$$\E[S] = 2 \E \left [ \frac{\tau_{\lambda,\bv{K}}(\eta_j)}{q(\eta_j)} \right ] = 2 \int_{\eta \in \R} \tau_{\lambda,\bv{K}}(\eta) d\eta = 2s_\lambda(\bv{K}).$$
	Additionally, by design we have chosen $q(\eta) = \frac{\bar \tau_{\lambda,\bv{K}} (\eta)}{T}$ for $T \eqdef \int_{\eta \in \R} \bar \tau_{\lambda,\bv{K}} (\eta)  d\eta $ and $\bar \tau_{\lambda,\bv{K}} (\eta) \ge \tau_{\lambda,\bv{K}} (\eta)$. Thus $\frac{\tau_{\lambda,\bv{K}}(\eta_j)}{q(\eta_j)} \le T$. So by a standard Hoeffding bound, 
	\begin{align*}
	\Pr[ S > 4 s_\lambda(\bv{K})] \le e^{-2 m s_\lambda(\bv{K})^2/T^2} \le e^{-2 m},
	\end{align*}
	since $T =  \int_{\eta \in \R} \bar \tau_{\lambda,\bv{K}} (\eta)  d\eta  \ge  \int_{\eta \in \R} \tau_{\lambda,\bv{K}} (\eta)  d\eta = s_\lambda(\bv{K})$. Finally, since $m = \Omega(\log(1/\delta))$, the bound holds with probability at least $1-\delta$. Overall, via a union bound, we have with probability $1-2\delta$, $s_\lambda(\bv{G}^* \bv{G}) \le S \le 4 s_\lambda(\bv{K})$, completing the proof after adjusting constants on $\delta$.

\end{proof}

Combining Theorem \ref{thm:rp} and \ref{thm:rffSD} with Corollaries \ref{cor:gaussian} and \ref{cor:cauchy} gives:
\begin{corollary}[Oblivious Embedding Full Result]
	\label{cor:full_kernel_result}
	Consider any set of points $x_1,\ldots, x_n \in \R$ and an associated Gaussian kernel matrix $\bv{K} \in \R^{n \times n}$. Let $s_\lambda$ be the $\lambda$-statistical dimension of $\bv{K}$, $\bv{G} \in \R^{n \times m'}$ be the modified RFF embedding of Corollary \ref{cor:gaussian}, and $\bs{\Pi} \in \R^{m' \times m}$ have independent sub-Gaussian entries with variance $1/m$. Then for $m' = O\left (\frac{s_\lambda^{5/2} \cdot \log(s_\lambda/\delta)}{\epsilon^2}\right )$ and $m = O \left (\frac{s_\lambda + \log(1/\delta)}{\epsilon^2} \right )$, letting $\bv{Z} = \bv{G}^* \bs{\Pi}$, with probability $\ge 1-\delta$, $\bv{ZZ}^*$ is an $(\epsilon,\lambda)$-spectral approximation of $\bv{K}$.
	The embedding $\bv{z}(x_i) \in \C^m$ can be computed obliviously in $O(m' \cdot m) = \poly(s_\lambda, \log(1/\delta),1/\epsilon)$ time.  
	
	The same bound holds for the Cauchy kernel using the RFF embedding of Corollary \ref{cor:cauchy} with the $m' = O\left (\frac{s_\lambda^{2} \log(s_\lambda) \cdot \log(s_\lambda/\delta)}{\epsilon^2}\right )$.
\end{corollary}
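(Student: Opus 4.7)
The plan is to chain together three already-established results: the leverage-score-based modified RFF embedding of Corollary \ref{cor:gaussian} (respectively \ref{cor:cauchy}), the statistical-dimension preservation bound of Theorem \ref{thm:rffSD}, and the stable-rank random projection bound of Theorem \ref{thm:rp}. The intuition is that the RFF step already produces an $(\epsilon,\lambda)$-spectral approximation $\bv{G}^*\bv{G}$ of $\bv{K}$ living in dimension $\poly(s_\lambda)$, and then an oblivious JL-style sketch can further compress this factor to a matrix with only $O((s_\lambda+\log(1/\delta))/\epsilon^2)$ columns without losing the spectral guarantee, because the $\lambda$-statistical dimension of $\bv{G}^*\bv{G}$ is comparable to that of $\bv{K}$.

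First I invoke Corollary \ref{cor:gaussian} with accuracy parameter $\epsilon/3$ and failure probability $\delta/3$. This gives an RFF embedding into $m' = O(s_\lambda^{5/2}\log(s_\lambda/\delta)/\epsilon^2)$ dimensions such that with probability $\ge 1-\delta/3$, $\bv{G}^*\bv{G}$ is an $(\epsilon/3,\lambda)$-spectral approximation of $\bv{K}$. On the same high-probability event, Theorem \ref{thm:rffSD} guarantees that $s_\lambda(\bv{G}^*\bv{G}) \le 4\,s_\lambda(\bv{K})$, where the constant factor will be absorbed in the $O(\cdot)$ notation. This control on the statistical dimension of the intermediate matrix is what makes the downstream random projection succeed at the target dimension $m$.

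Next, conditional on the above event, I apply Theorem \ref{thm:rp} to $\bv{G}^*$ (treating it as the matrix playing the role of $\bv{Z}$ in that theorem, with $\bv{M} = \bv{G}^*\bv{G}$), using accuracy $\epsilon/3$, failure probability $\delta/3$, and sketching dimension $m = O((s_\lambda+\log(1/\delta))/\epsilon^2)$ chosen large enough to handle the inflated statistical dimension $4\,s_\lambda(\bv{K})$. This produces $\bv{G}^*\bs{\Pi}\bs{\Pi}^*\bv{G}$ as an $(\epsilon/3,\lambda)$-spectral approximation of $\bv{G}^*\bv{G}$ with probability $\ge 1-\delta/3$. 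Composing the two $(\epsilon/3,\lambda)$ approximations under the Loewner order, together with the elementary inequality $(1+\epsilon/3)^2 \le 1+\epsilon$ and $(1-\epsilon/3)^2 \ge 1-\epsilon$ for $\epsilon \le 1/2$, gives that $\bv{ZZ}^* = \bv{G}^*\bs{\Pi}\bs{\Pi}^*\bv{G}$ is an $(\epsilon,\lambda)$-spectral approximation of $\bv{K}$; a union bound handles the total failure probability $\delta$. The Cauchy case is identical, substituting Corollary \ref{cor:cauchy} in place of Corollary \ref{cor:gaussian} and the corresponding value of $m'$. For the runtime, each column $\bv{g}(x_i)$ is constructed obliviously in $O(m')$ time by Corollary \ref{cor:gaussian}/\ref{cor:cauchy}, and multiplying by $\bs{\Pi}$ costs $O(m'm)$, so each final embedding $\bv{z}(x_i)$ is produced in $\poly(s_\lambda,\log(1/\delta),1/\epsilon)$ time. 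The main subtlety is simply bookkeeping: ensuring that the statistical-dimension preservation step is invoked on the correct high-probability event so that Theorem \ref{thm:rp}'s target dimension depends on $s_\lambda(\bv{K})$ rather than the potentially larger $s_\lambda(\bv{G}^*\bv{G})$; there is no new technical ingredient beyond a careful composition of the three cited black-box lemmas.
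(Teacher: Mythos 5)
Your proof is correct and follows exactly the route the paper intends: the paper states Corollary \ref{cor:full_kernel_result} as an immediate consequence of combining Corollary \ref{cor:gaussian}/\ref{cor:cauchy}, Theorem \ref{thm:rffSD}, and Theorem \ref{thm:rp}, and you have supplied precisely that chain, including the two details the paper leaves implicit (conditioning on the RFF success event so that $s_\lambda(\bv{G}^*\bv{G}) \le 4 s_\lambda(\bv{K})$ before invoking the projection, and composing the two $(\epsilon/3,\lambda)$ guarantees under the Loewner order). No gaps.
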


\section{Active Learning -- Omitted Proofs}
\label{app:active}

As discussed in Section \ref{sec:active}, our main active function fitting problem of interest (Problem \ref{prob:unformal_interp}) can be solved via an infinite kernel ridge regression problem (Claim \ref{claim:regression_reduction}.) \cite{AvronKapralovMusco:2019} shows that this problem can in turn be solved approximately with essentially optimal sample complexity by sampling query points according to the kernel operator ridge leverage scores (Definition \ref{def:ridgeScores}). In particular: 
\begin{theorem}[Approximate regression via leverage function sampling -- Theorem 6  of \cite{AvronKapralovMusco:2019}]\label{thm:baseSampling}
	Assume that $\lambda \leq \opnorm{\Kmu}$.\footnote{We define the operator norm as $\opnorm{\Kmu} \eqdef \sup_{f \in L_2(p): \norm{f}_p=1} \norm{\Kmu f}_p$.
If $\lambda > \opnorm{\Kmu}$ then \eqref{eq:approx_regress_1} is  solved to a constant approximation factor by the trivial solution $\tilde g = 0$.} Consider a function $\ttmu$ with $\ttmu(x)  \ge \tmu(x)$ for all $x \in \R$, where $\tmu$ is the ridge leverage function of Def. \ref{def:ridgeScores}. Let $T  = \int_{x \in \R} \ttmu(x) dx$ and $m = c \cdot T \cdot \left(\log T + 1/\delta\right)$  for sufficiently large fixed constant $c$. Let $x_1,\ldots,x_m$ be time points sampled independently according to density $h(x) \eqdef \frac{\ttmu(x)}{T}.$ For $j \in 1,\ldots,m$, let $w_j = \sqrt{\frac{p(x_j)}{m \cdot h(x_j)}}$. Let $\bv{F}: \C^m \rightarrow L_2(q)$ be the operator:
	\begin{align*}
		\left[\bv{F} \,\bv{g} \right](\eta) = \sum_{j=1}^m w_j \cdot \bv{g}_j \cdot e^{- 2\pi i \eta x_j}
	\end{align*}
	and $\bv{y},\bv{n} \in \R^m$ be the vectors with $\bv{y}_j = w_j \cdot y(x_j)$ and $\bv{n}_j = w_j \cdot n(x_j)$.
	Let:
	\begin{align}
		\tilde{g} = \argmin_{w \in L_2(q)}\left[  \|\bv{F}^* w - (\bv y+\bv n)\|_2^2 + \lambda \|w\|_q^2\right].\label{eq:approxkrr}
	\end{align}
	With probability $\ge 1- \delta$:
	\begin{align}
	\label{12again2}
		\|\Fmu^* \tilde{g} - (y+n)\|_p^2 + \lambda \|\tilde{g}\|_q^2 \leq 3\min_{w \in L_2(q)}\left[  \|\Fmu^* w - (y+n)\|_p^2 + \lambda\|w\|_q^2\right].
	\end{align}
	\end{theorem}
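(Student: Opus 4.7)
The plan is to adapt the classical leverage-score sampling template for (kernel) ridge regression, lifted to the infinite-dimensional operator setting on $L_2(q)$. Define the population operator $\mathcal{A} = \Fmu \Fmu^* + \lambda \Imu$ and its sampled analog $\tilde{\mathcal{A}} = \bv{F}\bv{F}^* + \lambda \Imu$. A quick check using the weights $w_j = \sqrt{p(x_j)/(m \cdot h(x_j))}$ shows that $\bv{F}\bv{F}^*$ is a sum of $m$ iid positive rank-one self-adjoint operators with expectation equal to $\Fmu \Fmu^*$; likewise $\bv{F}(\bv{y}+\bv{n})$ is an unbiased estimator of $\Fmu(y+n)$. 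So from the start we are in the standard ``sum of iid rank-ones plus regularization'' picture.

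The main technical step is to prove a two-sided spectral bound $\tfrac{1}{2}\mathcal{A} \preceq \tilde{\mathcal{A}} \preceq \tfrac{3}{2}\mathcal{A}$ with probability $1-\delta$. Conjugating each rank-one summand by $\mathcal{A}^{-1/2}$, a direct Rayleigh-quotient computation shows its operator norm equals $\frac{p(x_j)\,\tmu(x_j)}{m \cdot h(x_j) \cdot p(x_j)} = \frac{T\cdot \tmu(x_j)}{m \cdot \ttmu(x_j)} \le \frac{T}{m}$, using $h = \ttmu/T$ and the assumed pointwise bound $\ttmu \ge \tmu$. The expected trace of each preconditioned summand equals $\smu/m \le T/m$, and the expected operator itself equals $\mathcal{A}^{-1/2}\Fmu\Fmu^*\mathcal{A}^{-1/2} \preceq \Imu$. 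These are exactly the inputs required for an operator-valued Bernstein inequality; with $m = \Theta(T\log(T/\delta))$ samples the deviation is smaller than $\tfrac{1}{2}$ in operator norm, giving the spectral approximation.

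Given spectral equivalence, the regression conclusion follows by a standard sketch-based-regression calculation. The empirical minimizer has the closed form $\tilde g = \bv{F}\tilde{\mathcal{A}}^{-1}(\bv{y}+\bv{n})$ and the population minimizer is $g^{*} = \Fmu\mathcal{A}^{-1}(y+n)$. Substituting $\tilde g$ into the population objective, expanding in the $L_2(p)$ inner product, and repeatedly using spectral equivalence together with an approximate multiplication estimate of the form $\|\mathcal{A}^{-1/2}(\bv{F}(\bv{y}+\bv{n}) - \Fmu(y+n))\|_q^{2} = O(\mathrm{OPT}/\delta)$ (which follows from a one-line variance computation over the sampling distribution and Markov's inequality) yields the claimed factor-three bound. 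The chief obstacle is the concentration step: the summands live on the infinite-dimensional space $L_2(q)$, so one cannot directly cite the textbook matrix Bernstein. The cleanest fix is to invoke a trace-class Bernstein with intrinsic dimension $\smu$, which is finite under the assumption $\lambda \le \opnorm{\Kmu}$ since $\mathcal{A}$ then has rapidly decaying spectrum; alternatively one can argue that all the randomness lies in the range of $\Fmu$ and reduce to a finite-dimensional subspace of effective dimension $\smu$ before applying the standard matrix Chernoff.
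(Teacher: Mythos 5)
This theorem is not proved in the paper at all: it is imported verbatim as Theorem~6 of \cite{AvronKapralovMusco:2019} (with the remark that the arguments there, stated for uniform $p$, carry over to general densities), so there is no in-paper proof to compare against. Your sketch is a faithful reconstruction of the argument in that reference: the identities $\E[\bv{F}\bv{F}^*]=\Fmu\Fmu^*$, the bound $p(x)\langle \phi_x,(\Fmu\Fmu^*+\lambda\Imu)^{-1}\phi_x\rangle_q=\tmu(x)$ giving the per-sample operator-norm bound $T/m$ after preconditioning, the intrinsic-dimension operator Bernstein step for $\tfrac12\mathcal{A}\preceq\tilde{\mathcal{A}}\preceq\tfrac32\mathcal{A}$, and the Markov-based approximate matrix-multiplication bound (which is exactly why the sample complexity carries a $1/\delta$ rather than $\log(1/\delta)$) are all the right ingredients and are combined in the standard sketched-ridge-regression way. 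Two small cautions: your closed forms $\tilde g=\bv{F}\tilde{\mathcal{A}}^{-1}(\bv y+\bv n)$ and $g^*=\Fmu\mathcal{A}^{-1}(y+n)$ have the resolvent on the wrong side type-wise and should be read through the push-through identity, e.g.\ $\tilde g=(\bv F\bv F^*+\lambda\Imu)^{-1}\bv F(\bv y+\bv n)$; and your fallback suggestion that ``all the randomness lies in the range of $\Fmu$'' so one can reduce to a fixed $\smu$-dimensional subspace is not literally true (the vectors $\phi_{x_j}$ span an infinite-dimensional subspace of $L_2(q)$ as $x_j$ varies) --- the trace-class/intrinsic-dimension Bernstein route is the one that actually works, and the role of the hypothesis $\lambda\le\opnorm{\Kmu}$ is to guarantee $\smu\ge 1/2$ so the logarithmic factors are meaningful, not to ensure spectral decay.
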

		Note that via Claim \ref{claim:regression_reduction}, $\Fmu^* \tilde g$ of Theorem \ref{thm:baseSampling} thus solves Problem \ref{prob:unformal_interp} with probability $\ge 1-\delta$ and with error parameters $\lambda' = 6\lambda$ and $C' = 8$. If $\ttmu(x)$ is a tight upper bound on the leverage scores, the sample complexity is near linear in $\smu = \int_{x \in \R} \tmu(x) dx$.
		Also note that the subsampled optimization problem of \eqref{eq:approxkrr} is just a standard kernel ridge regression problem, and thus efficiently solvable. Specifically:
	\begin{claim}\label{clm:krr}
	Consider the set up of Theorem \ref{thm:baseSampling}. Let $k_q: \R \times \R \rightarrow \R$ be the shift-invariant kernel with Fourier transform $q$. Let $\bv{K} \in \R^{m \times m}$ have $\bv{K}_{i,j} = w_i \cdot w_j \cdot k_q(x_i,x_j)$. Then
	$\tilde f = \Fmu^* \tilde{g} $ is given by $\tilde f(x) = \bv{k}(x)^T \bv{z}$ where $\bv{z} = (\bv{K} + \lambda \bv{I})^{-1} (\bv{y}+ \bv{n})$ and $\bv{k}(x) = [w_1 \cdot k_q(x_1,x),\ldots, w_m \cdot k_q(x_m,x)]$.
	\end{claim}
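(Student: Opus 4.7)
The plan is to apply a standard representer-theorem argument adapted to the weighted Fourier operator $\bv{F}$, and then reduce the infinite-dimensional ridge regression problem over $L_2(q)$ to a finite-dimensional one whose normal equations give exactly $\bv{z} = (\bv{K}+\lambda \bv{I})^{-1}(\bv y + \bv n)$.

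First I would compute the adjoint $\bv{F}^\ast : L_2(q) \to \C^m$ from the definition of $\bv{F}$ by matching $\langle \bv{F}\bv{g}, w\rangle_q = \langle \bv{g}, \bv{F}^\ast w\rangle$, which yields $[\bv{F}^\ast w]_j = w_j \cdot [\Fmu^\ast w](x_j)$. Next, decompose any candidate $w \in L_2(q)$ as $w = w_\parallel + w_\perp$, where $w_\parallel$ lies in the range of $\bv{F}$ (an at-most $m$-dimensional, hence closed, subspace) and $w_\perp$ is orthogonal to this range under $\langle \cdot,\cdot\rangle_q$. Since $\langle w_\perp, \bv{F}\bv{g}\rangle_q = \langle \bv{F}^\ast w_\perp, \bv{g}\rangle = 0$ for all $\bv{g}\in \C^m$, we have $\bv{F}^\ast w_\perp = 0$, so $\bv{F}^\ast w = \bv{F}^\ast w_\parallel$. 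Also $\|w\|_q^2 = \|w_\parallel\|_q^2 + \|w_\perp\|_q^2 \ge \|w_\parallel\|_q^2$. Therefore any minimizer $\tilde g$ of \eqref{eq:approxkrr} must satisfy $\tilde g = \bv{F}\bv{z}$ for some $\bv{z}\in\C^m$.

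The next step is the key computation: expressing $\bv{F}^\ast \bv{F}$ in closed form. Using Bochner's theorem and the fact that $k_q$ is the inverse Fourier transform of $q$, we get
\begin{align*}
[\bv{F}^\ast \bv{F}\bv{z}]_j = w_j \sum_{k=1}^m w_k \bv{z}_k \int_{\R} e^{2\pi i \eta(x_j - x_k)} q(\eta)\,d\eta = \sum_{k=1}^m w_j w_k\, k_q(x_j, x_k)\, \bv{z}_k = [\bv{K}\bv{z}]_j,
\end{align*}
so $\bv{F}^\ast \bv{F} = \bv{K}$ and $\|\bv{F}\bv{z}\|_q^2 = \bv{z}^\ast \bv{K} \bv{z}$. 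Substituting $w = \bv{F}\bv{z}$ into \eqref{eq:approxkrr} turns it into the finite-dimensional problem $\min_{\bv{z}\in \C^m}\,\|\bv{K}\bv{z} - (\bv{y}+\bv{n})\|_2^2 + \lambda \bv{z}^\ast \bv{K}\bv{z}$. Differentiating (or using first-order optimality on the range of $\bv{K}$) gives the normal equation $\bv{K}(\bv{K}+\lambda \bv{I})\bv{z} = \bv{K}(\bv{y}+\bv{n})$, which is solved by $\bv{z} = (\bv{K}+\lambda \bv{I})^{-1}(\bv{y}+\bv{n})$ (any solution $\bv{z}$ lying in the null space of $\bv{K}$ only changes $\bv{F}\bv{z}$ by zero, and adds nothing to the objective or to $\tilde f$).

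Finally, I would evaluate $\tilde f(x) = [\Fmu^\ast \bv{F}\bv{z}](x)$ by swapping sum and integral and again using that $k_q$ is the Fourier transform of $q$:
\begin{align*}
\tilde f(x) = \sum_{j=1}^m w_j \bv{z}_j \int_{\R} e^{2\pi i \eta (x - x_j)} q(\eta)\, d\eta = \sum_{j=1}^m w_j\, k_q(x_j, x)\, \bv{z}_j = \bv{k}(x)^T \bv{z},
\end{align*}
which is the claimed formula. There is no real obstacle here; the only subtlety is the representer-theorem step, which is standard once one observes that $\text{range}(\bv{F})$ is finite-dimensional and hence closed in $L_2(q)$, so the orthogonal decomposition of $w$ is well-defined.
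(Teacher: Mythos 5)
Your proof is correct and is exactly the standard representer-theorem derivation that the paper leaves implicit (it states Claim \ref{clm:krr} without proof as a routine kernel ridge regression fact): the orthogonal decomposition onto the finite-dimensional range of $\bv{F}$, the identity $\bv{F}^*\bv{F} = \bv{K}$ via Bochner's theorem, and the resulting normal equations all check out, including the careful remark that null-space components of $\bv{K}$ do not affect $\bv{F}\bv{z}$. Nothing is missing.
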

	
\subsection{Kernel operator leverage score bound via Fourier sparse approximation}

We now prove Theorem \ref{thm:css2}, which bounds the kernel operator leverage function $\tmu$ of Definition \ref{def:ridgeScores}, in terms of the Fourier sparse leverage scores for the class $\mathcal T_s$. Combined with Theorem \ref{thm:baseSampling}, Claim \ref{clm:krr}, and Claim \ref{claim:regression_reduction}, this bound will yield our main sample complexity results for Problem \ref{prob:unformal_interp}, which is stated in Corollary \ref{cor:gaussActive2}. 
\begin{reptheorem}{thm:css2}[Kernel operator leverage function bound] Let $s = \lceil 36 \cdot \smu \rceil +1$. 
For all $x \in \R$:
	\begin{align*}
	\tmu(x) \le (2+8 \smu) \cdot \tau_{ s, p}(x).
	\end{align*}
\end{reptheorem}
As discussed, Theorem \ref{thm:css2} is analogous to Theorem \ref{thm:sparseReduction} and is proved similarly, by approximating  $\Fmu^* w$ via a Fourier sparse function, with sparsity linear in the statistical dimension $\smu$. In giving this approximation, we use the following continuous analog of Theorem \ref{thm:rss}:
 \begin{theorem}[Frequency subset selection -- Theorem 9 of \cite{AvronKapralovMusco:2019} ]\label{thm:css}
	For some $s \le \lceil 36 \cdot \smu\rceil$ there exists a set of frequencies $\eta_1,\ldots,\eta_s \in \C$ such that, letting $\bv{C}_s: \C^{s} \rightarrow L_2(p)$ be the operator $[\bv{C}_s \bv{w}](x) = \sum_{j=1}^s \bv{w}_j e^{-2 \pi i \eta_j x} $ and $\bv{Z}:  L_2(q) \rightarrow \C^s$ be the operator $\bv{Z} = (\bv{C}_s^* \bv{C}_s)^{-1} \bv{C}_s^* \Fmu^*$,
	\begin{align}\label{eq:frobNormBound}
	\tr(\Kmu - \bv{C}_s\bv{Z}\bv{Z}^*\bv{C}_s^*)  \le 4\lambda \cdot \smu \text{ and }\bv{Z}^*\bv{C}_s^*  \bv{C}_s\bv{Z} \preceq \Fmu \Fmu^*.
	\end{align}
	Letting $f_x \in L_2(q)$ be given by $f_x(\eta) = e^{2 \pi i x \eta}$ and $\bv{c}_x \in \C^s$ have $j^{th}$ entry $[\bv{c}_x]_j = e^{-2 \pi i \eta_j x}$ we can write: $\tr(\Kmu - \bv{C}_s\bv{Z}\bv{Z}^*\bv{C}_s^*) = \int_{x \in \R} \norm{f_x - \bv{Z}^* \bv{c}_x}_q^2 \cdot p(x) dx$.
\end{theorem}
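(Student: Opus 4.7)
The plan is to mirror the proof of Theorem \ref{thm:sparseReduction} in continuous form, with Theorem \ref{thm:css} taking the role of the discrete row subset selection theorem (Theorem \ref{thm:rss}). Let $t \le \lceil 36 \smu \rceil$ be the sparsity provided by Theorem \ref{thm:css} and let $\bv{C}_t, \bv{Z}$ denote the corresponding operators, so that $t+1 \le s$. For any $w \in L_2(q)$ with $\norm{w}_q > 0$, applying $|a+b|^2 \le 2|a|^2 + 2|b|^2$ to the splitting $[\Fmu^* w](x) = [\bv{C}_t \bv{Z} w](x) + ([\Fmu^* w](x) - [\bv{C}_t \bv{Z} w](x))$ and then dropping one nonnegative term in each resulting denominator yields
\begin{align*}
\frac{p(x) |[\Fmu^* w](x)|^2}{\norm{\Fmu^* w}_p^2 + \lambda \norm{w}_q^2} \le \frac{2 p(x) |[\bv{C}_t \bv{Z} w](x)|^2}{\norm{\Fmu^* w}_p^2} + \frac{2 p(x) |[\Fmu^* w](x) - [\bv{C}_t \bv{Z} w](x)|^2}{\lambda \norm{w}_q^2}.
\end{align*}

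For the first term, Theorem \ref{thm:css} gives $\bv{Z}^* \bv{C}_t^* \bv{C}_t \bv{Z} \preceq \Fmu \Fmu^*$, so $\norm{\bv{C}_t \bv{Z} w}_p^2 \le \norm{\Fmu^* w}_p^2$. Since $[\bv{C}_t \bv{Z} w](x) = \sum_{j=1}^{t} [\bv{Z} w]_j \, e^{-2\pi i \eta_j x}$ is $t$-Fourier-sparse in $x$, the definition of $\tau_{t, p}$ immediately bounds this contribution by $2\tau_{t, p}(x)$.

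For the second term, I will write $[\bv{Z} w]_j = \langle \bv{Z}^* e_j, w\rangle_q$ and rearrange the difference into an inner product against a kernel: $[\Fmu^* w](x) - [\bv{C}_t \bv{Z} w](x) = \int_\R w(\eta) \, r_\eta(x) \, q(\eta) d\eta$, where $r_\eta(x) = e^{2\pi i \eta x} - \sum_{j=1}^{t} [\bv{Z}^* e_j](\eta)^* \, e^{-2\pi i \eta_j x}$. For each fixed $\eta$, $r_\eta$ is $(t+1)$-Fourier-sparse in $x$, so the leverage score definition gives $p(x) |r_\eta(x)|^2 \le \tau_{t+1, p}(x) \cdot \norm{r_\eta}_p^2$. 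Combining this with Cauchy--Schwarz in $L_2(q)$, which yields $|[\Fmu^* w](x) - [\bv{C}_t \bv{Z} w](x)|^2 \le \norm{w}_q^2 \cdot \int_\R |r_\eta(x)|^2 q(\eta) d\eta$, reduces the second term to $\frac{2\tau_{t+1, p}(x)}{\lambda} \cdot \int_\R \norm{r_\eta}_p^2 q(\eta) d\eta$.

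The main remaining step is to identify $\int_\R \norm{r_\eta}_p^2 q(\eta) d\eta$ with the trace quantity bounded by Theorem \ref{thm:css}. This integral equals the squared Hilbert--Schmidt norm of the operator $\bv{R} = \Fmu^* - \bv{C}_t \bv{Z}$, and using $\bv{Z} = (\bv{C}_t^* \bv{C}_t)^{-1} \bv{C}_t^* \Fmu^*$ the cross terms in $\bv{R}^* \bv{R}$ collapse cleanly through the orthogonal projection $\bv{P} = \bv{C}_t(\bv{C}_t^* \bv{C}_t)^{-1}\bv{C}_t^*$, yielding $\tr(\bv{R}^* \bv{R}) = \tr(\Kmu - \bv{C}_t \bv{Z} \bv{Z}^* \bv{C}_t^*) \le 4 \lambda \smu$. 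Substituting yields a residual contribution of $8 \smu \cdot \tau_{t+1, p}(x)$. Summing both contributions and applying $\tau_{t, p}(x) \le \tau_{t+1, p}(x) \le \tau_{s, p}(x)$ (since $\mathcal{T}_t \subseteq \mathcal{T}_{t+1} \subseteq \mathcal{T}_s$) gives $\tmu(x) \le (2 + 8\smu) \tau_{s, p}(x)$, as required.
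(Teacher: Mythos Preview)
Your proposal does not prove the stated theorem. The statement you were given is Theorem~\ref{thm:css} (frequency subset selection), which asserts the \emph{existence} of frequencies $\eta_1,\ldots,\eta_s$ and the operator $\bv{Z}$ satisfying the trace bound $\tr(\Kmu - \bv{C}_s\bv{Z}\bv{Z}^*\bv{C}_s^*) \le 4\lambda \smu$ and the Loewner inequality $\bv{Z}^*\bv{C}_s^*\bv{C}_s\bv{Z} \preceq \Fmu\Fmu^*$, together with the integral identity for the trace. In the paper this result is quoted from \cite{AvronKapralovMusco:2019} and is not proved here. Your write-up instead \emph{assumes} Theorem~\ref{thm:css} and uses it as a black box to derive the leverage-score inequality $\tmu(x) \le (2+8\smu)\tau_{s,p}(x)$, which is Theorem~\ref{thm:css2}, not Theorem~\ref{thm:css}.

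If the intended target was Theorem~\ref{thm:css2}, then your argument is correct and essentially identical to the paper's: the same splitting into projection and residual, the same use of $\bv{Z}^*\bv{C}_t^*\bv{C}_t\bv{Z} \preceq \Fmu\Fmu^*$ on the first term, and the same Cauchy--Schwarz plus per-frequency $(t{+}1)$-sparse leverage bound on the residual, followed by identifying the $\eta$-integral with the trace from Theorem~\ref{thm:css}. Your $r_\eta(x)$ is exactly the paper's $f_\eta(x) - [\bv{C}_t \bv{z}_\eta](x)$, so the two arguments differ only in notation and in the order in which Cauchy--Schwarz and the pointwise leverage bound are applied.
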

	 \begin{proof}[Proof of Theorem \ref{thm:css2}]
	 The proof closely follows that of Theorem \ref{thm:sparseReduction}. We can bound the ridge leverage function of Definition \ref{def:ridgeScores} by:
\begin{align}
\tmu(x) &= \sup_{w \in L_2(q),\norm{w}_q > 0} \frac {p(x) \cdot |[\Fmu^* w](x)|^2}{\norm{\Fmu^* w}_p^2 + \lambda \norm{w}_q^2}\\
&\le \frac{2p(x) \cdot  |[\bv{C}_s \bv{Z } w](x)|^2}{\norm{\Fmu^* w}_p^2} + \frac{2p(x) \cdot |[\Fmu^* w](x) - [\bv{C}_s \bv{Z } w](x)|^2}{\lambda \norm{w}_q^2}.
 \end{align}
 Since by Theorem \ref{thm:css2}, $\bv{Z}^*\bv{C}_s^*  \bv{C}_s\bv{Z} \preceq \Fmu \Fmu^*$ we have 
 $$\|\Fmu^* w\|_{p}^2 = \langle \Fmu^* w, \Fmu^* w \rangle_p = \langle \Fmu \Fmu^* w, w \rangle_q \ge \langle \bv{Z}^*\bv{C}_s^*  \bv{C}_s\bv{Z}  w, w \rangle_q = \norm{\bv{C}_s\bv{Z}  w}_p^2,$$ which combined with \eqref{eq:split1} gives: 
 \begin{align*}
\tmu(x) \le \frac{2p(x) \cdot  |[\bv{C}_s \bv{Z } w](x)|^2}{\norm{\bv{C}_s \bv{Z} w}_p^2} + \frac{2p(x) \cdot |[\Fmu^* w](x) - [\bv{C}_s \bv{Z } w](x)|^2}{\lambda \norm{w}_q^2}.
 \end{align*}
 We can observe that $\bv{C}_s \bv{Z} w$ is an $\lceil 36 \cdot \smu \rceil = s-1$ sparse Fourier function in $\mathcal{T}_{s_1}$, giving:
  \begin{align}\label{eq:inter1C}
\tmu(x) \le 2 \tau_{s,p}(x) + \frac{2p(x) \cdot |[\Fmu^* w](x) - [\bv{C}_s \bv{Z } w](x)|^2}{\lambda \norm{w}_q^2}.
 \end{align}
 It thus remains to bound the second term. Let $\bv{c}_x \in \C^s$ have $j^{th}$ entry $[\bv{c}_x]_j = e^{-2 \pi i \eta_j x}$. $\bv{c}_x$ is the `row' of the operator $\bv{C}$ corresponding to $x$ and we have $[\bv{C}_s \bv{Z } w](x) = \bv{c}_x^T \bv{Z } w$. Similarly, let $f_x \in L_2(q)$ be given by $f_x(\eta) = e^{2 \pi i \eta x}$. We can write:
 \begin{align*}
|[ \Fmu^* w](x) - [\bv{C}_s \bv{Z } w](x)|^2 = |\langle f_x - \bv{Z}^* \bv{c}_x, w \rangle_q|^2 \le \norm{f_x - \bv{Z}^* \bv{c}_x}_q^2 \cdot \norm{w}_q^2
 \end{align*}
 via Cauchy-Schwarz. Plugging back into \eqref{eq:inter1C} gives 
 \begin{align}\label{eq:beforeClaim}
 \tmu(x) \le 2 \tau_{s,p}(x) + \frac{2p(x) \cdot  \norm{f_x - \bv{Z}^* \bv{c}_x}_q^2}{\lambda}.
 \end{align}
 The theorem then follows from \eqref{eq:beforeClaim} combined with the following claim:
 \begin{claim}
 $p(x) \cdot  \norm{f_x - \bv{Z}^* \bv{c}_x}_q^2\le \tau_{s,p}(x) \cdot 4 \lambda \smu.$
 \end{claim}
 \begin{proof}
 Let $f_\eta \in L_2(p)$ be given by $f_\eta(x) = e^{2\pi i \eta x}$ and 
let $\bv{z}_\eta \in \C^s$ be the `column' of $\bv{Z}$ corresponding to $\eta$. Formally, as $\bv{Z} =  (\bv{C}_s^* \bv{C}_s)^{-1} \bv{C}_s^* \Fmu^*$, $\bv{z}_\eta = (\bv{C}_s^* \bv{C}_s)^{-1} \bv{C}_s^* f_\eta$. We have $f_\eta - \bv{C}_s \bv{z}_\eta \in \mathcal{T}_s$ and thus:
\begin{align*}
\frac{p(x) \cdot |f_\eta(x) -[ \bv{C}_s \bv{z}_\eta](x)|^2}{\norm{f_\eta - \bv{C}_s \bv{z}}_p^2} \le \tau_{s,p}(x).
\end{align*}
This gives:
\begin{align*}
 p(x) \cdot \int_{\eta \in \R} |f_\eta(x) -[ \bv{C}_s \bv{z}_\eta](x)|^2 q(\eta) d \eta \le \tau_{s,p}(x) \cdot \int_{\eta \in \R}\norm{f_\eta - \bv{C}_s \bv{z}_\eta}_p^2 q(\eta) d \eta.
\end{align*}
Note that $f_\eta(x) = f_x(\eta)$ and $\bv{C}_s \bv{z}_\eta(x) = [\bv{Z}^* \bv{c}_x](\eta)$. Thus we can simplify to:
\begin{align*}
 p(x) \cdot \norm{f_x - \bv{Z}^* \bv{c}_x}_q^2 &\le \tau_{s,p}(x) \cdot \int_{\eta \in \R} \int_{x \in \R} |f_\eta(x)-\bv{C}_s \bv{z}_\eta(x)|^2 p(x) q(\eta) dx d \eta\\
 &= \tau_{s,p}(x) \cdot  \int_{x \in \R} \norm{f_x-\bv{Z}^* \bv{c}_x}_q^2 p(x) dx\\
 &= \tau_{s,p}(x) \cdot \tr(\Kmu - \bv{C}_s\bv{Z}\bv{Z}^*\bv{C}_s^*)\\
 &\le \tau_{s,p}(x) \cdot 4\lambda \smu,
\end{align*}
where the last two bounds follow from Theorem \ref{thm:css}.
 \end{proof}
 \end{proof}

\subsection{Active regression bounds}\label{app:cor}

We conclude by combining the leverage score sampling result of Theorem \ref{thm:baseSampling}, and Claim \ref{clm:krr} with the kernel operator leverage score upper bound of Theorem \ref{thm:css2} to solve Problem \ref{prob:unformal_interp} with sample complexity depending polynomially on the statistical dimension $\smu$. Our main result is summarized in Corollary \ref{cor:gaussActive} of Section \ref{sec:active}, and stated in full detail below.

\begin{corollary}[Active Function Fitting -- Gaussian or Exponential Density]\label{cor:gaussActive2}
Consider the active regression set up of Problem \ref{prob:unformal_interp}.
Let $p$ be the Gaussian density $p(x) = \frac{1}{\sigma  \sqrt{2\pi}} e^{-x^2/(2\sigma^2)}$.

 For any frequency density $q$ and $0 < \lambda < \opnorm{\Kmu}$, let $\smu$ be the $\lambda$-statistical dimension of $\Kmu$. Let $s = \lceil 36 \smu \rceil +1$ and let $\bar \tau_{s,p}(x)$ be the leverage score bound of Theorem \ref{thm:gaussian}.
	Let 
	$m = c \cdot \smu^{5/2} \cdot \left(\log \smu  + 1/\delta\right)$  for a sufficiently large constant $c$. Let $x_1,\ldots,x_m$ be time points sampled independently according to the density proportional to $\bar \tau_{s,p}(x)$ and let $\tilde y$ be computed from these points using kernel ridge regression according to the procedure of Theorem \ref{thm:baseSampling} and Claim \ref{clm:krr}.
	 Then with probability $\ge 1- \delta$:
	\begin{align}
	\label{12again2}
		\|y - \tilde y\|_p^2 \le  8 \norm{n}_p^2 + 6 \lambda \|{g}\|_q^2.
	\end{align}
	
	An identical bound holds when $p$ is the Laplacian density $p(x) = \frac{1}{\sqrt{2}\sigma} e^{-|x| \sqrt{2}/\sigma}$, $\bar \tau_{s,p}(x)$ is the leverage score bound of Theorem \ref{thm:exp}, and $m = c \cdot \smu^{2} \cdot \left(\log \smu  + 1/\delta\right)$.
\end{corollary}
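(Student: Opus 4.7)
The plan is to chain together three building blocks already assembled in the paper, with essentially no new technical content required. First, I would invoke Theorem \ref{thm:css2} with $s = \lceil 36 \smu \rceil + 1$ to upper bound the kernel operator ridge leverage function by $\tmu(x) \le (2 + 8\smu) \cdot \tau_{s,p}(x)$. Then, since $p$ is Gaussian (resp.\ Laplacian), I would invoke Theorem \ref{thm:gaussian} (resp.\ Theorem \ref{thm:exp}) to replace $\tau_{s,p}(x)$ by the explicit closed-form bound $\bar \tau_{s,p}(x)$. Defining $\ttmu(x) := (2+8\smu) \cdot \bar \tau_{s,p}(x)$, I obtain a concrete pointwise upper bound on $\tmu$ that we can actually sample from.

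Next, I would compute the total mass $T := \int_{\R} \ttmu(x)\, dx$ and feed $\ttmu$ into Theorem \ref{thm:baseSampling} (whose hypothesis $\lambda \le \opnorm{\Kmu}$ is included in the corollary's assumptions). For the Gaussian case, Theorem \ref{thm:gaussian} gives $\int \bar \tau_{s,p}(x)\, dx = O(s^{3/2}) = O(\smu^{3/2})$, hence $T = O(\smu^{5/2})$; for the Laplace case, Theorem \ref{thm:exp} gives $\int \bar \tau_{s,p}(x)\, dx = O(s \log s) = O(\smu \log \smu)$, hence $T = \tilde O(\smu^2)$. Plugging these into the sample-count expression $m = c \cdot T(\log T + 1/\delta)$ from Theorem \ref{thm:baseSampling} yields (up to log factors absorbed into $c$) the two sample complexities stated in the corollary. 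Theorem \ref{thm:baseSampling} then guarantees that the solution $\tilde g$ to the finite-dimensional kernel ridge regression problem \eqref{eq:approxkrr} achieves, with probability $\ge 1-\delta$, the approximate-regression bound
\begin{equation*}
\|\Fmu^* \tilde g - (y+n)\|_p^2 + \lambda \|\tilde g\|_q^2 \le 3 \min_{w \in L_2(q)}\left[\|\Fmu^* w - (y+n)\|_p^2 + \lambda \|w\|_q^2\right].
\end{equation*}
By Claim \ref{clm:krr}, this $\tilde g$ (through $\tilde y = \Fmu^* \tilde g$) is efficiently realized by the closed-form kernel ridge regression predictor on the sampled points $x_1, \ldots, x_m$ with weights $w_j$ and kernel $k_q$, proving the ``polynomial time solvable kernel ridge regression problem'' part of the statement.

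Finally, I would apply Claim \ref{claim:regression_reduction} with $C = 3$ to translate the approximate-regression guarantee into the desired in-sample error bound of Problem \ref{prob:unformal_interp}: this gives $\|y - \tilde y\|_p^2 \le 2C \lambda \|g\|_q^2 + 2(C+1)\|n\|_p^2 = 6 \lambda \|g\|_q^2 + 8 \|n\|_p^2$, which is exactly \eqref{12again2}.

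Since every ingredient is already proved, there is no real ``hard part'' beyond bookkeeping; the only mild subtlety is checking that the integral of the piecewise closed-form bounds $\bar \tau_{s,p}$ from Theorems \ref{thm:gaussian} and \ref{thm:exp} is dominated by the plateau region (which scales like $s^{3/2}$ for Gaussian and $s \log s$ for Laplace), since this is what controls the final sample complexity. Once $T$ is correctly tracked, everything else is a direct invocation of the named results.
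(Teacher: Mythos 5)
Your proposal is correct and follows exactly the route the paper intends: chain Theorem \ref{thm:css2} with the closed-form bounds of Theorems \ref{thm:gaussian}/\ref{thm:exp} to get a samplable upper bound $\ttmu$, track its integral $T = O(\smu^{5/2})$ (resp.\ $\tilde O(\smu^2)$), feed it into Theorem \ref{thm:baseSampling} and Claim \ref{clm:krr}, and finish with Claim \ref{claim:regression_reduction} at $C=3$ to obtain the constants $8$ and $6$. The bookkeeping, including the observation that the normalizing constant $(2+8\smu)$ cancels so one may sample proportionally to $\bar\tau_{s,p}$ itself, all checks out.
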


As discussed in Section \ref{sec:klsbv}, the sample complexity bounds of Corollary \ref{cor:gaussActive2} can be improved to near linear in $\smu$ by simply applying a second sampling step to the final kernel ridge regression problem of Claim \ref{clm:krr}, using the ridge leverage scores of the finite kernel matrix $\bv{K}$ \cite{Sarlos:2006,DrineasMahoneyMuthukrishnan:2006}. This is analogous to the final finite-dimensional random projection discussed in Section \ref{sec:rffApproach}. A full proof requires an extension of Theorem \ref{thm:baseSampling}, which applies to an approximate solution of the finite ridge regression problem. This extension was shown in \cite{AvronKapralovMusco:2019}.

\section{Empirically Estimating the Leverage Scores}
\label{app:plot_info}
The main technical challenge of this paper is to prove rigorous upper bounds on the leverage scores of a function class $\mathcal{F}$, under a distribution $p$. To do so, it is useful to have a way of empirically estimating the \emph{true} leverage function $\tau_{\mathcal{F}, p}$. Such an estimate may not be accurate for all $x$, and it may not have a closed-form. However, a good enough estimate can serve as guidance in proving theoretically sound bounds.

For some function classes (e.g., low-degree polynomials) establishing an empirical estimate for $\tau_{\mathcal{F}, p}(x)$ is straight-forward. The class of sparse Fourier functions, $\mathcal{T}_s$, studied in this paper presents a somewhat greater challenge, but we are able to obtain relatively good estimates, including those used to plot Figure \ref{fig:upper_bound}. In this section we briefly discuss our approach, which might be useful for future work, for example on other distributions beyond Gaussian and Laplace. MATLAB code for reproducing Figure \ref{fig:upper_bound} can be found in \texttt{empirical\_upper\_bounds.m} of the supplemental. 

The key observation is that the function class $\mathcal{T}_k$ is a union of linear subspaces, and for each subspace, it is possible to relatively easily approximate the true leverage scores. In particular, for any \emph{fixed} choice of frequencies $\lambda_1, \ldots, \lambda_k \in \R$, consider the function class:
\begin{align*}
\mathcal{T}_{\lambda_1, \ldots, \lambda_k} = \left \{f: f(x) = \sum_{j=1}^k a_j e^{i \lambda_j x}, a_j \in \C \right \}.
\end{align*}
For any fixed set of frequencies, $\mathcal{T}_{\lambda_1, \ldots, \lambda_k}$ is a subset of $\mathcal{T}_{k}$ and 
\begin{align*}
\mathcal{T}_{k} = \bigcup_{\lambda_1, \ldots, \lambda_k \in \R} \mathcal{T}_{\lambda_1, \ldots, \lambda_k}.
\end{align*}
So, if we let $\tau_{\lambda_1, \ldots, \lambda_k,p}(x)$ denote the leverage score of $\mathcal{T}_{\lambda_1, \ldots, \lambda_k}$, then the leverage scores of $\mathcal{T}_{k}$ equal:
\begin{align}
\label{eq:sup_calc}
\tau_{k,p}(x) = \sup_{\lambda_1, \ldots, \lambda_k \in \R}\tau_{\lambda_1, \ldots, \lambda_k,p}(x).
\end{align}
This equation is useful because, for any fixed $\lambda_1, \ldots, \lambda_2$, the right hand side is actually relatively easy to approximate. In particular, any function $f$ in $\mathcal{T}_{\lambda_1, \ldots, \lambda_k}$ can be written as $\mathcal{A}\alpha$ where $\alpha \in \C^k$ and $\mathcal{A}$ is an infinite dimensional linear operator with $k$ columns, the $j^\text{th}$ being equal to $e^{i\lambda_j x}$. I.e., $\mathcal{T}_{\lambda_1, \ldots, \lambda_k}$ is a $k$ dimensional linear subspace. If we are estimating the leverage scores with respect to  distribution $p$, let $\bar{\mathcal{A}}_p$ be the rescaled linear operator with $j^\text{th}$ column equal to $e^{i\lambda_j x}\sqrt{p}$. We have
\begin{align}
\label{eq:lin_score}
\tau_{\lambda_1, \ldots, \lambda_k,p}(x) = \sup_{\alpha\in \C^k} \frac{|\bar{\mathcal{A}}_p\alpha(x)|^2}{\norm{\bar{\mathcal{A}}_p\alpha}_2^2}.
\end{align}
It is well know that the optimal $\alpha$ for maximizing \eqref{eq:lin_score} can be obtain by setting $\alpha = (\bar{\mathcal{A}}_p^*\bar{\mathcal{A}}_p)^{-1}\bar{\mathcal{A}}_p(x)$ where $\bar{\mathcal{A}}_p^*$ is the adjoint operator of $\bar{\mathcal{A}}_p$ \cite{AvronKapralovMusco:2017,Bach:2017,AvronKapralovMusco:2019}. This leads to a leverage score of $\tau_{\lambda_1, \ldots, \lambda_k,p}(x) = \bar{\mathcal{A}}_p(x)^*(\bar{\mathcal{A}}_p^*\bar{\mathcal{A}}_p)^{-1}\bar{\mathcal{A}}_p(x)$, where $\bar{\mathcal{A}}_p(x)^*$ is the conjugate transpose of the $k$ length vector $\bar{\mathcal{A}}_p(x)$.
While these expression involves infinite dimensional operators indexed by values in $\R$, they can be very well approximated for any $x$ discretizing $\bar{\mathcal{A}}_p$ to a finite number of rows. Specifically, $\bar{\mathcal{A}}_p$ is replaced with a matrix $\bar{{A}}_p$ with rows indexed $t\in \{-R, -R + \Delta,-R + 2\Delta, \ldots, R - \Delta, R\}$, each equal to $\begin{bmatrix}e^{i\lambda_1t}\sqrt{p(t)/\Delta} & \ldots & e^{i\lambda_kt}\sqrt{p(t)/\Delta}\end{bmatrix}$ and we can approximate $\alpha \approx  (\bar{{A}}_p^*\bar{{A}}_p)^{-1}\bar{{A}}_p(x)$ for any given $x$. The leverage score is approximated as $\tau_{\lambda_1, \ldots, \lambda_k,p}(x) \approx \bar{{A}}_p(x)^* (\bar{{A}}_p^*\bar{{A}}_p)^{-1}\bar{{A}}_p(x)$

With these equations in hand, our full approach for estimating $\tau_{k,p}(x)$ for a given $x$ is:
\begin{itemize}
	\item Set $\tau_{k,p}(x) = 0$.
	\item For $iter = 1,\ldots, N$
	\begin{itemize}
	\item Randomly select $k$ frequencies $\lambda_1, \ldots, \lambda_k \in \R$. 
	\item Approximately compute $\tau_{\lambda_1, \ldots, \lambda_k,p}(x)$ via discretization.
	\item Set $\tau_{k,p}(x) = \max(\tau_{k,p}(x),\tau_{\lambda_1, \ldots, \lambda_k,p}(x))$.
	\end{itemize}
\end{itemize}
To ensure this approach obtains a good approximation, it is important that the method for randomly selecting subsets of $k$ frequencies provides good ``coverage'', as different frequency subsets can lead to very different values of $\tau_{\lambda_1, \ldots, \lambda_k,p}(x)$. One point to note is that, as frequencies become far apart, the columns of $\bar{\mathcal{A}}_p$ become close to mutual orthogonal, and the leverage scores converge to the squared $\ell_2$ norms of the rows of $\bar{\mathcal{A}}_p$, which equal $k\cdot p(x)$ for any $x$. This means that the benefit of considering subsets involving distance frequencies is marginal, as such subsets always lead to approximately the same scores. So, we can focus on sampling values of $\lambda_1, \ldots, \lambda_k$ that are relatively close together. 

To generate the plots of Figure \ref{fig:upper_bound}, we do so via independent sampling. At each iteration, a random order of magnitude $h$ was chosen on a geometric grid between $.01$ and $10$ and $\lambda_1, \ldots, \lambda_k$ where chosen as random Gaussians with variance $h$. A large number of iterations (10 million) was run, and the range of $h$ was increased until doing so had no noticeable effect on the estimate for $\tau_{k,p}(x)$. This leaves us reasonable confident that the curves of Figure \ref{fig:upper_bound} accurately reflect the true leverage scores, although we of course can not be sure, as the method is only heuristic. 

\end{document}